\newcommand{\cmark}{\ding{51}}
\newcommand{\xmark}{\ding{55}}
\newcommand{\UU}{\mathrm{U}}
\newcommand{\LU}{\mathrm{LU}}
\newcommand{\SU}{\mathrm{SU}}
\newcommand{\eqnref}[1]{Eq.~(\ref{#1})}
\newcommand{\ket}[1]{|#1\rangle}
\newcommand{\bra}[1]{\langle #1 |}
\newcommand{\GIR}{G_{\mathrm{IR}}}
\newcommand{\GUV}{G_{\mathrm{UV}}}
\newtheorem{thm}{Theorem}
\newtheorem{lemma}{Lemma}
\newcommand{\dblul}[1]{\underline{\underline{#1}}}
\def\slashchar#1{\setbox0=\hbox{$#1$}           
   \dimen0=\wd0                                 
   \setbox1=\hbox{/} \dimen1=\wd1               
   \ifdim\dimen0>\dimen1                        
      \rlap{\hbox to \dimen0{\hfil/\hfil}}      
      #1                                        
   \else                                        
      \rlap{\hbox to \dimen1{\hfil$#1$\hfil}}   
      /                                         
   \fi}
\tikzset{fermisurface/.style={color=orange,very thick}}
\tikzset{fermifill/.style={fill=yellow!30}}
\tikzset{ferminode/.style={color=orange,fill=orange}}
\tikzset{brillouinzone/.style={color=gray!40,ultra thick}}
\newcommand{\fsmarking}[1]{\textcolor{orange!90!black}{\textsf{#1}}}
\newcommand{\ferminoderadius}{0.03}
\tikzset{orientedfs/.style={fermisurface,decoration={
    markings,
    mark=at position 0.5 with {\arrow{>}}},
    postaction=decorate}}
\tikzset{fermifilldouble/.style={fill=yellow!70}}
\begin{document}
\title{
Non-Fermi liquids as ersatz Fermi liquids: general constraints on compressible metals}
\author{Dominic V. Else}
\affiliation{Department of Physics, Massachusetts Institute of Technology, Cambridge, MA 02139, USA}
\author{Ryan Thorngren}
\affiliation{Center for Mathematical Sciences and Applications, Harvard University, Cambridge, MA 02138}
\author{T. Senthil}
\affiliation{Department of Physics, Massachusetts Institute of Technology, Cambridge, MA 02139, USA}
\begin{abstract}
A system with charge conservation and lattice translation symmetry has a well-defined filling $\nu$, which is a real number representing the average charge per unit cell. We show that if $\nu$ is fractional (i.e.\ not an integer), this imposes very strong constraints on the low-energy theory of the system and give a framework to understand such constraints in great generality, vastly generalizing the Luttinger and Lieb-Schultz-Mattis theorems. The most powerful constraint comes about if $\nu$ is continuously tunable (i.e.\ the system is charge-compressible), in which case we show that the low-energy theory must have a very large emergent symmetry group -- larger than any compact Lie group. An example is the Fermi surface of a Fermi liquid, where the charge at every point on the Fermi surface is conserved.
We expect that in many, if not all, cases, even exotic non-Fermi liquids will have the same emergent symmetry group as a Fermi liquid, even though they could have very different dynamics. We call a system with this property an \emph{ersatz Fermi liquid}. We show that ersatz Fermi liquids share a number of properties in common with Fermi liquids, including Luttinger's theorem (which is thus extended to a large class of non-Fermi liquids) and periodic ``quantum oscillations'' in the response to an applied magnetic field. We also establish versions of Luttinger's theorem for the composite Fermi liquid in quantum Hall systems and for spinon Fermi surfaces in Mott insulators. Our work makes connection between filling constraints and the theory of symmetry-protected topological (SPT) phases, in particular through the concept of ``'t Hooft anomalies''.
\end{abstract}
\maketitle
\tableofcontents

\section{Introduction}

In condensed matter physics, systems with prescribed microscopic degrees of freedom (usually electrons) can exhibit varied and exotic emergent behavior at low energies. In general, it is extremely difficult, either analytically or numerically, to predict the nature of the emergent low-energy behavior [as described by an ``IR (infra-red) theory''] from the properties of the microscopic degrees of freedom. For this reason, it is invaluable to have general results that constrain the nature of the IR theory, given microscopic properties of the system.

One such result is the Lieb-Schultz-Mattis theorem \cite{Lieb_1961}, which states (in the formulation of interest to us here \cite{Oshikawa_9610}) that in a system in one spatial dimension with a conserved $\UU(1)$ charge and discrete translation symmetry, if the average charge per unit cell (which we call the ``filling'') is not an integer, then either the IR theory is gapless or else it spontaneously breaks one of the symmetries. This result was later generalized to higher dimensions by Oshikawa \cite{Oshikawa_9911} and Hastings \cite{Hastings_0411}; in higher dimensions there is also the possibility that a system at fractional filling can be gapped but with non-trivial topological order \cite{Misguich_0112}.

A related result is Luttinger's theorem \cite{Luttinger_1960}, which states that if the IR theory is a Fermi liquid, then the volume enclosed by the Fermi surface (modulo the volume of the Brillouin Zone)  is determined entirely by the fractional part of the filling, and in particular is independent of the interaction strength. Originally proven perturbatively by Luttinger, the result was later established through a non-perturbative argument (but still assuming that the IR theory is describable by Fermi liquid theory) by Oshikawa \cite{Oshikawa_0002}. 
A generalized version of Luttinger's theorem even holds in a class of phases - known as Fractionalized Fermi Liquids \cite{Senthil_0209,Senthil_0305}  - which are distinct from conventional  Fermi liquids.  In  such phases a gapless Fermi liquid co-exists with non-trivial topological order. The generalization to Luttinger's theorem can then exactly be determined  from the interplay of the symmetry with the topological order \cite{Senthil_0305,Paramekanti_0406,Bonderson_1601}.

These results and others \cite{Watanabe_1505,Lu_1705,Bultinck_1808} raise the question: what is the most general statement  that one can make about the relation between the microscopic filling and properties of the IR theory? In the present work, we will answer this question, showing that the microscopic filling is always completely determined  by a few properties of the IR theory, namely: (a) its emergent symmetries; (b) the relation between the microscopic translation and $\UU(1)$ symmetries and the emergent symmetries of the IR theory; and (c) a property of the IR theory called its `` 't Hooft anomaly''. Some connections between filling and 't Hooft anomalies have previously been explored in Refs.~\cite{Cheng_1511,Cho_1705,Jian_1705, Metlitski_1707,Yao_1906,Song_1909}.

Our work has important implications for the study of ``non-Fermi liquids'', which are systems that are metallic down to zero temperature but for which the IR physics cannot be described by Fermi liquid theory. Motivated by the results on filling constraints just mentioned, we will introduce the concept of an \emph{ersatz Fermi liquid (EFL)} as a general framework to understand non-Fermi liquids. An EFL is a system which has the same \emph{kinematic} properties in the deep infrared as a Fermi liquid, though it might have very different \emph{dynamical} properties. By ``kinematic'' properties we mean properties that relate to the structure of the Hilbert space that  describes the ground state and the low-energy excitations, as opposed to ``dynamical" properties which relate to the Hamiltonian that acts in this Hilbert space. More precisely, the kinematic properties comprise properties (a), (b) and (c) described in the previous paragraph and in more detail in the next section.
Further, it will will turn out that the kinematic properties we will discuss have a strong topological flavor, so in a suitably vague sense we can say that an EFL is a system that is ``topologically equivalent'' to a Fermi liquid. (However, since the dynamical properties of an EFL can be sharply different to that of a Fermi liquid, it will generally \emph{not} be the case that an EFL can be continuously deformed into a Fermi liquid.)

We will show that many (though by no means all) well-known aspects of Fermi liquid phenomenology are, in fact, purely kinematic in nature, and therefore apply equally well to any EFL. In particular, we will show that any EFL has a Fermi surface that hosts long-lived excitations (though these excitations may not be Landau quasiparticles). In fact, our approach leads to a very general perspective on what it means for a system to have a Fermi surface.
Furthermore, we will find that this Fermi surface must obey Luttinger's theorem (or a generalization thereof, analogously to the ``fractionalized Fermi liquids''
mentioned above); and that, if the Fermi surface geometry is such that a Fermi liquid with that geometry would exhibit quantum oscillations in the dependence of physical properties on magnetic field, then any EFL with the same Fermi surface geometry is also expected to display quantum oscillations with the same periodicity. 
 
Since none of the kinematic properties that define an EFL require that the system be weakly coupled or have a description in terms of quasiparticles, we expect a wide variety of exotic non-Fermi liquid phenomena to be realizable within the class of EFLs. In fact, we will argue based on the general theory of filling constraints that any IR theory which describes a compressible  metal, i.e.\ the filling can be continuously tuned\footnote{A perhaps more standard definition of ``compressible'' would be that the partial derivative of the particle number with respcet to the chemical potential is nonzero, however we will not use this definition in this paper.} must have a very non-trivial emergent symmetry group, larger than any compact Lie group. Such a property is indeed satisfied by EFLs (due to the infinitely many conserved quantities associated with the Fermi surface); whether it could be satisfied in a different way that leads to fundamentally different kinematic properties is an important open problem.

At the very least, however, it is clear that a number of non-Fermi liquid metals can be fruitfully discussed  from the perspective developed in this paper. The simplest are non-Fermi liquid metals that arise when a Fermi surface is coupled to a critical boson that represents a fluctuating order parameter.  Within the standard  framework for such quantum critical points, they will be seen to be EFLs.
A closely related system is a Fermi surface coupled to a gapless $\UU(1)$ gauge field, which arises in the theory of composite Fermi liquid metals in the quantum Hall regime, and in some {\em insulating} quantum spin liquids with Fermi surfaces of emergent electrically neutral quasiparticles. In these states we will find that, unlike a strict EFL, the kinematic properties of the IR theories can differ from those of a Fermi liquid (indeed, unlike a Fermi liquid, these are not states in which the microscopic density can be continuously tuned) though still closely related. The main point will still stand, though, that the kinematic properties of these IR theories provide a powerful framework to thinking about their universal behavior. In particular we will show that in such systems a version of Luttinger's theorem is still satisfied.

As a final application of our results, we will examine the possibility for systems to exhibit disconnected Fermi arcs instead of a closed Fermi surface. We will find that, assuming the translational symmetry is unbroken, such a scenario is inconsistent with the IR theory being an EFL, except when the system exists on the boundary of a gapless bulk (such as a Weyl semimetal). This provides strong evidence for the impossibility of Fermi arcs. In fact, we obtain a stronger constraint: in an EFL, the Fermi surface must enclose a volume in the Brillouin zone; this volume is the generalization to EFLs of the ``electron sea'' in a weakly interacting system.

Before proceeding, let us make a final technical remark. We note that previous work\cite{coleman2005sum,powell2005depletion,huijse2011fermi} generalized the original perturbative proof of Luttinger's theorem to situations where there are a number of fermion and boson fields with $U(1)$ symmetries (either global or gauge). This proof relies on the existence of a Luttinger-Ward functional  of the exact Green's functions of the fermions/bosons from which the self-energies can be extracted by functional derivatives. These works lead to the expectation that Luttinger's theorem will be satisfied by some classes of non-Fermi liquids. However, as the quasiparticle is destroyed in the IR in such problems, perturbation theory (even to all orders) should be used with caution. In a {\em non-perturbative} context, despite the formal existence\cite{potthoff2004non} of a Luttinger-Ward functional, the conventional proof of Luttinger's theorem fails. This is dramatically illustrated by the  fractionalized Fermi liquid phases. It is thus desirable to have a more general non-perturbative argument for  Luttinger's theorem in this context, which we provide in this paper.

\section{Kinematic properties of the IR theory}
\label{sec:structural_properties}
Among all the features of of the IR theory describing the low-energy properties of a system, there are certain ones that we will refer to as ``kinematic'', and these are the subject of this section.

Before describing the first kinematic property, let us note that the microscopic system will have a group of symmetries $\GUV$. We will specifically be interested in systems where this includes a global $\UU(1)$ corresponding to charge conservation, and translation symmetries (possibly on a lattice). 

The first kinematic property of the IR theory is the set of emergent symmetries. We thus  introduce the group $\GIR$ of emergent  symmetries\footnote{For most of the paper we will focus on the most  familiar kind of emergent symmetry;  what  has retroactively been renamed ``0-form symmetry'', i.e.\ symmetries that act everywhere in spacetime at once. Recently, the importance of so-called ``higher-form symmetries'' has started to be understood \cite{Gaiotto_1412}.  Loosely speaking the restriction to the ordinary 0-form symmetries means that there are no emergent ``fractionalized excitations''. We say a point-like excitation is ``fractionalized'' if it cannot be created by a point-like creation operator, but rather is created at the endpoints of an open string by an operator supported on the string. There are similar definitions for higher-dimensional excitations. Fractionalized excitations emerge, for instance, in phases where there is topological order and/or deconfined emergent gauge fields. For such systems, there will be emergent higher-form symmetries that will play an important role, but in the interest of simplifying the exposition we will defer discussion of this point to later, and for the moment we assume that all the emergent symmetries are 0-form symmetries. We will eventually return to this point in Section \ref{sec:higher_form}.} of the IR theory.
$\GIR$ is, in general, not the same as the microscopic symmetry group $\GUV$. Nevertheless, each element $g \in \GUV$ of the microscopic symmetry group gets mapped into an element $\varphi(g) \in \GIR$ of the emergent symmetry group, such that $\varphi(g_1) \varphi(g_2) = \varphi(g_1 g_2)$ for all $g_1, g_2 \in G$. There could, of course, be elements of the emergent symmetry group $\GIR$ that do not correspond to any microscopic symmetry. Also, a microscopic symmetry $g \in \GUV$ could act trivially in the IR theory, in which case $\varphi(g) = 1$ (the identity element in $\GIR$); an example of the latter case would be for systems with a charge gap, in which case the microscopic $\UU(1)$ symmetry acts trivially in the IR. All these statements can be expressed in a compact mathematical way by saying that $\varphi$ defines a group homomorphism from $\GUV$ to $\GIR$ that need not be injective or surjective.

Note that for many of the arguments in this paper, we will specifically want $\GIR$ to represent the \emph{internal} symmetries of the system (that is, the symmetries which do not move space-time points around). In general, the IR theory will also have ``trivial'' emergent space-time symmetries such as continuous translation symmetry, which we do not include in $\GIR$. It is important to note that the microscopic translation symmetry in general will not map into these trivial translation symmetries, but rather into the internal symmetry $\GIR$. One way to think about this is that since there is a spatial rescaling transformation associated with passing to the IR theory, a microscopic translation symmetry in fact has trivial translation action in the IR limit. An alternative perspective is that we imagine that a microscopic translation symmetry acts like a product of an internal symmetry and a ``trivial'' translation symmetry, and we only worry about the internal part in defining the map $\varphi$.

 The other property of the IR theory that will be pertinent is the extent to which the full emergent symmetry group $\GIR$ can be `naturally'  realized in {\em some} realization  (not necessarily the original microscopic lattice model)  of the IR theory.  This property is formalized by the concept of the 't Hooft anomaly. Such an anomaly in $d+1$ space-time dimensions is an obstruction to UV-regularizing the theory on a lattice in $d$ spatial dimensions with the full emergent symmetry group $\GIR$ realized as an ``on-site'' microscopic symmetry \cite{Chen_1106_4752,Else_1409}.  A  powerful alternative but formal characterization of an  't Hooft anomaly \cite{tHooftAnomaly, Kapustin_1403_0617} is that the conservation law corresponding to the $\GIR$ symmetry is broken upon coupling to a background gauge field for the symmetry $\GIR$.
We will consider to the 't Hooft anomaly of the emergent symmetry $\GIR$ to also be a `kinematic' property of the IR theory. 

't Hooft anomalies are also closely related to the theory of symmetry-protected topological (SPT) phases \cite{Gu_0903,Senthil_1405,Wen_1610}.   These are gapped phases of matter in a system with an unbroken global symmetry $G$ with the following property:  the ground state cannot be continuously deformed into a trivial product-state ground state while preserving the symmetry without closing the gap if the symmetry $G$ is preserved, but can be if the symmetry $G$ is lifted.
Examples of SPT phases include the celebrated topological insulators \cite{Hasan_1002} and the Haldane phase of the spin-$1$ antiferromagnetic chain in $d = 1$ \cite{Haldane_1983a,Haldane_1983b,Pollmann_0910}.

The connection between 't Hooft anomalies and SPTs is that a $G$ SPT phase in $d+1$ spatial dimensions must have a non-trivial boundary theory, and the boundary theory carries a 't Hooft anomaly. We can say that the 't Hooft anomaly of the boundary theory is ``canceled'' by inflow from the bulk, in the sense that the conservation laws of the bulk+boundary system are preserved in the presence of background gauge fields. Thus, the classification of 't Hooft anomalies in $d$ spatial dimensions is precisely equivalent to the classification of SPT phases in $d+1$ spatial dimensions\footnote{There can also be 't Hooft anomalies associated with ``invertible'' topological phases which do not require any symmetry to protect them.}.
 Such classification has been explored at great depth, from a variety of perspectives, ranging from physical considerations to very  formal 
 ones\cite{Qi_0802,Pollmann_0909, Chen_1008, Schuch_1010, Chen_1103, Chen_1106_4752, Chen_1106_4772, Gu_1201, levin2012braiding,Lu_1205, Vishwanath_1209, fidkowski2013non,wang2014classification,wang2014interacting,Kapustin_1403_1467,Senthil_1405,metlitski2014interaction, Kapustin_1406, Else_1409, Cheng_1501, KitaevIPAM, witten2016fermion, Freed_1604, Xiong_1701, Kapustin_1701, Wang_1703, Gaiotto_1712, Wang_1811}  The aspects of SPT phases that we will need to use in this paper, however, will be simple enough, at least if we want to understand filling in systems of low spatial dimension $d \leq 2$, that a reader unfamiliar with this literature should still be able to follow our paper.
 
\subsection{Example: Luttinger liquid in 1 spatial dimension}
\label{subsec:luttinger_liquid}
Let us illustrate the above general considerations in the case of a system of spinless electrons in a lattice in one spatial dimension. Thus, the microscopic symmetry group $\GUV$ is comprised of a $\UU(1)$ symmetry generated by the total electron number $\hat{Q}$, and a $\mathbb{Z}$ symmetry generated by the lattice translation operator $\mathbb{T}$.

Let us now assume that the IR theory of the electrons is a Luttinger liquid. Thus, the low-energy physics takes place at the two Fermi points at momenta $k_L$ and $k_R$. Excitations with momentum close to $k_L$ are left-movers, and those with momentum close to $k_R$ are right-movers. At low energies, the numbers $\hat{N}_L$ and $\hat{N}_R$ of left- and right-movers are separately conserved. Therefore, they generate the emergent symmetry group $\GIR = \UU(1)_L \times \UU(1)_R$.

Next, we need to specify how the microscopic symmetry acts on the IR theory, which we can do by expressing the generators of the microscopic symmetry in terms of the generators of the emergent symmetry. Indeed, we have
\begin{align}
\hat{Q} &\sim \hat{N}_L + \hat{N}_R \label{eq:charge_embedding_luttinger} \\
\mathbb{T} &\sim \exp(-i[k_L \hat{N}_L + k_R \hat{N}_R]). \label{eq:translation_embedding_luttinger},
\end{align}
where the tilde ``$\sim$'' refers to an equivalent action on the IR theory.

The emergent symmetry group $\GIR$ in a Luttinger liquid has the  well known axial anomaly. This is an example of the  't Hooft anomaly mentioned in the previous subsection. The signature is that, if we turn on an electric field $E$ for the $\UU(1)$ symmetry generated by $\hat{N}_L + \hat{N}_R$ [which is equivalent to a microscopic electric field, by \eqnref{eq:charge_embedding_luttinger}], then left- and right-moving  charges are no longer separately conserved; instead, if we let $j\indices{_L^\mu}$, $\mu = 0,1$ be the current density for the left-moving charge, and similarly for $j\indices{_R^\mu}$, we have
\begin{align}
\partial_\mu j\indices{_L^\mu} &= -\frac{E}{2\pi}, \nonumber\\
\partial_\mu j\indices{_R^\mu} &= \frac{E}{2\pi}. \label{eq:axial_anomaly}
\end{align}
One can easily understand this equation in the case of a non-interacting Fermi gas. In that case, in the absence of electric field the electrons occupy single-particle states labeled by momentum $k$. But applying an electric field pointing to the right causes an overall flow of electrons in momentum space according to $\dot{k} = E$. This causes a charge excess to accumulate at  $k_R$ and a corresponding charge deficit at $k_L$.

As we mentioned above, there is always a bulk-boundary correspondence that relates a 't Hooft anomaly for a symmetry $G$ in $d$ spatial dimensions to an SPT in $d+1$ spatial dimensions whose boundary theory carries the 't Hooft anomaly. In this case, we have $d=1$ and $G = \UU(1) \times \UU(1)$. The corresponding SPT phase in $d=2$ is realized by a ``quantum spin Hall'' state corresponding to putting spin-up electrons in a quantum Hall state with quantized Hall conductance $\sigma_{xy} = 1$ and spin-down electrons in a quantum Hall state with $\sigma_{xy} = -1$. Here the $\UU(1) \times \UU(1)$ symmetries correspond to the separate conservation of spin-up and spin-down electrons. If we now consider a system with boundary, and apply an electric field parallel to the boundary, due to the Hall conductance this will generate a current of spin-up electrons incident onto the boundary, and a current of spin-down electrons with opposite sign. This precisely accounts for for the charge non-conservation in the boundary theory due to the 't Hooft anomaly.

Finally, let us remark that a useful way to think about SPT phases (and hence 't Hooft anomalies) is in terms of topological terms describing the response of the SPT phase to background gauge fields. For example, consider the quantum spin Hall state described above. We can theoretically couple to background gauge fields of the two $\UU(1)$ symmetries, which in the one-dimensional Luttinger liquid were interpreted as the conservation of left- and right-movers; hence, we denote the gauge fields by $A^L$ and $A^R$. The response of the quantum spin Hall state is then described by a Chern-Simons action on $(2+1)$-D spacetime:
\begin{equation}
\label{eq:mutual_chern_simons}
S[A] = \frac{1}{4\pi} \int (A^R \wedge dA^R - A^L \wedge dA^L).
\end{equation}

\section{Filling constraints}
\label{sec:filling_constraints}
Now we will turn to the question of how to understand constraints on the IR theory resulting from the microscopic filling, i.e. the average charge per unit cell, which is a real number $\nu$. Only the fractional part of $\nu$ (i.e.\ $\nu$ mod 1), should be expected to be detectible in the IR theory, because an atomic insulator (whose IR theory is completely trivial) can have any integer filling. The fundamental observation we will make is that the microscopic filling is completely fixed by the kinematic properties of the IR theory, i.e.\ the emergent symmetry group and the 't Hooft anomaly, along with the mapping from the microscopic symmetry group into the emergent symmetry group.

This is a rare example of a precise relation between a microscopic quantity (the filling) and properties of the IR theory. Such relations are extremely useful, given that it is usually very difficult to determine the IR theory from the microscopic Hamiltonian, either analytically or numerically.

The reason why such a UV-IR correspondence is possible in this case is because the UV and IR are linked through the homomorphism $\varphi$ that implements the microscopic symmetry inside the emergent IR symmetry. Naively, we can imagine arguing as follows. The lattice filling is defined in systems that have at least a microscopic $\mathrm{U}(1)$ symmetry (charge conservation), and a lattice translation symmetry $\mathbb{Z}^d$ (where $d$ is the space dimension). So we set the microscopic symmetry $\GUV = \mathrm{U}(1) \times \mathbb{Z}^d$. This  $\GUV$ is embedded into the symmetries of the IR theory, and so we can talk about the 't Hooft anomaly of the IR theory thought of as a $\GUV$-symmetric theory (which can be computed from the $\GIR$ 't Hooft anomaly in light of the homomorphism $\varphi : \GUV \to \GIR$). Then we invoke ``UV-IR anomaly matching'' to relate the filling $\nu$ to this $\GUV$ 't Hooft anomaly.

Unfortunately, this argument is not \emph{quite} right. In fact, fractional filling cannot correspond to a nontrivial 't Hooft anomaly for $\GUV$ in the usual sense, because there are no candidate SPTs with $\GUV = \UU(1) \times \mathbb{Z}^d$ symmetry which could constitute the ``bulk'', for which a system with fractional filling could be the boundary, as we explain in Appendix \ref{appendix:fillingnotspt}. Indeed, a system with fractional filling seems perfectly well-defined on its own without any need for a higher-dimensional bulk.

Instead, we will give the correct version of the argument below, in various spatial dimensions. The careful reader might notice that they still appear reminiscent of ``anomaly matching'', albeit for a 't Hooft anomaly that is trivial with respect to the microscopic symmetries. How exactly one could make such a notion precise, we leave as an open question. Nevertheless, the arguments are self-contained and  hold without any need for such an interpretation.

\subsection{One spatial dimension}
\label{subsec:1d_filling_constraints}
In one spatial dimension, we imagine putting the system on a circle and then very slowly threading $2\pi$ flux of the microscopic $\mathrm{U}(1)$ symmetry through the circle \cite{Laughlin_1981,Oshikawa_9911,Oshikawa_0002}, By a standard argument, this transforms the ground state into a low-lying excited state with a different momentum; if we label states by their eigenvalues of the translation operator $\mathbb{T}$, i.e. $\mathbb{T} \ket{\psi} = e^{-i p} \ket{\psi}$, then the momentum gets shifted by $e^{-ip} \to e^{-i(p+2\pi\nu)}$.

\newcommand{\hQIR}{\hat{Q}_{\mathrm{IR}}}
Meanwhile, since the microscopic $\mathrm{U}(1)$ symmetry, generated by $\hat{Q}$, corresponds to a $\mathrm{U}(1)$ symmetry of the IR theory, whose generator we call $\hQIR$, we can also imagine performing the $2\pi$ flux insertion in the IR theory. Now consider the IR symmetry $\tau = \varphi(\mathbb{T}) \in \GIR$ corresponding to microscopic translation $\mathbb{T}$. In the IR theory, the ground state can get transformed into a low-lying excited state with a different eigenvalue of $\tau$; that is, the eigenvalue of $\tau$ gets shifted according to
$\exp(-i\theta) \to \exp\left(-i[\theta + \alpha]\right)$ for some $
\alpha$ [defined mod $2\pi$]. One can argue that $\alpha$ depends \emph{only} on the 't Hooft anomaly of $\GIR$ and on the choice of $\tau$ and $\hQIR$; see for instance the example below. We write $\alpha = \alpha(\hQIR|\tau)$ (the dependence on the 't Hooft anomaly is kept implicit).

Now, the key point is that the processes described in the two paragraphs above are in fact the \emph{same} process, just described in two different ways. Therefore, we must equate
\begin{equation}
\label{eq:1d_equating}
\nu = \frac{\alpha(\hQIR|\tau)}{2\pi} \quad (\mathrm{mod}\,1).
\end{equation}

\subsubsection{Example: Luttinger liquid in one spatial dimension}
We can apply the general framework described above to the particular example of a Luttinger liquid in one spatial dimension, as discussed in Section \ref{subsec:luttinger_liquid}. In this case $\hQIR$ and $\tau$ are defined by the right-hand sides of Eqs.~(\ref{eq:charge_embedding_luttinger}) and (\ref{eq:translation_embedding_luttinger}).

Threading the flux of $\hat{Q}_{\mathrm{IR}}$ generates an electric field of $\hat{Q}_{\mathrm{IR}}$ by Faraday's law, and then, from the anomaly equation \eqnref{eq:axial_anomaly} we see that the $2\pi$ flux threading creates $-1$ charge of $\mathrm{U}(1)_L$ and $+1$ charge of $\mathrm{U}(1)_R$. Therefore, in light of \eqnref{eq:translation_embedding_luttinger}, the momentum transforms according to
\begin{equation}
e^{-ip} \to e^{-ip} e^{-i (k_R - k_L)}.
\end{equation}
Then, from \eqnref{eq:1d_equating} we find that
\begin{equation}
\nu = \frac{1}{2\pi}(k_R - k_L) \quad (\mathrm{mod} \, 1),
\end{equation}
which is nothing other than Luttinger's theorem for a Luttinger liquid in one spatial dimension.

\subsection{Two spatial dimensions}
\label{subsec:2d_filling_constraint}
Here the idea is to consider a ``$2\pi$ flux'' of the microscopic $\mathrm{U}(1)$ symmetry; that is, a very weak background magnetic field spread out over some very large region, such that the total flux is $2\pi$.
We can then consider how such a $2\pi$ flux transforms under translation symmetry. We claim that, in the  presence of fractional filling $\nu$, such $2\pi$ fluxes exhibit translational symmetry fractionalization; that is, acting on a $2\pi$ flux, the lattice $x$ and $y$ translations $\mathbb{T}_x$ and $\mathbb{T}_y$ obey in the limit as the $2\pi$ flux becomes infinitely spread out spatially) the magnetic algebra
\begin{equation}
\label{eq:pretrans_projective}
\mathbb{T}_x \mathbb{T}_y \mathbb{T}_x^{-1} \mathbb{T}_y^{-1} = e^{2\pi i \nu}.
\end{equation}
Heuristically, this is clear because a $2\pi$ flux sees a background charge density as an effective magnetic field. We give some more careful arguments in Appendix \ref{appendix:monopole_translations}.  This result is also closely connected to the translational symmetry fractionalization of a monopole in an insulator in 3 spatial dimensions in the presence of polarization, as discussed in Ref.~\cite{Song_1909}.

Meanwhile, in the IR theory we can consider $2\pi$ flux configurations of the IR symmetry generated by $\hQIR$ that corresponds to the microscopic $\UU(1)$ symmetry. The homomorphism $\varphi$ maps $\mathbb{T}_x$ and $\mathbb{T}_y$ into some elements $\tau_x, \tau_y \in \GIR$. In the presence of a 't Hooft anomaly, such $2\pi$ fluxes indeed can carry a projective representation, i.e.
\begin{equation}
\label{eq:trans_projective}
V(\tau_x) V(\tau_y) V(\tau_x)^{-1} V(\tau_y)^{-1} = e^{i \alpha},
\end{equation}
where $V(g)$ denotes the action of a group element $g \in \GIR$ on the $2\pi$ flux, and $\alpha = \alpha(\hQIR|\tau_x, \tau_y)$ [defined mod $2\pi$] depends on $\hQIR$, $\tau_x$ and $\tau_y$, and on the 't Hooft anomaly of $\GIR$. Therefore, we must identify
\begin{equation}
\label{eq:filling_constraint_2d}
\nu = \frac{\alpha(\hQIR|\tau_x, \tau_y)}{2\pi} \quad (\mathrm{mod} \, 1).
\end{equation}
The natural example to consider to illustrate this constraint will be a Fermi liquid. However, by contrast to the one-dimensional case, the 't Hooft anomaly of a Fermi liquid in two spatial dimensions has not previously been discussed. This is the subject of Section \ref{sec:structural_fermi_liquid}.

\subsection{General space dimension}
\label{subsec:general_dimension}

The formulation of filling constraints discussed above can be generalized to arbitrary space dimension $d$. The idea is to generalize the functions $\alpha(\hQIR| \tau)$ (in $d=1$) and $\alpha(\hQIR| \tau_x, \tau_y)$ (in $d=2$) to a function $\alpha(\hQIR|\tau_1, \cdots, \tau_d)$ that determines the filling in general spatial dimension $d$.
A convenient way to express this function is in terms of the ``topological action'' that describes the SPT phase in $d+1$ spatial dimensions which cancels the $\GIR$ 't Hooft anomaly in $d$ spatial dimensions by inflow on the boundary. We give the details in Appendix \ref{appendix:topological_action}.

\section{Consequences of the filling constraints for compressible states}
\label{sec:generic}
The systematic theory of filling constraints described in the previous section has a very important corollary. We want to consider IR theories which can exist at \emph{generic} filling; that is, they are not pinned to a particular filling but instead the filling can be continuously tuned. In other words, the IR theory represents a ``compressible'' state.
 What we will show is that in this case, for spatial dimensions $\geq 2$, the emergent symmetry $\GIR$ \emph{cannot} be a compact Lie group. In one dimension, recall that the Luttinger liquid example discussed in Section \ref{subsec:luttinger_liquid} achieves generic filling with only a compact Lie group emergent symmetry $\GIR = \UU(1) \times \UU(1)$.
  Note that compact Lie groups include finite groups as special cases, since we do not require the Lie group to be connected; also, in this paper when we refer to Lie groups, we will always assume they are finite-dimensional.
  
We wish to emphasize here that as we stated previously, in this paper when we refer to the emergent symmetry group $G_{\mathrm{IR}}$, we are referring specifically to the \emph{internal} symmetries of the IR theory. While translation symmetry is never a compact group, we do not know of any examples of theories that can sensibly arise in condensed matter systems, for which the \emph{internal} symmetry is a non-compact Lie group, and this may in fact be impossible.
Therefore, our results suggest that the emergent symmetry for IR theories that exist at generic filling must be  an infinite-dimensional group\footnote{We remind the reader again of the restriction that for the moment we assume that all the emergent symmetries are 0-form symmetries, which we re-examine in Section \ref{sec:higher_form}. In fact, theorem 1 continues to hold in the presence of \emph{finite} higher form symmetries.}. Indeed, this is the case for Fermi liquids, as we describe in the next section.

Our main result is the following:
\begin{thm}
\label{thm:mainthm}
Suppose the emergent symmetry group of the IR theory is some compact Lie group $\GIR$. Then for any spatial dimension $d \geq 2$, the filling $\nu$ is constrained to be an integer multiple of $1/N_{\GIR}$, for some finite integer $N_{\GIR}$ that depends only on the group $\GIR$ and the dimension.
\end{thm}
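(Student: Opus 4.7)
The strategy is to apply the general filling constraint from Section~\ref{subsec:general_dimension}, which gives $\nu \pmod 1 = \alpha(\hQIR|\tau_1,\ldots,\tau_d)/(2\pi)$, and to show that when $\GIR$ is compact Lie the right hand side takes values in a finite subgroup of $\mathbb{R}/2\pi\mathbb{Z}$ whose order is bounded by some $N_{\GIR}$ depending only on $\GIR$ and $d$. A useful first reduction is to observe that $\hQIR$ generates a $\UU(1)$ subgroup of $\GIR$, and the translation images $\tau_i$ must commute with this $\UU(1)$ and with each other, so all these elements lie inside a common closed abelian subgroup $H \subset \GIR$. By compactness, $H \cong T^n \times F$ for some torus and some finite group $F$. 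Since $\alpha$ only depends on the restriction of the 't Hooft anomaly from $\GIR$ to $H$, it is enough to establish the bound for groups of this form.

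The key structural input is that for a compact Lie group the classification of $(d+1)$-dimensional SPT phases, equivalently of $d$-dimensional 't Hooft anomalies, is a finitely generated abelian group $\mathcal{A}$, since the group cohomology (or the relevant cobordism groups) of $B\GIR$ are finitely generated in each degree. Writing $\mathcal{A} \cong \mathbb{Z}^r \oplus \bigoplus_i \mathbb{Z}/n_i$, the torsion summands contribute to $\alpha/(2\pi) \pmod 1$ only phases of order dividing $\mathrm{lcm}(n_i)$, yielding a denominator bound that depends only on $\GIR$ and $d$.

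The main obstacle is the free $\mathbb{Z}^r$ summand: its generators correspond to integer-valued Chern--Weil characteristic classes (Chern--Simons-type response actions), and the evaluation of such a generator in the $\alpha$ formula of Appendix~\ref{appendix:topological_action} depends continuously on the embedding data $\hQIR, \tau_1, \ldots, \tau_d$, so a priori it could take arbitrary values in $\UU(1)$. The point I would need to establish is that each free generator in fact contributes $0 \pmod{2\pi}$ to $\alpha$: the configuration used to compute $\alpha$ involves one quantized unit of $\hQIR$ flux paired against an integer-level characteristic class, and I would verify by direct computation on a mapping-torus-type $(d+2)$-manifold threaded by a unit of $\hQIR$ flux and equipped with holonomies $\tau_i$ that the result is always an integer multiple of $2\pi$; the essential point is that the pullback of the free Chern--Weil generator along $B\UU(1) \hookrightarrow BH$ becomes a power of the first Chern class of the $\hQIR$-bundle, which integrates to an integer against the unit-flux configuration.

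Combining the two cases, $\alpha/(2\pi) \pmod 1$ is confined to $\tfrac{1}{N_{\GIR}} \mathbb{Z}/\mathbb{Z}$ with $N_{\GIR}$ dividing a quantity determined purely by the torsion part of the anomaly classification for $\GIR$ in $(d+1)$ spatial dimensions, establishing the theorem. Compactness of $\GIR$ plays a dual role in the argument: it gives finite generation of $\mathcal{A}$, and, more subtly, it forces the Chern--Weil classes produced by the free generators to take integer values when paired against the quantized $\hQIR$ flux. This second point is the hardest step, and it is also where the restriction $d \geq 2$ would enter, since in one spatial dimension the analogous $\mathbb{Z}$-valued anomaly (the axial anomaly of the Luttinger liquid) really does contribute continuously to $\alpha$.
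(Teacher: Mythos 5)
Your overall architecture closely parallels the paper's cobordism-based treatment (Appendix \ref{appendix:chern_simons}): reduce to a compact abelian subgroup containing $\hQIR$ and the $\tau_i$, split the anomaly classification into torsion plus free parts, observe that torsion classes only produce $\alpha \in \frac{2\pi}{n}\mathbb{Z}$, and then show the free (Chern--Simons) generators contribute trivially for $d\geq 2$. The torsion half of your argument is fine. The gap is in the step you yourself flag as hardest. Your justification that each free generator contributes $0 \bmod 2\pi$ rests on pulling the Chern--Weil class back along $B\UU(1)\hookrightarrow BH$, so that it "becomes a power of the first Chern class of the $\hQIR$-bundle.'' But that restriction discards exactly the dangerous data: the background computing $\alpha$ is not a pure $\hQIR$-bundle; it also carries flat holonomies $\tau_1,\dots,\tau_d$ valued in the identity component of $H\cong T^n\times F$, and the terms that threaten quantization are the \emph{mixed} monomials $c_1^{(\hQIR)}\cup c_1^{(j_1)}\cup\cdots$ pairing the unit flux with those continuous holonomies. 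In $d=1$ such a mixed term is precisely the axial anomaly and does vary continuously, so any correct argument must explain why the mixed terms die for $d\geq 2$ --- and "a power of $c_1$ of the $\hQIR$-bundle integrates to an integer'' does not do that. The paper's actual mechanism is a dimension count: extending the background over $D^2\times T^{d-1}\times T^2$, the curvature is supported only on the four directions of $D^2$ and $T^2$, so the degree-$(d+3)$ Chern--Weil form, a product of $(d+3)/2\geq 3$ curvature two-forms, vanishes identically once $d+3>4$ (and for even $d$ there are no such forms at all). Some argument of this kind is needed to close your proof.

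It is worth noting that the paper also gives a second, more elementary proof (Appendix \ref{appendix:proof_of_mainthm}) that sidesteps the free-versus-torsion decomposition entirely: $\alpha$ is multilinear on abelian subgroups (Lemma \ref{lemma:linearity}), so on the identity component $\mathcal{A}_0\cong\UU(1)^r$ it takes the form $\sum_j \theta_j m_j(\cdot)$ with $m_j$ continuous and integer-valued, hence locally constant and forced to vanish whenever a second argument also lies in $\mathcal{A}_0$; raising \emph{two} of the translation holonomies to the power $C_{\mathcal{A}}$ (the number of components) then gives $(C_{\mathcal{A}})^2\,\alpha = 0 \bmod 2\pi$. That argument makes transparent why two commuting translations, hence $d\geq 2$, are essential, and you may find it the cleanest way to repair your free-part step.
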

The proof is straightforward based on the framework of filling constraints discussed in the previous section. We give the details, and reveal what determines $N_{\GIR}$, in Appendix \ref{appendix:proof_of_mainthm}, and also give a more formal point of view in Appendix \ref{appendix:chern_simons}. 

Let us mention a simple way to understand this result for $d=2$. As mentioned in the introduction, the 't Hooft anomaly of a theory in $d$ spatial dimensions implies that the theory can be realized as the boundary of an SPT phase protected by the same symmetry $\GIR$ in $d+1$-space dimensions. Now suppose that the SPT  is such that  $n$ copies of it is trivial, with $n$ a finite positive integer. For  SPTs protected by  a compact Lie group $\GIR$ in space dimension $d+1  = 3$, this is known to be always true.  For the $d$-dimensional boundary theory of  interest, this means that for $n$ copies, there is no 't Hooft anomaly. If now we consider $n$  copies of the microscopic lattice system, we see that it has a total filling $n\nu$. Since at this filling the IR theory has no anomaly, it follows that $\nu n = p$ with $p$ an integer, which is essentially the claim of Theorem \ref{thm:mainthm}. In $d = 3$, this simple argument does not work because there are SPTs in $d + 1 = 4$ space dimensions such that there is no finite  $n$ for which $n$ copies become trivial. (This is also why the argument does not work for $d=1$). However, Theorem \ref{thm:mainthm} still holds for any $d \geq 2$ as we show in Appendices \ref{appendix:proof_of_mainthm} and \ref{appendix:chern_simons}.

\section{The kinematic properties of Fermi liquids}
\label{sec:structural_fermi_liquid}
In this section, we will return to a familiar IR theory: a Fermi liquid in two spatial dimensions (we will briefly discuss higher dimensions as well) and analyze its kinematic properties, in the language introduced previously. We will see how Fermi liquids, by virtue of having an emergent symmetry group that is ``larger'' than a compact Lie group, are able to evade the theorem of the previous section and exist at generic filling.

\subsection{Emergent symmetry group}
\label{subsec:loop_group}
The first step is to identify the emergent symmetry group $\GIR$. We invoke the following well-known property of Fermi liquids: non-forward scattering terms are irrelevant in the RG sense at low energies, so the quasiparticle number at \emph{each point} on the Fermi surface is separately conserved at low energies. Thus, Fermi liquids have a very large emergent symmetry group \cite{Haldane_0505}. Roughly, we can say that $\GIR$ = ``$\mathrm{U}(1)^{\infty}$''. However, let us be a bit more precise about how one approaches the ``$\infty$''.

We parameterize the Fermi surface by a continuous parameter $\theta$ (we do not require that $\theta$ literally represents a geometrical  angle), which is periodic, i.e.\ it lives on a circle. Imagine that we place an IR cutoff on the system (that is, place it in finite volume), which since the Fermi surface exists in momentum space, corresponds to a short-distance cutoff on $\theta$, i.e.\ $\theta$ now takes discrete values. To each such $\theta$ value, we associate a $\mathrm{U}(1)$ emergent symmetry generated by an integer-valued operator $\hat{N}_\theta$. Hence, a general symmetry operator will take the form
\begin{equation}
\exp\left(-i \sum_\theta f_\theta \hat{N}_\theta \right)
\end{equation}
where we identify $f_\theta \sim f_\theta + 2\pi$.

Now we send the IR cutoff (the spatial volume) to infinity, which corresponds to sending the spacing between discrete $\theta$ values to zero. What we want is to consider symmetry operators that do not depend too sensitively on the precise way in which the short-distance cutoff in $\theta$ gets sent to zero. In order to achieve this, we require that, in this continuum limit, the $f_\theta$ parameters become \emph{smooth} functions $f(\theta)$. Therefore, in the limit, the emergent symmetries are in one-to-one correspondence with smooth functions from the circle into $\UU(1)$. The group of all such functions is called the \emph{loop group} \cite{LoopGroupsBook} of $\UU(1)$ and we denote it by $\LU(1)$. Hence we conclude that $\GIR = \LU(1)$.

We emphasize that the group structure of $\LU(1)$ differs somewhat from naive conceptions of what ``$\UU(1)^{\infty}$'' would mean; in particular, $\LU(1)$ has only \emph{one} $\UU(1)$ subgroup, whose elements correspond to taking $f(\theta)$ to be a constant function. Physically, this is because only the \emph{total} charge on the Fermi surface is quantized to be an integer; there is no well-defined concept of the (quantized) charge at a single point on the Fermi surface, only of the linear charge density with respect to $\theta$. Accordingly,
we can represent the loop group $\LU(1)$ formally by introducing a density operator $\hat{n}(\theta)$ such that the number of quasiparticles between $\theta$ and $\theta + d\theta$ is measured by $\hat{n}(\theta) d\theta$. Technically, $\hat{n}(\theta)$ is not really an operator in itself, but an operator-valued distribution which should be integrated against a test function. The elements of the emergent symmetry group can be expressed as
\begin{equation}
\label{eq:element}
\exp\left(-i\int f(\theta) \hat{n}(\theta) \right),
\end{equation}
where $f(\theta)$ is any smooth function of $\theta$. Note that, because we identify $f(\theta) \sim f(\theta) + 2\pi$, we are allowed to consider functions $f$ with non-trivial winding number around the circle, such that $\frac{1}{2\pi} \int \partial_\theta f(\theta) d\theta$ is any integer.

\subsection{The homomorphism $\GUV \to \GIR$}
Now it should be clear how to embed the microscopic symmetries into the emergent symmetry group $\GIR = \mathrm{LU}(1)$. Indeed, if $\hat{Q}$ is the generator of the microscopic $\mathrm{U}(1)$ symmetry, we have
\begin{equation}
\label{eq:charge_embedding}
\hat{Q} \sim q \int \hat{n}(\theta) d\theta,
\end{equation}
where the integer $q$ is the charge of a Landau quasiparticle. Of course, for a Fermi liquid of electrons, $q=1$, but in principle one can imagine Fermi liquid-like states where the quasiparticles carry a different charge. (For example, the quasiparticles could be bound states of an odd number of electrons).

Meanwhile, if $k_x(\theta)$ and $k_y(\theta)$ represent the components of the lattice momentum of the point on the Fermi surface parameterized by $\theta$, then we have
\begin{equation}
\label{eq:translation_embedding}
\mathbb{T}_{\alpha} \sim \exp\left(-i\int k_\alpha(\theta) \hat{n}(
\theta) d\theta \right),
\end{equation}
where $\alpha = x,y$, and $\mathbb{T}_x$, $\mathbb{T}_y$ are the lattice translation operators.

\subsection{The 't Hooft anomaly}
\label{subsec:fermi_thooft}
Now we are in a position to discuss the 't Hooft anomaly for the emergent $\LU(1)$ symmetry. As usual, the 't Hooft anomaly can be understood by coupling to a background gauge field for the symmetry. But first, we must ask, what is a gauge field for an $\mathrm{LU}(1)$ symmetry? Since, roughly speaking, an $\mathrm{LU}(1)$ symmetry means there is a $\mathrm{U}(1)$ symmetry for each point on the circle, we can naively say that a gauge field for an $\mathrm{LU}(1)$ symmetry should be a space-time vector field $A_\mu(\theta)$ for each point $\theta$ on the circle, with gauge transformations parameterized by a scalar field $\lambda(\theta)$, and gauge transformation
\begin{equation}
\label{eq:Aspatialgauge}
A_\mu(\theta) \to A_\mu(\theta) + \partial_\mu \lambda(\theta)
\end{equation}
(where the derivative is respect to spacetime, not $\theta$).
Moreover, we require that $A_\mu(\theta)$ and $\lambda(\theta)$ be smooth functions of $\theta$.

However, there is in fact an additional ingredient that is required, that is unique to loop groups. To see this,
we can appeal to the quasiparticle picture of the Fermi liquid. The spatial components $A_i(\theta)$ describe the Aharanov-Bohm phase (which can be interpreted as a Berry phase) picked up as a spatially localized quasiparticle, localized near position $\theta$ on the Fermi surface, is transported in space. But we can also keep a quasiparticle fixed in real space, and transport it along the Fermi surface in momentum space. Therefore, the gauge field needs an additional component $A_\theta$ to describe the Berry phase associated with such a process.
We emphasize, though, that even going beyond Fermi liquids, $A_\theta$ represents an intrinsic part of what it means to couple to a gauge field for an $\LU(1)$ symmetry, regardless of any quasiparticle picture. We will return to this point in Section \ref{sec:quasi_fermi}.

The $A_\theta$ component of the gauge field transforms under gauge transformations as $A_\theta \to A_\theta + \partial_\theta \lambda$. Therefore, if we now combine the vector field $A_\mu$ with $A_\theta$, we obtain a vector field $A$ on the $D + 1$-dimensional manifold $M \times S^1$, where $M$ is the space-time manifold, $D = \mathrm{dim} M$ is the space-time dimension, and $S^1$ is the circle on which the $\theta$ variable lives. 
Moreover, $A$ transforms under gauge transformations precisely as would a $\UU(1)$ gauge field on $M \times S^1$,  Hence we arrive at our conclusion: \emph{An $\LU(1)$ gauge field on $M$ is equivalent to a $\UU(1)$ gauge field on $M \times S^1$.}
From this point on, we will denote this combined vector field as $A_\mu$, taking the convention that indices such as $\mu$ vary both over space-time directions and the $\theta$ direction.

Now we can discuss 't Hooft anomalies. One way to characterize an 't Hooft anomaly in two spatial dimensions is in terms of the topological term describing the response of the corresponding SPT in three spatial dimensions to background gauge fields. [As usual, the relation is that on a (3+1)-D spacetime $M_+$ whose boundary is a (2+1)-D manifold $M$, the SPT on $M_+$ gives rise to the 't Hooft anomaly on the space-time $M$ by anomaly inflow.]
But from the above discussion, a topological term for $\LU(1)$ gauge fields on a (3+1)-D spacetime $M_{+}$ is equivalent to a topological term for $\UU(1)$ gauge fields on the (3+1+1)-D manifold $M_{+} \times S^1$. 
 Hence, we conclude that the appropriate topological term is the 5D Chern-Simons action
\begin{equation}
\label{eq:5d_chern_simons}
S[A] = \frac{m}{24\pi^2} \int_{M_{+} \times S^1} A \wedge dA \wedge dA,
\end{equation}
where the level $m$ is quantized to be an integer\footnote{To see the quantization, observe that the total charge carried by the SPT, under the $\UU(1)$ subgroup of $\LU(1)$ corresponding to setting the function $f(\theta)$ in \eqnref{eq:element} to be independent of $\theta$, is given by the integral (over a fixed time-slice)
\[ 
\int j^{0} = \int \frac{\delta S}{\delta A_0} = \frac{m}{8\pi^2} \int dA \wedge dA = m C[A],
\]
where $C[A]$ is the second Chern number and is quantized to be an integer. Hence, $m$ must be an integer in order for the total charge of the SPT to always be an integer.}.
 Below, we will show that such a topological term, with $m= \pm 1$, indeed reproduces many known properties of spinless Fermi liquids. (Note that the sign of $m$ is only defined relative to a choice of orientation for the Fermi surface, because redefining $\theta \to -\theta$ sends $m \to -m$).

From the topological term \eqnref{eq:5d_chern_simons} we can determine the anomaly equation on the boundary by computing the current $j = \frac{\delta{S}}{\delta{A}}$ and then considering the current incident onto the boundary. We find that, on the boundary, the continuity equation is violated according to
\begin{equation}
\label{eq:theonetrueanomalyeq}
\partial_{\mu} j^{\mu} = \frac{m}{8\pi^2} \epsilon^{\lambda \sigma \tau \kappa} (\partial_\lambda A_\sigma) (\partial_\tau A_\kappa).
\end{equation}
 Here the current $j$ depends both on space-time coordinates and on $\theta$. Its space-time components describe the space-time current of the charge at position $\theta$ on the Fermi surface, i.e.\ of the symmetry generated by $\hat{n}(\theta)$. However, recalling that indices are supposed to vary over the $\theta$ direction as well as space-time directions, we have to introduce the  component $j^{\theta} = \frac{\delta S}{\delta A_\theta}$, which describes flow of charge \emph{along} the Fermi surface. An example of a case where $j^{\theta} \neq 0$ is a Fermi liquid in a magnetic field, described in the next subsection.

Finally, let us mention that there is an alternative picture to understand inflow of the 't Hooft anomaly that is sometimes  helpful. Instead of considering a system with $\LU(1)$ symmetry in a (3+1)-D space-time $M_+$ with boundary, we can consider a system with $\mathrm{L}^{\mathbb{T}^2} \UU(1)$ symmetry in a (2+1)-D space-time $M$ without boundary. Here $\mathrm{L}^{\mathbb{T}^2} \UU(1)$ is the group of smooth maps from the Brillouin zone (thought of as a torus $\mathbb{T}^2$) into $\UU(1)$. In other words, we imagine that the charge is conserved not just at each point on the Fermi surface, but also at \emph{every} $k$-point in the whole Brillouin zone. This is the case, for example, in a non-interacting Fermi gas.  In an interacting Fermi liquid, the physical Hamiltonian presumably has nonzero scattering rate for quasiparticles in the interior of the Fermi surface, but we can, \emph{theoretically}, imagine an extension of the quasiparticle Hamiltonian from the vicinity of the Fermi surface to its interior in a manner that preserves conservation of quasiparticle number at every $k$-point. This is a familiar construction in Fermi liquid theory. (The point is that the physics on the Fermi surface will ultimately not depend on the precise form of the Hamiltonian in the interior.)

Then by similar arguments to before, a gauge field for the $\mathrm{L}^{\mathbb{T}^2} \UU(1)$ symmetry is equivalent to a $\UU(1)$ gauge field on $M \times \mathbb{T}^2$. Since this is also a 5-dimensional manifold, we can write a similar Chern-Simons term to \eqnref{eq:5d_chern_simons}.
Specifically, we write
\begin{equation}
\label{eq:5d_chern_simons_alt}
S[A] = \frac{m}{24\pi^2} \int_{M \times \mathcal{D}} A \wedge dA \wedge dA,
\end{equation}
where $\mathcal{D} \subseteq \mathbb{T}^2$ is the volume ``occupied by electrons'', that is, the volume enclosed by the Fermi surface\footnote{By a particle-hole transformation which cannot modify the physics, the ``occupied volume'' could equivalently be taken to be the complement $\mathcal{D}^c$ of $\mathcal{D}$ in the Brillouin zone. Observe, however, that if we replaced $\mathcal{D}$ with $\mathcal{D}^c$ in \eqnref{eq:5d_chern_simons_alt}, the only change in the anomaly equation \eqnref{eq:theonetrueanomalyeq} on the Fermi surface would be an additional minus sign, which as we mentioned earlier just corresponds to a choice of orientation of the Fermi surface.}.
 This gives rise to the same anomaly equation \eqnref{eq:theonetrueanomalyeq} on the boundary $\partial(M \times \mathcal{D}) = M \times \partial \mathcal{D}$, where $\partial \mathcal{D}$ is the Fermi surface.
 Note that in this interpretation, the components $A_0, A_x, A_y$ of $A_\mu$ are the usual ones while the components $A_{k_x}, A_{k_y}$ are $k$-space gauge fields. Then the $5$ dimensional manifold has the interpretation of being ``phase space'' (that is, the space where points are labeled by position and momentum) plus time,  and $A_\mu$ can be considered a gauge field in phase space. A term similar to \eqnref{eq:5d_chern_simons_alt} has previously appeared in Ref.~\cite{Qi_0802}.

\subsection{Filling constraint and Luttinger's theorem}
\label{subsec:fermi_monopoles}
We consider a theory in two spatial dimensions whose 't Hooft anomaly is canceled by inflow from \eqnref{eq:5d_chern_simons}. According to the discussion of Section \ref{subsec:2d_filling_constraint}, we first need to determine how the emergent symmetry $\LU(1)$ gets represented projectively in such a theory in the presence of a $2\pi$ flux (of the microscopic charge $\UU(1)$ symmetry). This is something that can be derived from the 't Hooft anomaly that was characterized in the previous subsection. We show in Appendix \ref{appendix:monopole} that this leads to a projective representation on a $2\pi$ flux described by the commutation relations
\begin{equation}
\label{eq:kac_moody}
[\hat{n}(\theta), \hat{n}(\theta')] = -i \frac{mq}{2\pi} \delta'(\theta - \theta'),
\end{equation}
where $\hat{n}(\theta)$ are the operators introduced in Section \ref{subsec:loop_group}, and $\delta'$ is the derivative of the Dirac delta function.

 Note that \eqnref{eq:kac_moody} is precisely the so-called Kac-Moody algebra satisfied by the local density operator of a chiral fermion in one spatial dimension [such as appears, for example, at the boundary of a integer quantum Hall state in two spatial dimensions]. Indeed, this relationship is not a coincidence; one can see roughly how it comes about in the case of a Fermi liquid at the level of the semiclassical theory of electron transport. Suppose, for simplicity, that we switch off the interactions so that the Fermi liquid becomes a non-interacting Fermi gas. 
 In the presence of a spatially-dependent magnetic field $\textbf{B}(\textbf{x})$, the semiclassical equations of motion take the form \cite{Chang_9511}
\begin{align}
\frac{d\textbf{k}}{dt} &= -q \textbf{B}(\textbf{x}) \times \textbf{v}(\textbf{k}) \label{eq:semiclassical_first}\\
\frac{d\textbf{x}}{dt} &= \textbf{v}(\textbf{k}) - \Omega(\textbf{k}) \times \frac{d\textbf{k}}{dt},
\end{align}
where $\Omega(\textbf{k})$ is the Berry curvature of the Bloch states, $\textbf{v}(\textbf{k}) = \frac{\partial \mathcal{E}(\textbf{k})}{\partial \textbf{k}}$, $\mathcal{E}(\textbf{k})$ is the dispersion relation, and we have omitted terms beyond linear order in the magnetic field strength. In particular, \eqnref{eq:semiclassical_first} implies that electrons develop a circulation in momentum space along contours of constant energy, in particular along the Fermi surface. In two spatial dimensions the Fermi surface is one-dimensional and the circulation along the Fermi surface is unidirectional (set by the sign of the magnetic field), which indeed resembles a chiral fermion.

This intuitive argument, however, does not fix the coefficient of the right-hand side of \eqnref{eq:kac_moody}. 
 In Appendix \ref{appendix:semiclassical} we give a more careful derivation of \eqnref{eq:kac_moody} from the semiclassical theory of electron transport, confirming that $m= \pm 1$ for a spinless Fermi liquid. Related expressions have previously been derived in Refs.~\cite{Golkar_1602,Barci_1805}.

From \eqnref{eq:kac_moody} we can compute the projective representation of the translation symmetry in light of the embedding \eqnref{eq:translation_embedding}. We find
\begin{align}
\mathbb{T}_x \mathbb{T}_y \mathbb{T}_x^{-1} \mathbb{T}_y^{-1} &= \exp\left(i \frac{mq}{2\pi} \int k_x(\theta) \frac{dk_y(\theta)}{d\theta} d\theta\right) \\
&= \exp\left(i m q \frac{\mathcal{V}_F}{2\pi}\right),
\end{align}
where $\mathcal{V}_F$ is the volume in momentum space enclosed by the Fermi surface. (Here we have chosen a particular convention to define the orientation of the Fermi surface).
Hence, from \eqnref{eq:filling_constraint_2d}, we conclude that
\begin{equation}
\label{eq:luttingers_theorem}
\nu = m q \frac{\mathcal{V}_F}{(2\pi)^2} \quad (\mathrm{mod}\,1),
\end{equation}
which (if we set $m=q=1$) is precisely Luttinger's theorem for a spinless Fermi liquid in two spatial dimensions.

We wish to emphasize, however, that, in general, Luttinger's theorem \eqnref{eq:luttingers_theorem} follows directly from \eqnref{eq:kac_moody}, which in turn follows directly from the 't Hooft anomaly. It was not necessary to assume anything about the dynamical properties of the Fermi liquid, e.g. the existence of quasiparticles. Thus, Luttinger's theorem also holds (with a possible integer multiplicative factor $mq$) for any IR theory that has the same emergent symmetry as a Fermi liquid. 

It is interesting to reconsider \eqnref{eq:kac_moody} and its relation with the $5D$ Chern-Simons term from the viewpoint that the 5-dimensional manifold can be thought of as $4$-dimensional phase space together with the time direction (see the last two paragraphs of Section \ref{subsec:fermi_thooft}). In the presence of a static $2\pi$-strength magnetic flux in the $x, y$ components of $A_\mu$ (with the corresponding components $A_x$ and $A_y$ independent of $k_x$,$k_y$,$t$), the $5D$ Chern-Simons term \eqnref{eq:5d_chern_simons_alt} reduces to a $3D$ Chern-Simons term for the remaining components $A_I = (A_0, A_{k_x}, A_{k_y})$, assuming that they are independent of $x$ and $y$:
\begin{equation}
\label{eq:3dcs}
S_{3D} =  \frac{m}{4\pi} \int d^3x \epsilon_{IJK} A_I \partial_J A_K
\end{equation} 
where $k_x$ and $k_y$ are integrated over the volume $\mathcal{D}$, i.e. over the interior of the Fermi surface.

Now the claim is that on the Fermi surface, there is a chiral mode carrying the Kac-Moody algebra
\begin{equation}
\label{eq:kac_moody_mod}
[\hat{n}(\theta), \hat{n}(\theta')] = -i \frac{m}{2\pi} \delta'(\theta - \theta'),
\end{equation}
In particular since a $2\pi$ flux of the microscopic $\UU(1)$ corresponds to a $2\pi q$ flux of $A$ by \eqnref{eq:charge_embedding}, we recover
\eqnref{eq:kac_moody}.

In the phase space interpretation, \eqnref{eq:3dcs} describes an integer quantum Hall effect in {\em momentum space} in the interior of the Fermi surface. In other words we think of the rigid interior of the Fermi surface as hosting an integer quantum Hall state in momentum space when we apply a $2\pi$ flux in real space. The Fermi surface is the boundary in momentum space of the interior, and hence hosts a chiral edge state. 

\subsection{Extension to higher dimensions}
The description of the anomaly extends straightforwardly to higher dimensional Fermi liquids. Indeed, if $M$ is the $(d+1)$-D space-time manifold, and $F$ is a $(d-1)$-dimensional manifold parameterizing the Fermi surface, which is a codimension-1 surface in the $d$-dimensional Brillouin zone, then the emergent symmetry group is $\mathrm{L}^F \mathrm{U}(1)$, i.e. the group of smooth maps from $F$ to $\UU(1)$.
Including the components of the Berry connection on the Fermi surface promotes the $\mathrm{L}^F\mathrm{U}(1)$ gauge field to  a $\UU(1)$ gauge field on the $(2d)$-dimensional manifold $M \times F$. 

Then we can write down a topological term on a $(d+2)$-D spacetime $M_{+}$ describing the SPT whose inflow generates the anomaly of the Fermi liquid, given by the $(2d+1)$-dimensional Chern-Simons action:
\begin{equation}
\label{eq:fermi_anomaly_general_dimension}
S[A] = \frac{m}{(d+1)!(2\pi)^{d}}\int_{M_{+} \times F} A \wedge \underbrace{(dA \wedge \cdots \wedge dA)}_{\text{$d$ times}}.
\end{equation}

\section{Ersatz Fermi liquids and their phenomenology}
\label{sec:quasi_fermi}
As we showed previously (Section \ref{sec:generic}), \emph{any} IR theory that can exist at generic filling must have a very large symmetry group, ``larger'' than any compact Lie group. In the simplest cases, this emergent symmetry group will be the same as that of a Fermi liquid, i.e. $\GIR = \mathrm{LU}(1)$. We refer to an IR theory with this emergent symmetry as an \emph{ersatz Fermi liquid (EFL)}. 
We note that EFLs represent a \emph{class} of theories; there could be many different EFLs with different dynamical properties.
Here we will examine the properties that all EFLs have in common. In particular, as a  Fermi liquid is an example of an EFL, this section will provide a fresh perspective on many aspects of Fermi liquid phenomenology, showing that they arise directly from the emergent symmetry and its 't Hooft anomaly without any need to invoke the detailed dynamical properties of a Fermi liquid.

Going beyond Fermi liquids, an important example  of states which we expect will be EFLs are associated with quantum critical points in metals which are not tied to a particular electron density.  As a concrete example consider a putative quantum critical point associated with the onset of Ising nematic order\footnote{This is associated with spontaneously  breaking $C_4$ lattice rotation symmetry to $C_2$.}  in a metal in $d = 2$.  There is no particular electron density at which this transition will happen in any given system. Indeed, as microscopic parameters are changed continuously, we expect that the electron density at the transition will also change continuously without a change of universality class.  The universal critical properties of this transition are described by a theory of electrons near the Fermi surface coupled to the fluctuating order parameter modes.  (A model with similar structure also describes {\em insulating } quantum spin liquid phases with a spinon Fermi surface coupled to a dynamical $U(1)$ gauge field).   In these models the resulting IR fixed point is not a Fermi liquid\cite{holstein1973haas,lee2006doping,halperin1993theory,polchinski1994low,nayak1994non,altshuler1994low,lee2009low,metlitski2010quantum,mross2010controlled,schattner2016ising,dalidovich2013perturbative,lee2018recent}.  Nevertheless, from our point of view, these metallic quantum critical points are expected to be EFLs.  (We will discuss  neutral Fermi surfaces in insulators separately in Sec.~\ref{subsec:spinonFS}.)

We may understand why such critical points should be EFLs as  follows. In previous papers \cite{polchinski1994low,lee2009low,metlitski2010quantum,mross2010controlled} the ultimate IR fixed point was accessed  through a `patch construction' which begins by by dividing the Fermi surface into small patches.  It was then argued that  the important coupling of the fermions within a single patch is to  
boson fluctuations with momentum tangential to the local Fermi surface. This enables treating the full system by focusing on a pair of antipodal patches and ignoring their coupling to other such pairs of patches. The patch width is taken to zero at the end. In this scheme, the number of fermions within each patch is conserved. The assumption that this patch description captures the IR fixed point then implies that the linear charge density $\hat{n}(\theta)$ at each point of the Fermi surface is conserved at the fixed point.  (There is, potentially, a dangerous inter-patch BCS coupling in the pairing channel that could destroy these conservation laws.  For the Fermi surface coupled to a gauge field,  a weak pairing interaction is irrelevant at the IR fixed point\cite{metlitski2015cooper} while in the Ising nematic quantum critical point it is relevant. In the latter case the non-Fermi liquid metallic fixed point is preempted by the superconducting instability. Our discussion then applies at a scale above this instability.)

 We also expect that more complex quantum critical points 
associated with the death of a Fermi surface can be subsumed under the umbrella of EFLs.  Such critical points have been argued\cite{senthil2008critical} to possess a critical Fermi surface even in the absence of Landau quasiparticles.  As we will see below, the EFL description, if it indeed applies to such quantum critical points, enables inferring many of their general properties. 

\subsection{General properties of EFLs}
Let us return to a general EFL. Then, translation symmetry must embed into $\GIR$ somehow: that is, we have
\begin{equation}
\mathbb{T}_\alpha \sim \exp\left(-i\int k_\alpha(\theta)  \hat{n}(\theta) d\theta \right).
\label{eq:quasifermi_translation}
\end{equation}
for some $\mathrm{U}(1)$-valued functions $k_\alpha(\theta)$. The values of $k_\alpha(\theta)$ can be interpreted as momenta in the Brillouin zone, so this defines a codimension-1 surface in the Brillouin zone. We can take this to be the \emph{definition} of a Fermi surface in a general EFL. Of course, this ``Fermi surface'' may or may not have any signature in, say, angle-resolved photoemission spectroscopy (ARPES) measurements, and if it does the signature might be different in character from that of a Fermi liquid.

 Nevertheless, the fact that $\hat{n}(\theta)$ is a conserved operator for every $\theta$ tells us that, at any point $\theta$ on the Fermi surface, there are infinitely long-lived excitations which are forbidden from scattering away from that point. To see this, first observe that since $\hat{n}(\theta)$ is conserved, excitations can be labeled by their corresponding eigenvalue $n(\theta)$. An excitation at point $\theta$ on the Fermi surface is characterized by
\begin{equation}
\label{eq:fermi_surface_quantum}
n(\theta') = N \delta(\theta' - \theta),
\end{equation}
for some $N$. One can show that $N$ is quantized to an integer, because by the definition of $\mathrm{LU}(1)$, the operator
\begin{equation}
\hat{N} = \int \hat{n}(\theta) d\theta
\end{equation}
must have integer eigenvalues, since $\exp(-2\pi i \hat{N})$ is the identity operator. Henceforth, we will refer to such excitations as \emph{Fermi surface quanta}. From \eqnref{eq:quasifermi_translation} and \eqnref{eq:fermi_surface_quantum}, we see that a single Fermi surface quantum carries momentum $\mathbf{k}(\theta)$.
 Fermi surface quanta are the generalization of Landau quasiparticles to a general ersatz Fermi liquid. Their dynamics, however, could be very different from Landau quasiparticles. Moreover, apart from the fact that their number is quantized, the Fermi surface quanta need not have particularly ``quasiparticle-like'' properties. Nevertheless, the Berry's phase of a spatially localized Fermi surface quantum as it is moved in space or over the Fermi surface is still a well-defined quantity, which supplies the general interpretation of the gauge field $A_\mu$ (including the $A_\theta$ component) discussed in Section \ref{subsec:fermi_thooft}.

Next, we can consider how the microscopic $\mathrm{U}(1)$ symmetry embeds into $\mathrm{LU}(1)$. The quantization of charge, i.e.\ the requirement that $e^{i(2\pi) \hat{Q} }= 1$ (where $\hat{Q}$ is the microscopic charge operator), constrains the embedding to be of the form
\begin{equation}
\hat{Q} \sim q \hat{N},
\end{equation}
for some integer $q$. We can interpret $q$ as the charge of a single Fermi surface quantum.

Finally, we can consider the 't Hooft anomaly of the $\mathrm{LU}(1)$ symmetry. There is not much freedom, since as we saw in Section \ref{sec:structural_fermi_liquid} the 't Hooft anomalies are just classified by the level $m$ of the 5D Chern Simons term \eqnref{eq:5d_chern_simons} which is quantized to be an integer. In summary, once we have fixed the shape of the Fermi surface, the  kinematic properties of an ersatz Fermi liquid (in the sense of Section \ref{sec:structural_properties}) are captured by the integers $q$ and $m$. (Note that there is a  freedom that sends $q \to -q$ and $m \to -m$ simultaneously by redefining the generators of $\LU(1)$; moreover, as we mentioned earlier, choosing the reverse orientation of the Fermi surface sends $m \to -m$ while leaving $q$ fixed).

Next we observe the arguments of Section \ref{subsec:fermi_monopoles} can be applied in any EFL. Therefore, we immediately conclude that the Fermi surface in any EFL satisfies Luttinger's theorem in the form \eqnref{eq:luttingers_theorem}.

In the remainder of this section we will consider various aspects of Fermi liquid phenomenology and argue that they hold equally well in any EFL.

\subsection{Response to electric fields}
\label{subsec:electric_field}
A property of Fermi liquids is that if a uniform electric field is applied, the system responds in essentially the same way as it would in a non-interacting Fermi gas, which is to say that the momenta of quasiparticles get shifted (assuming the electric field is in the $x$ direction) according to $k_x \to k_x + a_x E t$, where $a_x$ is the unit cell size in the $x$ direction\footnote{This appears because we have normalized our momenta to be dimensionless.}. This causes the density operators $\hat{n}(\theta)$ to be no longer conserved in the presence of the electric field. Specializing for simplicity to a Fermi liquid in two spatial dimensions, the total charge on a segment $[\theta,\theta+d\theta]$ of the Fermi surface gets shifted according to
\begin{equation}
\label{eq:nshift}
\hat{n}(\theta) d\theta \to \hat{n}(\theta) d\theta + q a_x E t \frac{dk_y(\theta)}{(2\pi)^2/(L_x L_y)}
\end{equation}
where $dk_y(\theta) = k_y(\theta + d\theta) - k_y(\theta)$, and $L_x$ and $L_y$ are the linear dimensions of the system (normalized by $a_x$ and $a_y$ respectively, the unit cell dimensions) in the $x$ and $y$ direction. We have to divide by the denominator in the second term in \eqnref{eq:nshift} to take into account the density of single-particle states in momentum space in a finite-size system.

We will now show that \eqnref{eq:nshift} indeed follows from the 't Hooft anomaly. In order to do that, we will want to reformulate \eqnref{eq:nshift} in a way that removes the explicit dependence on $L_y$. We write \eqnref{eq:nshift} as
\begin{equation}
\label{eq:some_equation}
\frac{d}{dt} \hat{n}(\theta) = \eta(\theta; L_y) := qE \frac{a_x L_x L_y}{(2\pi)^2} \frac{dk_y(\theta)}{d\theta}
\end{equation}
Next we identify the \emph{difference}
\begin{align}
\Delta \eta (\theta) &:= \eta(\theta; L_y + 1) - \eta(\theta; L_y) \\ &= qE \frac{a_x L_x}{(2\pi)^2} \frac{d k_y(\theta)}{d\theta}
\label{eq:shiftratetransflux}
\end{align}
with the density shift rate associated with applying a uniform electric field in the $x$ direction in the presence of a ``flux of $y$ translation'' symmetry around the $y$ direction.

To compute this shift from the 't Hooft anomaly, we start with the anomaly equation \eqnref{eq:theonetrueanomalyeq} and consider a configuration where
\begin{align}
A_x &= E t, \\
A_y &= -k_y(\theta) / L_y, \\
A_t &= 0 \\
A_{\theta} & \mbox{\, is independent of $x,y,t$}
\end{align}
The motivation for our choice of $A_y$ is that it ensures that the $\LU(1)$ flux around the $y$ direction is given by
\begin{equation}
\exp\left(-i\int k_y(\theta) \hat{n}(\theta) d\theta\right),
\end{equation}
which coincides with the expression \eqnref{eq:translation_embedding} for the translation symmetry operator $\mathbb{T}_y$. Substituting into \eqnref{eq:theonetrueanomalyeq} and integrating over $x$ and $y$ gives
\begin{equation}
\partial_t n(\theta) + \partial_\theta J^\theta = \frac{mq}{(2\pi)^2} E \frac{dk_y(\theta)}{d\theta} a_x L_x,
\end{equation}
where $J^{\theta} = \int j^{\theta} dx dy$, and we can identify $\int j^{t} dx dy$ with the expectation value $n(\theta) = \langle \hat{n}(\theta) \rangle$. Hence, provided that $\partial_\theta J^{\theta} = 0$, we recover \eqnref{eq:shiftratetransflux} if we set $m=1$.

Our assumption that $\partial_\theta J^{\theta} = 0$, even in the presence of a background electric field, requires a bit more explanation. $J^\theta$ represents a charge circulation along the Fermi surface. As we mentioned in Section \ref{subsec:fermi_monopoles}, in a Fermi liquid a magnetic field induces a chiral flow of quasiparticles along the Fermi surface, in which case $J_\theta \propto B n(\theta)$ and generally $\partial_\theta J_\theta \neq 0$. So we need to address why the same thing could not happen with an electric field. Of course, in a Fermi liquid, one can easily convince oneself that it does not, but we want an argument that holds more generally in any EFL.

In a quantized formulation of the IR theory, the current operator $\hat{J}_\theta$ is defined by
\begin{equation}
\label{eq:J_in_terms_of_H}
\hat{J}_\theta = \int \frac{\delta \hat{H}}{\delta A_\theta}  d^2 x
\end{equation}
where $\hat{H}$ is the Hamiltonian of the IR theory. Then we know that, in the absence of an electromagnetic field, $\hat{J}_\theta$ must satisfy (as an operator identity) the conservation law
\begin{equation}
\frac{d}{dt} \hat{n}(\theta,t) + \partial_\theta \hat{J}_\theta(\theta,t) = 0
\end{equation}
On the other hand, in the absence of an external electromagnetic field,  by assumption $\hat{n}(\theta)$ should be conserved for each $\theta$, since it is the generator of the emergent symmetry. Therefore, in the absence of electromagnetic field, the current operator identically satisfies 
\begin{equation}
\label{eq:partialJ}
\partial_\theta \hat{J}_\theta = 0.
\end{equation}
 Now observe that because the electric field $E$ enters into the Hamiltonian ``temporally'', i.e.\ through the time-derivative of $A$, it is not possible for $\hat{J}_\theta$ defined by \eqnref{eq:J_in_terms_of_H} to depend on $E$ because the Hamiltonian is defined on a single time slice of space-time. Therefore, $\hat{J}_\theta$ continues to satisfy \eqnref{eq:partialJ} even with an applied electric field. On the other hand, it \emph{is} possible for $\hat{J}_\theta$ to depend on the applied magnetic field, as the Fermi liquid example demonstrates.

\subsection{Quantum oscillations}
\label{subsec:quantum_oscillations}
Recall that Fermi liquids display ``quantum oscillations'' when a weak magnetic field $B$ is applied; that is, various physical properties are periodic in $1/B$ \cite{QuantumOscillationsBook}.
In the case that the physical property under consideration is resistivity, for example, this is known as the Shubnikov-de Haas effect. (What we mean by ``weak magnetic field'' is that the magnetic flux per unit cell should be much less than 1. To observe the oscillations at finite temperature $T$, it is also necessary that the cyclotron energy $E_c$ should satisfy $E_c \gtrsim T$, which for $T > 0$ also places a lower bound on the magnetic fields for which the oscillations are observable).

It is sometimes  stated that observing quantum oscillations is evidence for a quasiparticle description, i.e.\ of a Fermi liquid. Here, however, we will show that there are very general reasons to expect any EFL to display the same periodicity of the quantum oscillations. We will not, however, make any statement about the \emph{amplitude} of the quantum oscillations, and it remains possible that this amplitude and its dependence on parameters such as temperature will still allow for Fermi liquids and non-Fermi liquids to be distinguished.

Our task is complicated by the fact that the quantum oscillations are non-perturbative in the magnetic field, and therefore, it is not clear that one expects them to be directly describable in terms of the deep IR theory. Instead, we will find it necessary to make appeal to a UV completion. 

\subsubsection{Two spatial dimensions}
Let us first consider the case of a system in two spatial dimensions which microscopically has continuous translation symmetry, with a microscopic charge density $\rho$. In such a case, in the presence of a magnetic field the $x$ and $y$ translation generators $P_x$ and $P_y$ fail to commute, but we can define a ``magnetic unit cell'' of volume $b_x \times b_y$, such that the flux per magnetic unit cell is $2\pi$ and the discrete translations $\mathbb{T}_x = \exp(-ib_x P_x)$ and $\mathbb{T}_y = \exp(-ib_y P_y)$ do commute. Then we can define the magnetic filling $\nu_M = (b_x b_y) \rho = 2\pi\rho/B$, which is the charge per magnetic unit cell. Then we can apply all the usual results on filling with respect to this discrete translation symmetry; in particular, if $\nu_M$ is not an integer then the Lieb-Schultz-Mattis-Oshikawa-Hastings theorem \cite{Lieb_1961,Oshikawa_9610,Oshikawa_9911,Hastings_0411} forbids the system from having a trivial (i.e. not topologically ordered) gapped ground state, whereas such a state is permitted for integer $\nu_M$. Moreover, if we assume that the ground state at a given $\nu_M$ is itself an EFL with a single Fermi surface, then Luttinger's theorem for EFLs implies that at integer $\nu_M$ the Fermi volume must fill all of the Brillouin zone corresponding to the magnetic unit cell, i.e.\ the Fermi surface becomes degenerate, presumably leading to an instability. Most generally, the point is that since $\nu_M \, \mathrm{mod}\, 1$ always reflects properties of the IR theory according to our general framework of filling constraints, the nature of the IR theory must vary with $\nu_M \, \mathrm{mod} \, 1$.

These considerations motivate our assumption that, in general, observable properties of the ground state vary periodically with $\nu_M$, with period 1 (The periodicity refers to the behavior for $\nu_M \gg 1$, and where $\nu_M$ varies over an interval $\Delta \nu_M \ll \nu_M$. On longer scales there will be some envelope function governing the amplitude of the oscillations.) This behavior is well known in the Fermi liquid case, where integer $\nu_M$ (non-integer $\nu_M$) corresponds to fully filled (partially filled) Landau levels respectively.
Our periodicity assumption immediately  implies that, if we keep $\rho$ fixed and vary $B$, then the periodicity with respect to $1/B$ is
\begin{equation}
\label{eq:quantum_oscillations_periodicity}
\Delta(1/B) = 1/(2\pi\rho).
\end{equation}

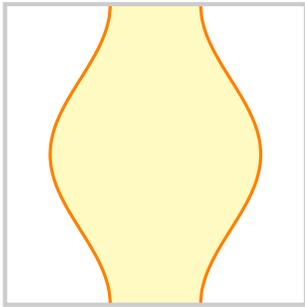
\begin{figure}

\begin{tikzpicture}[scale=4]
\draw[fermisurface,fermifill] plot[smooth,domain=0:360,variable=\t] ({0.1*cos(\t) + 0.25},{\t/360}) -- 
      plot[smooth,domain=360:0,variable=\t] ({-0.1*cos(\t) + 0.75}, {\t/360});
      \draw[brillouinzone] (0,0) -- (0,1) -- (1,1) -- (1,0) -- cycle;
\end{tikzpicture}

\caption{\label{no_oscillations}A Fermi liquid with a Fermi surface that wraps non-trivially around the Brillouin torus does not exhibit quantum oscillations.}
\end{figure}

Next we can consider a system in two spatial dimensions that microscopically only has discrete translation symmetry. Here we recall that, even for Fermi liquids, there is a condition on the Fermi surface in order to observe quantum oscillations. For a Fermi surface as depicted in Figure \ref{no_oscillations}, no quantum oscillations will be observed; physically one can think of this as coming from the fact that the semiclassical orbits of this Fermi surface in a magnetic field are not closed in position space. The general condition for a Fermi liquid to display quantum oscillations is that each connected component of the Fermi surface should have trivial winding number on the Brillouin torus.

Let us therefore examine what we can say about a general EFL that satisfies this condition. To avoid some subtleties in the argument, we will assume that $q=1$.
The key point is that when the Fermi surface does not have nontrivial winding number, then there is a consistent way to define the Fermi momentum $\textbf{k}(\theta)$  without any mod $2\pi$ ambiguity. Then we can define the filling that is relevant for quantum oscillations as
\begin{equation}
\nu_o = \frac{m}{(2\pi)^2} \mathcal{V}_F, \quad \mathcal{V}_F = \int k_x(\theta) \frac{dk_y(\theta)}{d\theta} d\theta,
\end{equation}
where $\mathcal{V}_F$ is the volume enclosed by the Fermi surface, and this formula holds without any $\mathrm{mod} \, 1$ equivalence. Note that $\nu_o$ is not necessarily the same as the microscopic filling $\nu$, although by Luttinger's theorem it must differ from it by an integer. Moreover, we can define emergent symmetry generators
\begin{equation}
\hat{P}_\alpha = a_\alpha^{-1} \int k_\alpha(\theta) \hat{n}(\theta), \\
\end{equation}
with $a_x$, $a_y$ the dimensions of the unit cell, such that $\exp(-ia_\alpha P_\alpha) \sim \mathbb{T}_\alpha$.

This leads us to believe (although we will not give a rigorous proof) that, even though the actual microscopic theory may have had only a discrete translation symmetry, there \emph{exists} a UV completion of the IR theory with continuous translation symmetry whose action on the IR theory is generated by $\hat{P}_\alpha$. Assuming that this is the case,  invoking the continuous translations version of Luttinger's theorem shows that the density of $\UU(1)$ charge in this UV completion must be $\rho = \nu_o/(a_x a_y)$. Then with $\rho$ so defined we again find that the periodicity of quantum oscillations is given by \eqnref{eq:quantum_oscillations_periodicity}.

\subsubsection{Two spatial dimensions: multiple Fermi surfaces}
As a warm-up to going to three spatial dimensions, we will extend the results of the previous section to the case where we have several disconnected components of the Fermi surface, instead of just one. For simplicity we will consider the case 
of two components, although the arguments can easily be generalized.
In that case, the emergent symmetry is $L^F \UU(1)$, i.e. the group of smooth maps from $F$ into $\UU(1)$, where $F = S^1 \sqcup S^1$, i.e. the disjoint union of two circles. We can represent this by writing the generators of the emergent symmetry as $\hat{n}^{(\lambda)}(\theta)$, where $\theta \in S^1$ and $\lambda \in \{1,2\}$ labels the two components of the Fermi surface. Similarly we write the momentum of the Fermi surface as $\textbf{k}^{(\lambda)}(\theta)$.
 We assume that each component satisfies the condition of no winding described earlier, which allows us to define the momentum operator
\begin{equation}
\hat{P}_\alpha = \sum_\lambda a_\alpha^{-1} \int k_\alpha^{(\lambda)}(\theta) \hat{n}^{(\lambda)}(\theta) d\theta.
\end{equation}
The crucial point is that the fact that there are two connected components of the Fermi surface allows us to define two different $\UU(1)$ emergent symmetries, generated by $\hat{Q}^{(1)}$ and $\hat{Q}^{(2)}$, with
\begin{equation}
\hat{Q}^{(\lambda)} = \int \hat{n}^{(\lambda)}(\theta) d\theta,
\end{equation}
In the same spirit as the argument of the previous section, therefore, we will postulate that there is a UV completion with both continuous translation symmetry and $\hat{Q}^{(1)}$ and $\hat{Q}^{(2)}$ realized microscopically. Defining $\rho^{(1)}$ and $\rho^{(2)}$ to be the corresponding charge densities, by similar arguments as earlier we have a generalized Luttinger's theorem
\begin{equation}
\rho^{(\lambda)} = \frac{m}{(2\pi)^2 a_x a_y} \mathcal{V}_F^{(\lambda)}, \quad \mathcal{V}_F^{(\lambda)} = \int k_x^{(\lambda)} (\theta) k_y^{(\lambda)} (\theta) d\theta.
\end{equation}
Now, in the presence of magnetic field, if we define magnetic fillings $\nu_M^{(\lambda)} = 2\pi\rho^{(\lambda)}/B$, we expect that any observable property should be periodic in each of $\nu_M^{(1)}
$ and $\nu_M^{(2)}$ with period 1. In other words, any observable property $O$ can be expressed as
\begin{equation}
O = f(\nu_M^{(1)}, \nu_M^{(2)}),
\end{equation}
where $f(\nu_1, \nu_2)$ is some function such that $f(\nu_1 + 1, \nu_2) = f(\nu_1, \nu_2 + 1) = f(\nu_1, \nu_2)$. This implies that if we vary $B$ while keeping $\rho^{(\lambda)}$ fixed, and assuming that the ratio $\rho^{(1)}/\rho^{(2)}$ is an irrational number, then $O$ varies \emph{quasiperiodically} in $1/B$ with base frequencies $2\pi \rho^{(1)}$ and $2\pi\rho^{(2)}$.

\subsubsection{Three spatial dimensions}
\label{subsubsec:quantum_oscillations_3d}
Now we can consider a 3D system. By dimensional reduction, we can consider a 3D system as a 2D system; but the Fermi surface of the 2D system will consist of infinitely many components corresponding to taking slices through the 3D Fermi surface with fixed $k_z$. By generalizing the previous discussion to $N$ components, and then taking the limit as $N \to \infty$, we find that any observable property should be expressible as
\begin{equation}
O = \mathcal{F}[\nu_\perp],
\end{equation}
where $\mathcal{F}$ is some functional of $\nu_\perp$, and $\nu_\perp$ is a function into $\mathbb{R}/\mathbb{Z}$ (i.e.\ the real line with $\nu \sim \nu + 1$ identified), defined by
\begin{align}
\nu_\perp(k_z) &= 2\pi \rho_{\perp}(k_z) \frac{1}{B} \quad \mathrm{mod} \, 1 \\
\rho_{\perp}(k_z) &= \frac{m\mathcal{V}_{\perp}(k_z)}{(2\pi)^2 a_x a_y}.
\end{align}
where $\mathcal{V}_{\perp}(k_z)$ is the two-dimensional area enclosed by the intersection of the Fermi surface with a plane of fixed $k_z$.

We show in Appendix \ref{appendix:oscillations} that this implies, so long as the functional $\mathcal{F}$ is sufficiently regular, that the dependence of $O$ on $B$ at small $B$ is dominated by the extremal Fermi surface cross-sections, and we obtain that $O$ varies periodically or quasiperiodically with $1/B$, with base frequencies $2\pi \rho_{\perp}(k_z^{*})$, $k_z^{*} \in \Sigma$, where $\Sigma$ is the set of solutions to $\frac{d}{dk_z} \mathcal{V}_{\perp}(k_z) = 0$. This is the same result as for a Fermi liquid.

\subsection{Anomalous Hall effect}
The anomalous Hall effect refers to a Hall conductance that exists in a time reversal broken system in zero external magnetic field.  In general there can be many different effects that contribute to this effect. For a clean free electron system with a Fermi surface, there is an interesting contribution that arises due to the net Berry curvature $\Phi_B$ of the filled Fermi sea.    In $d = 2$, we have 
\begin{equation}
\sigma_{xy} =  \frac{\Phi_B}{2\pi}
\end{equation}
This contribution can alternately be re-expressed as a Fermi surface property in terms of the Berry gauge connection: 
\begin{equation}
\Phi_B = \int d\theta A_\theta
\end{equation}
Note that the integrand (as opposed to the integral) on the right hand side -- while only involving the Fermi surface -- is not by itself invariant under $k$-space gauge transformations. Going beyond non-interacting systems, for an interacting Fermi liquid it has been argued \cite{Chen_1604} that there will be additional Fermi surface contributions to the anomalous Hall conductance that, by contrast to the Berry phase contribution above, involve an integral over gauge invariant quantities defined in local patches of the Fermi surface. 

Next we show that the `t Hooft anomaly of an EFL  directly implies the Berry phase contribution to the  anomalous Hall effect. To that end we start from \eqnref{eq:5d_chern_simons_alt} and consider a configuration where the gauge field $A$ has a flux $\Phi_B$ through the interior of the Fermi surface in the $(k_x, k_y)$ plane, and take the corresponding gauge field components $(A_{k_x}, A_{k_y})$ to be independent of $(t,x, y)$. Similar to the discussion surrounding \eqnref{eq:3dcs}, we now find that the $5D$ Chern-Simons term reduces (assuming that $A_0, A_x, A_y$ are independent of $k_x$ and $k_y$) to:
\begin{equation}
\label{eq:ahaction}
S_{AH} =  \frac{m \Phi_B}{8\pi^2} \int dt dx dy \, \epsilon_{IJK} A_I \partial_J A_K
\end{equation} 
for the $I, J, K = (0, x, y)$. This then directly corresponds to a contribution to the Hall response $\sigma_{xy} = \frac{\Phi_B}{2\pi}$.  We can write this directly in terms of the boundary theory at the Fermi surface as $\Phi_B = \int d\theta A_\theta$. In the free fermion case this is exactly the Berry phase contribution to the anomalous Hall effect discussed above.  In a general EFL, we should regard this as an unavoidable contribution due to the 't Hooft anomaly of the IR theory. Like in the interacting Fermi liquid, the full measured anomalous Hall effect may include other contributions that are `local' on the Fermi surface.  Note that \eqnref{eq:ahaction}  has the structure of  a $3D$ Chern-Simons term but with a coefficient that is not quantized to be an integer multiples of $\frac{1}{4\pi}$. This is allowed here because the boundary gauge fields are coupled to the gapless modes associated with the Fermi surface. The full boundary action that includes both the Fermi surface modes and the unquantized Chern-Simons action of \eqnref{eq:ahaction} will be properly gauge invariant (including under large gauge transformations).

\subsection{Chiral magnetic effect}
The \emph{chiral magnetic effect} refers to a phenomenon in a three-dimensional Fermi liquid where if the Chern number of the Berry curvature on the two-dimensional Fermi surface is nonzero (which, in a non-interacting Fermi gas, would occur when the Fermi surface encloses a Weyl point), then the total charge becomes nonconserved in the presence of both an electric field and magnetic field, with $\textbf{E} \cdot \textbf{B} \neq 0$ \cite{Son_1203}. Recall that that the Berry gauge field on the Fermi surface is still defined in a general EFL. Here we will show that when this Berry gauge field has non-trivial Chern number in a 3D EFL, then the system exhibits the chiral magnetic effect.

We start from the anomaly equation that is the analog for an EFL in three spatial dimensions of \eqnref{eq:theonetrueanomalyeq} (which was stated for an EFL in two spatial dimensions); for convenience we write it in an index-free form as:
\begin{equation}
\label{eq:3d_anomaly}
d(\ast j) = \frac{m}{48\pi^3} F \wedge F \wedge F,
\end{equation}
where $d$ is the exterior derivative on forms, ``$\ast$'' is the Hodge star operator, and $J$ is the current expressed as a 1-form, and $F$ is the 2-form gauge curvature (which we can write locally as $F = dA$, although $A$ may not be globally defined).
 Note that all these forms live on a 6-dimensional space, $M \times \mathcal{F}$, where $M$ is the 4-dimensional space-time manifold and $\mathcal{F}$ is the two-dimensional manifold that parameterizes the Fermi surface. 
 
 Now suppose that we define the total charge current $j_{\mathrm{EM}}$, which is a 1-form on $M$, by integrating $j$ over the Fermi surface $\mathcal{F}$. Suppose furthermore that we write $F = \pi_F^{*}(F_{\mathrm{EM}}) + \pi_M^{*}(F_{\mathrm{Berry}})$, where $F_{\mathrm{EM}}$ are 2-forms on $M$ and $\mathcal{F}$ respectively, and
 $\pi_\mathcal{F}^{*}$ and $\pi_M^{*}$ are the pull-back operators associated with the projections $\pi_\mathcal{F} : \mathcal{F} \times M \to \mathcal{F}$ and $\pi_M : \mathcal{F} \times M$ respectively. In index notation, this would just be saying that $F$ is the sum of two terms, each of which only depends on, and only has components in, $M$ and $\mathcal{F}$ respectively. Then by integrating \eqnref{eq:3d_anomaly} we find that
\begin{equation}
\label{eq:chiral_magnetic_anomaly}
d(\ast j_{\mathrm{EM}}) = \frac{C}{8\pi^2} F_{\mathrm{EM}} \wedge F_{\mathrm{EM}},
\end{equation}
where
\begin{equation}
C = \frac{1}{2\pi} \int_{\mathcal{F}} F_{\mathrm{Berry}}
\end{equation}
is the Chern number of the Berry curvature. We can also write \eqnref{eq:chiral_magnetic_anomaly} as
\begin{equation}
\partial_{\alpha} j_{\mathrm{EM}}^{\alpha} = \frac{C}{8\pi^2} \mathbf{E} \cdot \mathbf{B},
\end{equation}
indicating that the charge is not conserved when the right-hand side is nonzero.

\section{Extension to spinful systems}
\label{sec:spinful}
Throughout this paper, we have assumed that the only microscopic internal symmetry is $\UU(1)$, the charge conservation symmetry. Physically, one often wants to consider systems which also have an $\SU(2)$ spin rotation symmetry. In that case, the full internal symmetry group is $\UU(2)$, which acts by matrix multiplication on the vector $(\psi_{\uparrow}^{\dagger}, \psi_{\downarrow}^{\dagger})$ of microscopic spin-up and spin-down electron creation  operators. 
 Here we will briefly discuss how the general considerations above get extended in that case. (One is still allowed to use the results of previous sections by simply ignoring the spin rotation symmetry, but taking into account the additional symmetry will lead to stronger constraints.)
 For simplicity we will focus on the case of $d=2$ spatial dimensions.
 
 Firstly we observe that $\UU(2)$ has a subgroup $\UU(1)_\uparrow \times \UU(1)_\downarrow \leq \UU(2)$ corresponding to the diagonal unitary matrices. We write the generators of $\UU(1)_\uparrow$ and $\UU(1)_\downarrow$ as $\hat{Q}_\uparrow$ and $\hat{Q}_\downarrow$; they measure the total number of up-spin and down-spin electrons respectively. The total charge is the sum $\hat{Q} = \hat{Q}_\downarrow + \hat{Q}_\uparrow$. They both have corresponding fillings $\nu_{\uparrow}$ and $\nu_{\downarrow}$ in the ground state; however, $\UU(2)$ invariance of the ground state immediately implies that $\nu_\uparrow = \nu_\downarrow := \nu$. The total charge density is then $\rho = 2 \nu$. 
 Now, if we consider a spinful version of a Fermi liquid,  the quasiparticle charge at each Fermi surface point will be conserved, as will the {\em total} quasiparticle spin. Note in particular that the quasiparticle spin at each Fermi surface point is not separately conserved due to the presence of Landau interactions in the spin channel. Thus, for a spinful Fermi liquid, the emergent symmetry group $\GIR$ is a quotient of $\LU(1) \times \UU(2)$, where the quotient corresponds to identifying two $\UU(1)$ subgroups: the $\UU(1)$ subgroup of $\UU(2)$ corresponding to unitary matrices $e^{i\theta} \mathbb{I}_2$, where $\mathbb{I}_2$ is the $(2\times 2)$ identify matrix; and the $\UU(1)$ subgroup of $\LU(1)$ generated by $\hat{N} = \int \hat{n}(\theta) d\theta$, where $\hat{n}(\theta)$ are the generators of $\LU(1)$ as in previous sections. Therefore, we define a spinful EFL to be a system which has the same emergent symmetry group $\GIR$. We assume that the translation symmetry continues to embed into $\LU(1)$ according to \eqnref{eq:translation_embedding}, which defines the Fermi surface of the EFL. Moreover, the mapping of the microscopic $\UU(2)$ into $\GIR$ is induced from a homomorphism $\varphi : \UU(2)  \to \UU(2)_{\mathrm{IR}}$, where the target $\UU(2)_{\mathrm{IR}}$ is the corresponding subgroup of $\GIR$. One can show (for example, by considering the induced homomorphism on the Lie algebras), that such maps are characterized by an odd \footnote{The restriction to odd integers did not come up in Section \ref{sec:quasi_fermi}, but we would have found the same restriction if we assumed that microscopically the fermion parity is $(-1)^{\hat{Q}}$ where $\hat{Q}$ is the total charge (as is the case in an electronic system), and moreover required (as is the case is all known examples) that an elementary Fermi surface quantum be fermionic. In the spinful case, the restriction to odd $q$ comes directly from assumed form of $\GIR$, although in principle one could get even $q$ if we replaced $\UU(2)$ with $\mathrm{SO}(3) \times \UU(1)$ in the definition of $\GIR$.}
   integer $q$, which again we interpret as the charge of a Fermi surface quantum (which in a Fermi liquid would be a Landau quasiparticle). Then, we have, in particular that the microscopic charge $\hat{Q}$ corresponds in the IR theory to
   \begin{equation}
   \label{eq:spinful_charge_embedding}
   \hat{Q} \sim q(\hat{N}_{\uparrow} + \hat{N}_{\downarrow}),
   \end{equation}
   where $\hat{N}_{\uparrow}$ and $\hat{N}_{\downarrow}$ are the generators of the $\UU(1)_{\uparrow,\mathrm{IR}} \times \UU(2)_{\downarrow,\mathrm{IR}} \leq \UU(2)_{\mathrm{IR}}$ symmetry corresponding to the diagonal matrices in $\UU(2)_{\mathrm{IR}}$.

Now we can discuss the form that Luttinger's theorem must take in a spinful EFL. We imagine inserting a $2\pi$ flux of $\UU(1)_{\uparrow,\mathrm{IR}}$ or $\UU(1)_{\downarrow,\mathrm{IR}}$; this leads to a projective representation of $\LU(1)$ described by \eqnref{eq:kac_moody_mod} with some integer anomaly coefficient $m_\uparrow$ or $m_\downarrow$ respectively. $\UU(2)$ invariance again implies that $m_\uparrow = m_\downarrow := m$. Therefore, repeating the argument for Luttinger's theorem, and taking into account \eqnref{eq:spinful_charge_embedding}, we find
 \begin{equation}
 \frac{mq}{(2\pi)^2} \mathcal{V}_F = \nu \quad (\mathrm{mod} \, 1).
 \end{equation}
 where $\mathcal{V}_F$ is the volume enclosed by the Fermi surface. With respect to the total charge density $\rho = 2\nu$, on the other hand, we have
  \begin{equation}
  \label{eq:spinful_luttinger}
 \frac{2mq}{(2\pi)^2} \mathcal{V}_F = \rho \quad (\mathrm{mod} \, 2).
 \end{equation}
 The extra factor of two takes into account the two possible spin values, and agrees (setting $m = 1$) with the usual result for spinful Fermi liquids. (Thus, if we had defined $m$ with respect to the total charge as we did in Section \ref{sec:quasi_fermi}, we would have found $m=2$ for a spinful Fermi liquid/EFL).
 
Our discussion goes through with little modification for a Kondo lattice in which itinerant electrons couple to a local spin-1/2 moment in each unit cell. In this case, we can define a $\UU(1)_\uparrow$ symmetry generated by $\hat{Q}_\uparrow^{\mathrm{tot}} := \hat{S}^z + N/2+ \hat{Q}_\uparrow$, where $\hat{S}^z$ is the total spin component of the local moments, $\hat{Q}_\uparrow$ is defined as before with respect to the itinerant electrons, and $N$ is the total number of unit cells. Assuming no spin ordering of the local moments, we have that $\langle S^z \rangle = 0$ and hence the total filling of $\hat{Q}_\uparrow^{\mathrm{tot}}$ is $\nu^{\mathrm{tot}}_\uparrow = 1/2 + \nu_\uparrow$, where $\nu_\uparrow$ is the contribution from the itinerant electrons. If we now apply similar arguments to before, we find that Luttinger's theorem for a putative IR Fermi liquid requires a ``large'' Fermi surface that counts the local moments as part of the Fermi sea, as previously discussed with the flux threading argument in Ref.~\cite{Oshikawa_0002}.
 
For an alternative point of view, we can think of the  Kondo lattice model as being obtained from an Anderson lattice model with a correlated half-filled band of electrons coupled to a separate weakly correlated partially filled band.  Then, by definition, the charge $\rho$ per unit cell includes \emph{all} the microscopic electrons, in particular the contribution from the correlated band. If the IR theory is a spinful Fermi liquid (or spinful EFL), then Luttinger's theorem in the form \eqnref{eq:spinful_luttinger} shows that there Fermi surface must be ``large'' as in the previous paragraph. Since our discussion (as well as that of Ref. \onlinecite{Oshikawa_0002}) of Luttinger's theorem does not rely on perturbation theory in the interaction strength,  it will hold even in the Kondo limit of the Anderson model.

Note that the arguments of this section assume that all the relevant symmetries -- charge conservation, spin rotation and translation symmetry -- are preserved and not spontaneously broken, and that the IR theory is a spinful Fermi liquid or at least a spinful EFL with the same emergent symmetries. If any of these conditions are violated then the appropriate statement of Luttinger's theorem will be modified and in such cases, a ``small'' Fermi surface could be permitted in the Kondo lattice model. The case of `Fractionalized Fermi Liquids' which have a small Fermi surface while preserving all the symmetries  requires separate discussion but can be easily understood within the framework of this paper. We illustrate this for simple class of such phases in Section \ref{sec:higher_form}.

\section{Fermi surfaces coupled to dynamical gauge fields}
In this section, we will consider cases where the IR theory comprises a Fermi surface coupled to a dynamical gauge field. As we will see, these do not constitute EFLs in the strict sense of Section \ref{sec:quasi_fermi}, but they are still closely related and it still turns out to be helpful to think of the behavior of these systems in terms of their ``kinematic'' properties.

We will consider three main examples. The first two examples relate to a system of interacting electrons in two spatial dimensions with continuous translational symmetry in a magnetic field $B$, such that the magnetic filling $\nu = 2\pi \rho/B$ (where $\rho$ is the electron density) is equal to $\nu = 1/2$. Such a system of electrons is believed to form a ``composite Fermi liquid'' with a Fermi surface of ``composite fermions'' which are distinct from the microscopic electrons. There are two competing proposals for the IR theory of the system, which we consider separately: the one proposed by Halperin, Lee and Read (HLR) \cite{halperin1993theory}; and the one proposed by Son \cite{Son_1502}. In the HLR and Son theories, the system is not quite a conventional Fermi liquid, even when expressed in terms of the composite fermions,  because in the IR theory the composite fermions still couple to a dynamical gauge field.

The final example relates to a Mott insulator in two spatial dimensions with discrete translation symmetry and no magnetic field. In certain circumstances, it is believed that the IR theory of such a system can have a ``spinon Fermi surface'' consisting of emergent fermions coupled to a dynamical gauge field.

\subsection{HLR theory of composite Fermi liquid}
\label{subsec:hlr_theory}
We begin by discussing the microscopic symmetries of electrons in $2d$  in a uniform magnetic field.  It is well known that in the presence of such a magnetic field, the translation operators $\hat{P}_x$ and $\hat{P}_y$ do not commute. Rather they satisfy 
\begin{align}
\label{eq:heisenberg_first}
[\hat{P}_x, \hat{P}_y] &= iB \hat{Q}, \\
[\hat{P}_x, \hat{Q}] &= 0, \\
[\hat{P}_y, \hat{Q}] &= 0, \label{eq:heisenberg_last}
\end{align}
where $\hat{Q}$ is the operator for the total electric charge (generator of the global $U(1)$ symmetry).  This symmetry algebra will have to be matched by any putative IR theory. 
In an infinite system, we think of these operators $\hat{Q}$,$\hat{P}_x$,$\hat{P}_y$ as the generators of the symmetry action on local observables; hence, there is an ambiguity in their definition in that one is free to shift any of them by a constant without changing the action on local observables. Thus, one is free to fix them to have zero eigenvalue in the ground state.
 
Let us now consider the HLR theory of the composite Fermi liquid ground state at $\nu = \frac{1}{2}$. 
Our starting point is the Lagrangian for the HLR theory, which can be written as \cite{halperin1993theory, Seiberg_1606}
\begin{equation}
L = L[\chi,a] - \frac{2}{4\pi} b \wedge db + \frac{1}{2\pi}(A - a) \wedge db
\end{equation}
where $a$ and $b$ are dynamical $\UU(1)$ gauge fields, and $A$ is the background gauge field for the microscopic $\UU(1)$ symmetry. Here $\chi$ is a fermion field carrying unit gauge charge of $a$, and $L[\chi,a]$ is the Lagrangian describing a Fermi surface of these fermions coupled to the gauge field $a$. This is the corrected version (see e.g., Ref.~\cite{Seiberg_1606}) of the Lagrangian initially written by HLR \cite{halperin1993theory}, which suffered from an improperly quantized Chern-Simons term.

Varying with respect to the temporal components $a_0$ and $b_0$ gives the constraints
\begin{align}
\hat{n}_\chi - \hat{B}_b/(2\pi) &= 0, \label{eq:hlr_n_chi} \\
-2\hat{B}_b + (B - \hat{B}_a) &= 0
\end{align}
where $\hat{n}_\chi$ is the density of the $\chi$ fermions, and $\hat{B}_a$ and $\hat{B}_b$ are the magnetic fields of the $a$ and $b$ gauge fields respectively. Meanwhile, varying with respect to $A_0$ allows us to identify the microscopic charge density $\hat{\rho}$ as
\begin{equation}
\label{eq:hlr_rho}
\hat{\rho} = \hat{B}_b/(2\pi).
\end{equation}

We define the excess charge according to
\begin{equation}
\hat{N} = \int (\hat{n}_\chi - n_\chi) d^2 \textbf{x},
\end{equation}
where $n_\chi$ is the charge density in the ground state. We also define the effective magnetic flux that the $\chi$ fermions experience (they only couple directly to $a$) according to
\begin{equation}
\hat{\Phi} = \int \hat{B}_a d^2 \textbf{x},
\end{equation}
Using $2 n_\chi = B$ and the constraints above, it follows that 
\begin{equation}
\label{eq:hlr_effective_flux}
2\hat{N} = -\frac{\hat{\Phi}}{2\pi}.
\end{equation}

Now we need to consider what form the IR symmetry takes. 
Let us first consider what happens if we treat the gauge fields $a$ and $b$ at the mean-field level. In that case, the $\chi$ fermions do not experience a magnetic field in the ground state and therefore form a Fermi liquid for which the charge at each point of the Fermi surface is conserved.
Thus, the emergent symmetry group is still something resembling $\mathrm{LU}(1)$ with generators $\hat{n}(\theta)$. It is not exactly $\mathrm{LU}(1)$, however, because if we consider low-energy excitations that carry a net charge $\hat{N} = \int \hat{n}(\theta) d\theta \neq 0$, where $\hat{n}(\theta)$ are the generators of the emergent symmetry group, then from \eqnref{eq:hlr_effective_flux} we see that the $\chi$ fermions feel a net magnetic flux. From the discussion of Section \ref{subsec:fermi_monopoles} we expect
   the magnetic flux to induce a projective representation of $\LU(1)$ \footnote{One might ask why the magnetic flux does not also induce a chiral flow of quasiparticles along the Fermi surface, violating conservation of $\hat{n}(\theta)$. The point is that in that for a low-energy excitation, the variation of the local charge density, and therefore of the flux of $a$, should be spread over an arbitarily large volume, and hence the effective local magnetic field, which determines the velocity of the chiral flow, actually goes to zero.}. Hence, the
\newcommand{\wtLU}{\mathrm{\widetilde{L}U}}
emergent symmetry group is $\wtLU(1)$, where $\wtLU(1)$ is obtained from $\LU(1)$ by replacing the commutation relation $[\hat{n}(\theta), \hat{n}(\theta')] = 0$ obeyed by the generators $\hat{n}(\theta)$ of $\LU(1)$ with:
\begin{equation}
\label{eq:hlr_commutator}
[\hat{n}(\theta), \hat{n}(\theta')] = 2i\frac{1}{2\pi} \delta'(\theta - \theta') \int \hat{n}(\theta) d\theta,
\end{equation}
and with those elements $U_f \in \mathrm{LU}(1)$ corresponding to functions $ f : S^1 \to \UU(1)$ with non-trivial winding number $W_f = \frac{1}{2\pi} \int \partial_\theta f(\theta)$ excluded. [Because otherwise \eqnref{eq:hlr_commutator} would imply that $U_f \hat{N} U_f^{-1} = (1+2W_f)\hat{N}$, which is mathematically inconsistent unless $W_f = 0$.]
From Eqs.~(\ref{eq:hlr_n_chi}) and (\ref{eq:hlr_rho}), it is clear that the microscopic charge $\hat{Q}$ embeds into the IR symmetry group according to $\hat{Q} \sim \hat{N}$. Meanwhile, the continuous translation symmetry, with generators $\hat{P}_x$ and $\hat{P}_y$, should embed according to  
\begin{equation}
\label{eq:hlr_P}
\hat{P}_\alpha \sim \int k_\alpha(\theta) \hat{n}(\theta) d\theta,
\end{equation}
which defines the Fermi surface $\textbf{k}(\theta)$.

We will not attempt to extend the general discussion of topological terms as in Section \ref{subsec:fermi_thooft} to this more non-trivial symmetry group $\wtLU(1)$. However, we can at least argue for a version of Luttinger's theorem for the composite Fermi liquid. Unlike in the case of no magnetic field, it will turn out not to be necessary to invoke the 't Hooft anomaly of the IR theory; the result actually follows directly from the structure of the emergent symmetry group.

From the action of translations on the low-energy theory, we find that
\begin{equation}
\label{eq:translation_commutator_IR}
[\hat P_x, \hat P_y] \sim \frac{2\mathcal{V}_F}{2\pi} \hat{N},
\end{equation}
where
\begin{equation}
\label{eq:cfl_luttinger}
\mathcal{V}_F = \int k_x(\theta) \frac{dk_y(\theta)}{d\theta} d\theta
\end{equation}
is the volume enclosed by the Fermi surface. We emphasize that here $\hat{N}$ is not just a number but an \emph{operator} which acts non-trivially on the states of the IR theory.

 On the other hand, microscopically we know that we must have the magnetic translation algebra of Eqn. \ref{eq:heisenberg_first} with which we must impose compatibility of \eqnref{eq:translation_commutator_IR}. This  gives
\begin{equation}
\label{eq:composite_luttinger}
\mathcal{V}_F = \pi B,
\end{equation}
which is Luttinger's theorem for the composite Fermi liquid.

Now we should discuss to what extent we expect these considerations to survive once we include gauge fluctuations. The only property of the IR theory that we needed to derive Luttinger's theorem was that the emergent symmetry group is $\wtLU(1)$. The question is whether this emergent symmetry survives the inclusion of gauge fluctuations. Let us argue that, at any rate, the emergent symmetry group cannot be a compact Lie group, which suggests that the full $\wtLU(1)$ symmetry is preserved. We cannot apply the results of Section \ref{sec:generic} to show this, because if we define the magnetic unit cell such that the discrete translations with respect to this unit cell commute, then the filling with respect to this unit cell is supposed to be $\nu = 1/2$ for the composite Fermi liquid, a rational number.

However, we can instead argue as follows. Suppose that the emergent symmetry group $\GIR$ is a compact Lie group. Then let $\hat{p}_x, \hat{p}_y, \hat{q}$ be the elements of the Lie algebra of $\GIR$ corresponding to the microscopic translation symmetries $\hat{P}_x$ and $\hat{P}_y$ and the microscopic charge conservation symmetry $\hat{Q}$. Then since the microscopic symmetries obey the algebra of Eqs.~(\ref{eq:heisenberg_first}--\ref{eq:heisenberg_last}). 
 it follows that the same algebra must be satisfied by $\hat{p}_x, \hat{p}_y, \hat{q}$.
Eqs.~(\ref{eq:heisenberg_first}--\ref{eq:heisenberg_last}) constitute the Heisenberg algebra and cannot be implemented inside of a compact Lie group unless $\hat{q}$ is identically zero. To see this, suppose that $\GIR$ is a compact Lie group. Then it admits a finite-dimensional faithful unitary representation. Since $\hat{p}_x, \hat{p}_y$ and $\hat{q}$ generate a subgroup of $\GIR$, this induces a finite-dimensional representation of the Heisenberg algebra Eqs.~(\ref{eq:heisenberg_first}--\ref{eq:heisenberg_last}). Now we invoke the famous fact that the Heisenberg algebra does not admit any finite-dimensional representations in which $\hat{q}$ acts non-trivially\footnote{To see this, just decompose into irreducible representations. In any such irreducible representation, by Schur's Lemma $\hat{q}$ acts proportionally to the identity $\hat{q} = Q \mathbb{I}$ for some scalar $Q$. Then taking the trace of $[\hat{p}_x, \hat{p}_y] = iB\hat{q}$ shows that $Q = 0$.}.

Therefore, we conclude that, if $\GIR$ is a compact Lie group then the charge conservation symmetry acts trivially on the IR theory; in other words, the system has a charge gap and is an electrical insulator. This would be inconsistent with the physics expected of a composite Fermi liquid. Thus, we expect that the emergent symmetry group remains $\wtLU(1)$ in the presence of gauge fluctuations. Of course, in principle it could go to some totally different group that is larger than any compact Lie group, but this does not seem very plausible, and moreover, as we we discussed in Section \ref{sec:quasi_fermi}, the charge is still conserved at every point in the Fermi surface in the conventional description of the fixed point for a Fermi surface coupled to dynamical gauge field. Thus, assuming that the emergent symmetry group remains $\wtLU(1)$, we find that the composite Fermi liquid indeed obeys Luttinger's theorem. Note that, though it was suggested based on numerics in Ref.~\cite{Balram_1506} that Luttinger's thoerem could be violated in the composite Fermi liquid, more extensive numerics showed this to be a finite size effect \cite{Balram_1707}.

\subsection{Son theory of composite Fermi liquid}
We start from the Lagrangian \cite{Son_1502,Seiberg_1606}
\begin{equation}
\mathcal{L} =i\bar \chi \cancel{D} _{a}\chi - {\frac{2}{4\pi}} b \wedge db + {\frac{1}{2\pi}}a \wedge db - {\frac{1}{2\pi}}A \wedge db + \cdots,
\end{equation}
where $a$ and $b$ are dynamical $\UU(1)$ gauge fields, and the first term represents
 the Lagrangian for a massless relativistic Dirac fermion field $\chi$ coupled to the gauge field $a$ \footnote{We follow the regularization convention adopted in Ref.~\cite{Seiberg_1606}, where the action for the massless Dirac fermion implicitly includes the so-called ``$\eta$ invariant''. Physically this amounts to assuming that the system also contains an additional massive Dirac fermion in addition to the massless one explicitly represented in the Lagrangian.}. The ``$\cdots$'' represents higher-order terms. This is the corrected version \cite{Seiberg_1606} of the Lagrangian originally written by Son \cite{Son_1502}.

Proceeding similarly to Ref.~\cite{Seiberg_1606}, from this Lagrangian one obtains the constraints
\begin{align}
\hat{N} - \frac{1}{4\pi} \hat{\Phi}_a + \frac{1}{2\pi} \hat{\Phi}_b &= 0, \\
-2 \hat{\Phi}_b + \hat{\Phi_a} &= 0, \\
\hat{Q} &= -\frac{1}{2\pi} \hat{\Phi}_b, \label{eq:sonQ}
\end{align}
where $\hat{N}$ is the charge of the $\chi$ fermions, $\hat{\Phi}_a$ and $\hat{\Phi}_b$ are the total magnetic fluxes of the $a$ and $b$ fields respectively, $\hat{Q}$ is the total microscopic charge, and all of $\hat{N}, \hat{Q}, \hat{\Phi}_a, \hat{\Phi}_b$ represent the excess values compared to the ground state. From these equations we find that $\hat{N} = 0$ and $\hat{\Phi}_a = 2\hat{\Phi}_b$.

At the mean-field level the $\chi$ fermions form a Fermi liquid, so the charge at each point on the Fermi surface is conserved. However, we have to impose the constraint that $\hat{N} = \int \hat{n}(\theta) d\theta = 0$. Assuming for the moment that the fermions feel no magnetic flux, this would imply that the emergent symmetry group is $\Omega \UU(1) := \LU(1)/\UU(1)$.

Now if we take into account that the fermions feel an effective magnetic flux given by $\hat{\Phi}_a = 2\hat{\Phi}_b$, we must have the commutation relation
\begin{equation}
\label{eq:son_commutator}
[\hat{n}(\theta), \hat{n}(\theta')] = -2i\frac{1}{2\pi} \delta'(\theta - \theta') \frac{\hat{\Phi}_b}{2\pi}.
\end{equation}
Since the $\hat{n}(\theta)$'s are conserved, it must also be the case that the magnetic flux $\hat{\Phi}_b$ is conserved and (since it is quantized in units of $2\pi$) generates a $\UU(1)_{\mathrm{flux}}$ symmetry.
We can also observe that, for a general function $ f : S^1 \to \UU(1)$, \eqnref{eq:son_commutator} implies that the corrsponding group element $U_f $ satisfies $U_f \hat{N} U_f^{-1} = \hat{N} + 2W_f \hat{\Phi}_b$, where $W_f = \frac{1}{2\pi} \int \partial_\theta f(\theta) d\theta$ is the integer winding number. Since we are imposing that $\hat{N}$ is identically zero (and $\hat{\Phi_b}$ is not), we must exclude $U_f$ with $W_f \neq 0$ from the symmetry group, as in the HLR case.

 Hence, the full emergent symmetry group $\GIR$ is a central extension of $\Omega_0 \UU(1)$ by $\UU(1)_{\mathrm{flux}}$ where $\Omega_0 \UU(1) = \mathrm{L}_0 \UU(1)$, and $\mathrm{L}_0 \UU(1) / \UU(1)$ is the subgroup of $\LU(1)$ corresponding to functions $f : S^1 \to \UU(1)$ with trivial winding number. We note that this is a different group compared to the HLR case, which reflects a physical difference between the Son and HLR theories.

\eqnref{eq:sonQ} tells us how the microscopic $\UU(1)$ symmetry embeds into $\GIR$, and as before the continuous translation should embed according to \eqnref{eq:hlr_P}, which defines the Fermi surface. By similar arguments to the HLR case one again finds Luttinger's theorem \eqnref{eq:composite_luttinger} for the composite Fermi liquid. Since, again, the only property of the IR theory that we used was the nature of the emergent symmetry group, we expect that this result continues to hold in the presence of gauge fluctuations.

Note that our argument for Luttinger's theorem in the Son theory is very different from the one given by Son \cite{Son_1502}, who noted that the density of composite fermions is given by $\rho_{CF} = B/(4\pi)$, and then invoked Luttinger's theorem for a Fermi liquid to relate this to the volume of the Fermi surface. Of course, the composite Fermi liquid is \emph{not} a Fermi liquid beyond mean-field theory, but the arguments of this paper have shown that Luttinger's theorem is in fact far more general.
However, from this point of view it remains unclear whether one ought to have expected a Luttinger's theorem to hold with respect to $\rho_{CF}$, since the composite fermions do not carry charge under any global $\UU(1)$ symmetry. The argument we gave above does not suffer from these difficulties.

\subsection{Spinon Fermi surface}
\label{subsec:spinonFS}
Finally, let us consider a system in 2 spatial dimensions where the IR theory consists of a \emph{spinful} Fermi surface coupled to a dynamical $\UU(1)$ gauge field. This should apply, for example, to Mott insulators exhibiting a ``spinon Fermi surface''.

If we treat the dynamical gauge field at the mean-field level, then the fermions form a spinful Fermi liquid described by the considerations of Section \ref{sec:spinful}. Thus, we have conserved Fermi surface density $\hat{n}(\theta)$. In particular, if we use the observation of that section that a spinful Fermi surface has anomaly coefficient $m=2$ with respect to the total charge $\UU(1)$, we conclude that the $\hat{n}(\theta)$'s must satisfy
 \begin{equation}
 \label{eq:spinon_loop_commutator}
 [\hat{n}(\theta), \hat{n}(\theta')] = -2i \frac{1}{2\pi} \delta'(\theta - \theta') \frac{\hat{\Phi}_b}{2\pi},
 \end{equation}
 where $\hat{\Phi}_b$ is the operator that measures the magnetic flux of the dynamical gauge field. Moreover, we have to impose that the total gauge charge is zero, so $\int \hat{n}(\theta) d\theta = 0$. We see that the emergent symmetry group $\GIR$ takes the form $\GIR = G_{\mathrm{IR,charge}} \times \mathrm{SO}(3)$, where the charge part $G_{\mathrm{IR,charge}}$ is identical to the emergent symmetry group of the Son theory as described above [although for an electrical insulator the microscopic charge $\UU(1)$ will act trivially on the IR theory instead of embedding into the emergent $\UU(1)_{\mathrm{flux}}$], while the $\mathrm{SO}(3)$ part accounts for to the spin-rotation symmetry of the spinons\footnote{It is $\mathrm{SO}(3)$ rather than $\mathrm{SU}(2)$ or $\UU(2)$ because of the constraint that the total gauge charge, i.e. the net spinon number, is zero for any state in the IR theory.}.
 
 Unlike in the quantum Hall systems discussed above, for a Mott insulator one microscopically has only a discrete (commuting) translation symmetry. The discrete translations must embed into $\GIR$ in the usual way:
 \begin{equation}
 \mathbb{T}_\alpha \sim \exp\left(-i\int k_\alpha(\theta) \hat{n}(\theta) \right),
 \end{equation}
 which defines the Fermi surface. From \eqnref{eq:spinon_loop_commutator}, the requirement that $\mathbb{T}_x$ and $\mathbb{T}_y$ commute give a non-trivial constraint:
 \begin{equation}
 \label{eq:spinon_luttinger_pre}
 \frac{\mathcal{V}_F}{(2\pi)^2} = \mbox{$0$ or $\frac{1}{2}$} \quad (\mathrm{mod} \, 1),
 \end{equation}
 where $\mathcal{V}_F$ is the volume enclosed by the Fermi surface.
 
 We can obtain more information by considering the 't Hooft anomaly of the IR theory and invoking the filling constraints of Section \ref{subsec:2d_filling_constraint}. We assume that the system microscopically has spin-rotation symmetry, so that the full microscopic internal symmetry group is $\UU(2)$, and in particular there is a $\UU(1)_\uparrow \times \UU(1)_\downarrow$ subgroup as discussed in Section \ref{sec:spinful}, and we can define the corresponding fillings $\nu_\uparrow$ and $\nu_\downarrow$ which satisfy $\nu_\uparrow = \nu_\downarrow := \nu$ and $\rho = 2\nu$, where $\rho$ is the filling of the total charge. For an insulating state without symmetry fractionalization (see Section \ref{sec:higher_form}), the microscopic charge $\UU(1)$ acts trivially on the IR theory and hence from the general theory of filling constraints we find that $\rho$ must be an integer. On the other hand, for a band insulator, $\rho$ must be an even integer (taking into account the two spin components). Since a Mott insulator by definition is an insulator that exists at fillings where band insulators are impossible, we conclude that $\rho$ for a Mott insulator must be an odd integer and hence $\nu = 1/2 \,(\mathrm{mod} \, 1)$.
 
From the general discussion of Section \ref{subsec:2d_filling_constraint}, we now see that if we apply a flux $\Phi_\uparrow$ of $\UU(1)_\uparrow$ and flux $\Phi_\downarrow$ of $\UU(1)_\downarrow$, it must be the case that translations act projectively on it,
\begin{equation}
\label{eq:spinon_translation_commutator}
\mathbb{T}_x \mathbb{T}_y \mathbb{T}_x^{-1} \mathbb{T}_y^{-1} = (-1)^{(\Phi_\uparrow + \Phi_\downarrow)/(2\pi)}
\end{equation}
 Now we need to see how this comes about in the IR theory. We return to \eqnref{eq:spinon_loop_commutator} but take into account that, at the mean-field level where we can treat the dynamical gauge field as a background, the fermions now feels an effective $\UU(1)_\uparrow$ flux of $\hat{\Phi}_b + \Phi_\uparrow$ and an effective $\UU(1)_\downarrow$ flux of $\hat{\Phi}_b + \Phi_\downarrow$. Hence, \eqnref{eq:spinon_loop_commutator} gets generalized to\footnote{We remark that if we set $\Phi_\uparrow = \Phi_\downarrow = 2\pi$, then the effect on \eqnref{eq:spinon_thooft} can be absorbed by redefining $\hat{\Phi}_b \to \hat{\Phi}_b - 2\pi$. Since $\hat{\Phi}_b$ always has such an ambiguity in terms of its action on a background flux (such ambiguities are the reason why fluxes are able to carry projective representations in the first place), we conclude that $\Phi_\uparrow = \Phi_\downarrow = 2\pi$ does not carry a non-trivial projective representation. This corresponds to the statement that the projective representations of $G_{\mathrm{IR,charge}}$ are $\mathbb{Z}_2$ classified, whereas the projective representations of $\LU(1)$ (see Section \ref{subsec:fermi_monopoles}) were $\mathbb{Z}$ classified. These results are consistent with the fact that the total microscopic charge $\UU(1)$ acts trivially on the IR theory since it is an electrical insulator.}
 \begin{equation}
 \label{eq:spinon_thooft}
 [\hat{n}(\theta), \hat{n}(\theta')] = -i \frac{1}{2\pi} \delta'(\theta - \theta') \frac{2\hat{\Phi}_b + \Phi_\uparrow + \Phi_\downarrow}{2\pi},
 \end{equation}
 which indeed gives \eqnref{eq:spinon_translation_commutator} provided that
 \begin{equation}
 \label{eq:spinon_luttinger}
 \frac{\mathcal{V}_F}{(2\pi)^2} = \frac{1}{2} \quad (\mathrm{mod} \, 1),
 \end{equation}
 which is Luttinger's theorem for a spinon Fermi surface
 
 As usual, we must consider to what extent we expect these results to hold beyond mean-field theory. The result \eqnref{eq:spinon_luttinger_pre} only depends on the structure of $\GIR$, so we must ask whether $\GIR$ remains the same upon including gauge fluctuations. Since we are assuming the system is a Mott insulator, with integer filling of the microscopic $\UU(1)$ charge, we cannot use the results of Section \ref{sec:generic}. Still, as we mentioned in the introduction to Section \ref{sec:quasi_fermi}, the charge on each patch of the Fermi surface remains conserved in the usual approach to Fermi surfaces coupled to a dynamical gauge field, suggesting that $\GIR$ is indeed robust to gauge fluctuations and that \eqnref{eq:spinon_luttinger_pre} is satisfied generally. To get the stronger result \eqnref{eq:spinon_luttinger}, we also have to assume that the 't Hooft anomaly of the IR theory, captured by \eqnref{eq:spinon_thooft}, remains unchanged by gauge fluctuations. However, the classification of 't Hooft anomalies is discrete -- more concretely, there is no way to continuously deform \eqnref{eq:spinon_thooft} without leading to inconsistencies -- so it seems unlikely that the 't Hooft anomaly would be affected by gauge fluctuations. Moreover from \eqnref{eq:spinon_luttinger_pre} the only possibility other than \eqnref{eq:spinon_luttinger} would be $\mathcal{V}_F/(2\pi)^2 = 0 \, (\mathrm{mod} \,1)$, in which case the Fermi surface is singular and presumably unstable.

\section{Impossibility of Fermi arcs and other constraints on the Fermi surface}
\subsection{Fermi arcs}
\label{subsec:fermi_arcs}
In the celebrated pseudogap normal phase of the cuprate high temperature superconducting  materials, ARPES measurements observe ``Fermi arcs'' rather than a closed Fermi surface (see, eg, Ref.~\cite{keimer2015quantum} for a review). One possible explanation is that these materials still have a closed Fermi surface, but for some reason parts of the Fermi surface are not easily visible to the ARPES probe.
On the other hand, 3-dimensional Weyl semimetals exhibit Fermi arcs at their 2-dimensional boundaries \cite{Wan_2011}, but these can only exist at the boundary of a gapless 3-dimensional bulk.
It is an important fundamental question to ask whether a system with true Fermi arcs (not related to the above mechanisms) can ever exist.

Specifically, in this section we will ask whether it is possible to extend the framework of EFLs to substitute a closed Fermi surface with a Fermi arc. We will find that there is an obstruction to doing so. Therefore, in light of the results of Section \ref{sec:generic}, the only remaining route to Fermi arcs, assuming that they can exist in systems at generic filling, would be to find some other way to have an emergent symmetry group larger than any compact Lie group, different from what occurs in EFLs.

Let us consider a theory in $d$ spatial dimensions described by a suitable generalization of the EFL class discussed in Section \ref{sec:quasi_fermi}. Specifically, we imagine that the Fermi surface is a parameterized by a $(d-1)$-dimensional manifold $F$; but, instead of requiring that $F$ be a closed manifold as before, we allow it to be a manifold with boundary. Then we assume that the emergent symmetry group is $L^F \UU(1)$, the space of smooth maps from $F$ to $\UU(1)$. We can discuss 't Hooft anomalies as before, and again we find that the 't Hooft anomaly should be described by inflow from a Chern-Simons term on a higher-dimensional manifold, for example for $d=2$:
\begin{equation}
S[A] = \frac{m}{24\pi^2} \int_{M_{+} \times F} A \wedge dA \wedge dA
\end{equation}
where $M_{+}$ is an extension of the $d+1$-dimensional space-time manifold to a $d+2$-dimensional manifold. Now, however, we immediately encounter a problem if $F$ has a boundary $\partial F$, because the Chern-Simons term is not gauge-invariant -- that is, the system has a 't Hooft anomaly -- in the presence of boundaries (unless $m=0$). At the boundary of $M_+$, the gauge variation is canceled by the 't Hooft anomaly of the EFL, but since the interior of $M_{+}$ is supposed to correspond to a gapped SPT bulk, there is no way to cancel the anomaly at $M_{+} \times \partial F$.

Should we be worried about this anomaly? In general, an uncanceled 't Hooft anomaly signals a violation of charge conservation. We can see this concretely in spatial dimension $d=2$ if we assume a particular model of a Fermi arc which has Fermi liquid-like quasiparticles. Then in the presence of a magnetic field, the quasiparticles have a chiral flow along the Fermi surface as discussed in Section \ref{subsec:fermi_monopoles}; but when they reach the end of the Fermi arc they have no choice but to simply disappear.
This violates conservation of microscopic charge (if the microscopic charge $q$ of a quasiparticle is nonzero)  and conservation of momentum (even when $q=0$).
Beyond this simple model, the 't Hooft anomaly indicates the same issue with lack of charge/momentum conservation, and therefore we must impose that, if $F$ has a boundary, then $m=0$. However, by the discussion of Section \ref{subsec:fermi_monopoles} this would imply the filling $\nu = 0 \, \mathrm{mod} \, 1$, so it cannot correspond to a state that exists at generic filling.

Of course, this is only a proof of the impossibility of Fermi arcs if we assume that the system is described by the EFL framework. We cannot make more general statements, except to recall from Section \ref{sec:generic} that any system that can exist at generic filling yet is not an EFL needs to find some other way of having an emergent symmetry group that is larger than any compact Lie group.

Finally, we should address one physical mechanism that one might think of to generate Fermi arcs, where the system couples to a critical boson at some wave-vector $\mathbf{q}$.  Then, in the quasiparticle picture described above, one could imagine that a quasiparticle, upon reaching the end of a Fermi arc at momentum $\mathbf{k}$, jumps to the beginning of another arc at momentum $\mathbf{k} - \mathbf{q}$ while emitting a boson of momentum $\mathbf{q}$. In the language of emergent symmetries, this would correspond to introducing another operator $\hat{N}_{\mathrm{boson}}$ that generates a $\mathrm{U}(1)$ group, and then writing the translation symmetry as
\begin{equation}
\mathbb{T}_\alpha = \exp\left(-i\int k_\alpha(\theta) \hat{n}(\theta) d\theta\right) \exp\left(-iq_\alpha \hat{N}_{\mathrm{boson}}\right).
\end{equation}
This would amount to postulating a slightly different emergent symmetry group, namely a modification of $\mathrm{LU}(1)$ where one allows the corresponding functions $f : S^1 \to \mathrm{U}(1)$ to have at least a finite number of discontinuities.
 The question is whether there is a meaningful non-trivial anomaly that one could write for such a group, as is necessary if it is to describe a system with non-integer microscopic filling. It seems likely that there is not, for reasons that we will explain at the end of the next subsection.

\subsection{More general constraints}

\begin{figure}

\newcommand{\mycmark}{\textcolor{green!70!black}{\cmark}}
\newcommand{\allowedconfig}{\node at (0.9,0.1) {\LARGE \mycmark};}
\newcommand{\myxmark}{\textcolor{red}{\xmark}}
\newcommand{\disallowedconfig}{\node at (0.9,0.1) {\LARGE  \myxmark};}
\newcommand{\mylabel}[1]{\node at (0.08,0.92) {(#1)};}
\begin{tikzpicture}[scale=4]
\newcommand{\fsarcradius}{0.35}

\mylabel{a}

\draw [orientedfs,domain=15:75] plot ({0.5+\fsarcradius*cos(\x)},{0.5+\fsarcradius * sin(\x)});
\draw [orientedfs,domain=105:165] plot ({0.5+\fsarcradius*cos(\x)},{0.5+\fsarcradius * sin(\x)});
\draw [orientedfs,domain=195:255] plot ({0.5+\fsarcradius*cos(\x)},{0.5+\fsarcradius * sin(\x)});
\draw [orientedfs,domain=195:255] plot ({0.5+\fsarcradius*cos(\x)},{0.5+\fsarcradius * sin(\x)});
\draw [orientedfs,domain=285:345] plot ({0.5+\fsarcradius*cos(\x)},{0.5+\fsarcradius * sin(\x)});

\node at (0.19,0.19) {\fsmarking{+1}};
\node at (0.81,0.19) {\fsmarking{+1}};
\node at (0.19,0.81) {\fsmarking{+1}};
\node at (0.81,0.81) {\fsmarking{+1}};

\draw[brillouinzone] (0,0) -- (0,1) -- (1,1) -- (1,0) -- cycle;

\disallowedconfig

\newcommand{\subfigshift}{1.1}
\begin{scope}[shift={(\subfigshift,0)}]
\newcommand{\fsarcradiustwo}{0.3}

\mylabel{b} 

\draw [orientedfs,domain=0:180] plot ({0.5+\fsarcradiustwo*cos(\x)},{0.5+\fsarcradiustwo * sin(\x)});
\draw [orientedfs,domain=180:360] plot ({0.5+\fsarcradiustwo*cos(\x)},{0.5+\fsarcradiustwo * sin(\x)});

\node at (0.5,0.87) {\fsmarking{+1}};
\node at (0.5,0.13) {\fsmarking{+2}};

\draw[ferminode] (0.5-\fsarcradiustwo,0.5) circle(\ferminoderadius);
\draw[ferminode] (0.5+\fsarcradiustwo,0.5) circle(\ferminoderadius);

\draw[brillouinzone] (0,0) -- (0,1) -- (1,1) -- (1,0) -- cycle;

\disallowedconfig
\end{scope}
\begin{scope}[shift={(0,-\subfigshift)}]
\mylabel{c}
\draw[orientedfs] plot[smooth,domain=0:1,variable=\t] ({0.5+0.1*cos(360*\t)},{\t});
\draw[brillouinzone] (0,0) -- (0,1) -- (1,1) -- (1,0) -- cycle;
\node at (0.5,0.53) {\fsmarking{+1}};
\disallowedconfig
\end{scope}

\begin{scope}[shift={(\subfigshift,-\subfigshift)}]
\mylabel{d}
\fill[fermifill] plot[smooth,domain=0:1,variable=\t] ({0.1*cos(360*\t) + 0.25},{\t}) -- 
      plot[smooth,domain=1:0,variable=\t] ({-0.1*cos(360*\t) + 0.75}, {\t});
      \draw[brillouinzone] (0,0) -- (0,1) -- (1,1) -- (1,0) -- cycle;
      
      \draw[orientedfs] plot[smooth,domain=0:1,variable=\t] ({0.1*cos(360*\t) + 0.25},{\t});
      \draw[orientedfs] plot[smooth,domain=1:0,variable=\t] ({-0.1*cos(360*\t) + 0.75}, {\t});
      \node at (0.2,0.85) {\fsmarking{+1}};
            \node at (0.8,0.85) {\fsmarking{+1}};
            \allowedconfig
\end{scope}

\begin{scope}[shift={(0,-2*\subfigshift)}]
\mylabel{e}
\draw[orientedfs,fermifill] (0.5,0.5) circle(0.3);
\draw[brillouinzone] (0,0) -- (0,1) -- (1,1) -- (1,0) -- cycle;
\node at (0.75,0.2) {\fsmarking{+1}};
\allowedconfig
\end{scope}

\begin{scope}[shift={(\subfigshift,-2*\subfigshift)}]
   
\mylabel{f}
\fill[fermifilldouble] (0.5,0.5) circle(0.35);
\draw[brillouinzone] (0,0) -- (0,1) -- (1,1) -- (1,0) -- cycle;

\begin{scope}
\clip (0.5,0.5) circle(0.35);
\draw[fermisurface,fermifill, decoration={
    markings,
    mark=at position 0.755 with {\arrow{<}}},
    postaction=decorate] (0.5,0.9) circle(0.35);
\end{scope}

\draw[fermisurface,decoration={
    markings,
    mark=at position 0.255 with {\arrow{>}}},
    postaction=decorate] (0.5,0.5) circle(0.35);
\draw[fermisurface,decoration={
    markings,
    mark=at position 0.75 with {\arrow{>}}},
    postaction=decorate] (0.5,0.5) circle(0.35);
    
\node at (0.5,0.09) {\fsmarking{+2}};
\node at (0.5,0.49) {\fsmarking{+1}};
\node at (0.5,0.91) {\fsmarking{+1}};
\allowedconfig
\end{scope}
\end{tikzpicture}

\caption{\label{fig:fermi}Allowed (\mycmark) and disallowed (\myxmark) Fermi surface configurations in a two-dimensional Brillouin zone. Each Fermi surface segment has an associated anomaly coefficient $m$, shown by the number next to the segment. An arbitary choice of orientation of the segment (denoted with arrows) determines the convention for the sign of $m$. For allowed configurations, it is always possible to define an integer function $n(\textbf{k})$ -- depicted here by the different shading colors of different regions -- such that the boundary between two regions carries the anomaly coefficient determined by the difference of $n(\textbf{k})$ between the two regions. In a Fermi gas, $n(\textbf{k})$ represents the number of occupied bands at the point $\textbf{k}$.}
\end{figure}
Returning to EFLs, we can can also obtain some more general constraints than just the impossibility of Fermi arcs. Indeed, the most general statement will be that the Fermi surface must ``enclose'' a volume in a suitable generalized sense. This obviously fails for Fermi arcs, but it also rules out Fermi surfaces which are closed but nevertheless fail to enclose a volume in the Brillouin zone, e.g. see Figure \ref{fig:fermi}(c).

Let us imagine generally that we construct the Fermi surface out of a collection of ``patches'' which might be glued together along their boundaries and might not all have the same values of the anomaly coefficient $m$. Then the requirement that the Chern-Simons term be gauge-invariant places a constraint on how patches are allowed to be glued together. For example, in two spatial dimensions the rule is that the sum of the anomaly coefficients for all the segments of Fermi surface (taking into account the orientation of the segment) intersecting at a given point must be zero. This is the most general sense in which the Fermi surface must be ``closed''. Thus, the configuration shown in Figure \ref{fig:fermi}(f) is allowed, but the one shown in Figure \ref{fig:fermi}(b) is not.

In fact, we can make an even stronger statement: configurations such as the one shown in Figure \ref{fig:fermi}(c), even though the Fermi surface is closed, are also disallowed (except possibly in the case where the Fermi surface quanta are uncharged, i.e. $q=0$). In this case, the Chern-Simons term is gauge-invariant, but there is still an issue with the implementation of translation symmetry. Recall that for $d=2$ the translation symmetry $\mathbb{T}_x$ and microscopic charge operator $\hat{Q}$ can be expressed in terms of the $\LU(1)$ generators $\hat{n}(\theta)$ as
\begin{align}
\mathbb{T}_x &\sim \exp\left(-i \int k_x(\theta) \hat{n}(\theta) d\theta\right) \\
\hat{Q} &\sim q\int \hat{n}(\theta) d\theta.
\end{align}
In light of the commutation relations \eqnref{eq:kac_moody}, when acting on a $2\pi$ flux we have
\begin{equation}
\mathbb{T}_x \hat{Q} \mathbb{T}_x^{-1} \sim \hat{Q} + mq W_x,
\end{equation}
where
\begin{equation}
W_x = \frac{1}{2\pi} \int \frac{dk_x(\theta)}{d\theta} d\theta
\end{equation}
is a winding number of the Fermi surface on the Brillouin torus.
However, microscopically the translation operator should commute with the charge, so we must impose that $mqW_x = 0$. Another way to see this (which generalizes more easily to higher dimensions) would be impose conservation of microscopic charge on the electric field response discussed in Section \ref{subsec:electric_field}.

From such considerations, we can obtain the most general constraint on the Fermi surface. We find that it must be the case that each point
$\mathbf{k}$ in the Brillouin zone can be assigned an integer $n(\textbf{k})$, such that the boundary between two regions with different $n(\textbf{k})$ carries a Fermi surface with anomaly coefficient $m$ given by the difference $\Delta n$.
In a Fermi gas, $n(\textbf{k})$ has a natural interpretation: it describes the number of bands that are occupied at the point $\textbf{k}$. What we have found is that such an $n(\textbf{k})$ can always be defined in any consistent EFL. Note that in general the Fermi surface only determines $n(\textbf{k})$ up to addition of a $\textbf{k}$-independent integer.
Repeating the analysis of Section \ref{subsec:fermi_monopoles} in this more general setting, we can also now formulate the most general version of Luttinger's theorem:
\begin{equation}
\frac{q}{(2\pi)^d} \int d^2 \textbf{k} n(\textbf{k}) = \nu \quad (\mathrm{mod} \, 1),
\end{equation}
where the integral is over the whole Brillouin zone.

We remark that that a compact mathematical way to state the above results is if we think of the Fermi surface, together with associated anomaly coefficients $m$ of the different patches, as defining a \emph{chain} (in the homology theory sense) $\omega \in C_{d-1}(\mathbb{T}^d, \mathbb{Z})$, where $\mathbb{T}^d$ represents the Brillouin zone. Then the requirement that the Chern-Simons term be gauge invariant implies that this chain is closed, i.e. it has trivial boundary, $\partial \omega = 0$; whereas, from the winding number arguments, we find that this chain must be \emph{exact}, $\omega = \partial \kappa$ for some $\kappa$.

Finally, we note that in retrospect the result should have been clear, because however the microscopic translations act in the IR theory (which is specified by specifying the Fermi surface), it must be possible to compute a filling $\nu$ by considering the anomaly of the emergent symmetry group. For Fermi surfaces that do not enclose a volume, there is evidently no meaningful way to define an associated $\nu$. It is for this reason that we do not expect there to be any non-trivial anomalies associated with the modified symmetry group described in the last paragraph of Section \ref{subsec:fermi_arcs}, again because it would lead to Fermi surfaces that do not enclose a volume and therefore have no meaningful $\nu$ associated with them.
\section{The role of fractionalization}
\label{sec:higher_form}
The analysis of Sections \ref{sec:filling_constraints} and \ref{sec:generic} is not sufficient to describe systems which admit ``fractionalized'' excitations, i.e.\ localized excitations which cannot be created locally. To see this, consider a specifc and familiar example, namely a system of bosons on a lattice  in two spatial dimensions with microscopic filling $\nu = 1/2$. A symmetry preserving ground state that is allowed is one where the IR theory  is equivalent to a $\mathbb{Z}_2$ gauge theory\cite{read1991large,wen1991mean,senthil2000z}. 
Let us first recall the physical properties of this state and how it is allowed at filling $\nu = 1/2$ filling. The key property is that the ground state is topologically ordered and admits fractionalized excitations, which we can label as $1, e, m, f$, where $e$ is the $\mathbb{Z}_2$ gauge charge, $m$ is the $\mathbb{Z}_2$ gauge flux, and $f$ is the composite of $e$ and $m$. Moreover, these excitations can exhibit ``symmetry fractionalization'' of the microscopic symmetries, which roughly means that they carry fractional quantum numbers. Specifically, in the $\nu = 1/2$ Mott insulator, one of the excitation types (say $e$) carries \emph{half-quantized} charge under the microscopic $\UU(1)$ symmetry. This is allowed because  these excitations are fractionalized and can only be created in pairs. Moreover, the $m$ particle also experiences translational symmetry fractionalization\cite{jalabert1991spontaneous,senthil2000z,sachdev1999translational}, which is to say that $\mathbb{T}_x \mathbb{T}_y \mathbb{T}_x^{-1} \mathbb{T}_y^{-1} = -1$ when acting on a single $m$ excitation. Again, this does not contradict the fact that translations commute microscopically, because the global number of $m$ particles is always even. Physically, one can think of the translational symmetry fractionalization as saying that the ground state has an $e$ particle in each unit cell; the $-1$ phase factors then comes from the $-1$ braiding phase for $m$ going around $e$. But then since the $e$ particle has $1/2$ charge under the microscopic $\UU(1)$, these $e$ particles contribute a filling $\nu = 1/2$.  It is also instructive to ask how this state is compatible with the momentum balance constraints from flux threading introduced by Oshikawa \cite{Oshikawa_0002}.  The point is that adiabatic threading of a $2\pi$ flux through one cycle of a torus shifts the ground state from one topological sector to another which is at a different total crystal momentum\cite{Misguich_0112}. This change of the ground state momentum is then able to match what is expected from flux threading. 

How should we think about the emergent IR symmetry of this state?  Deep in the IR, at scales below the gap to all quasiparticle excitations, we can describe it as a purely topological theory. Naively the only emergent symmetry of the resulting theory might seem to be $\GIR = \mathbb{Z}_2$, generated by the ``electromagnetic duality'' which exchanges a $\mathbb{Z}_2$ gauge charge with a gauge flux; however, for the theory under consideration, neither translations nor the microscopic charge $\UU(1)$ map into this duality symmetry. Therefore, following the analysis of Section \ref{sec:filling_constraints} we would immediately conclude that $\nu = 0$ mod 1. How then, are we to make sense, in the context of our general framework, of the fact that such a IR theory actually can exist at $\nu = 1/2$?

We have to think about what went wrong in the argument of Section \ref{subsec:2d_filling_constraint}. The assumption we made there is that a $2\pi$ flux of the microscopic symmetry generated by the microscopic charge $\hat{Q}$ will correspond in the IR theory to a $2\pi$ flux of $\hat{Q}_{\mathrm{IR}}$, the generator of the corresponding IR symmetry. In any system with a charge gap, we will have that $\hat{Q}_{\mathrm{IR}}$ is identically zero, so we would conclude that such an object would be completely trivial, leading to a trivial filling. However, in the $\mathbb{Z}_2$ gauge theory example, a $2\pi$ flux of the microscopic symmetry is evidently a non-trivial object since acting on it we have $\mathbb{T}_x \mathbb{T}_y \mathbb{T}_x^{-1} \mathbb{T}_y^{-1} = -1$. In fact, as a defect of the IR theory it is equivalent to an $m$ particle. In general, an interpretation of non-trivial symmetry fractionalization is precisely the statement that certain flux configurations of the microscopic symmetry look like topologically non-trivial excitations of the IR theory. Therefore, it will always be necessary to take into account symmetry fractionalization when computing the filling.

Let us first recall a few cases from previous works where symmetry fractionalization of the microscopic symmetry affects the filling.
The first case generalizes the deconfined $\mathbb{Z}_2$ gauge theory for bosons at $\nu = 1/2$ (as discussed above) to situations where the IR theory is an arbitrary gapped (2+1)-D gapped topological phase. The computation of $\nu$ has been discussed in Refs.~\cite{Paramekanti_0406,Bonderson_1601}. The basic idea is that the translational symmetry fractionalization is described by a class in $H^2(\mathbb{Z} \times \mathbb{Z}, A) \cong A$, where $A$ is the group of Abelian anyons. Roughly, we can imagine that there is a background $a$ particle sitting in each unit cell. Meanwhile, the symmetry fractionalization of $\UU(1)$ is described by an Abelian anyon $b$, which we can think of as the anyonic charge carried by a $2\pi$ flux. Because of the background of $a$ particles, if $b$ particles have non-trivial braiding phase $S_{ab}$ with $a$ particles, then they carry a projective representation of translation symmetry. Thus, we find that the filling is given by
\begin{equation}
\nu = \frac{1}{2\pi i} \log S_{ab} \quad (\mathrm{mod} \, 1).
\end{equation}
A nice way to interpret this equation is that since a $2\pi$ flux has non-trivial braiding $S_{ab}$ with an $a$ particle, then the latter must carry fractional charge $q_a = \frac{1}{2\pi i} \log S_{ab} \; (\mathrm{mod} \, 1)$ under the $\UU(1)$ symmetry. Therefore, since we have an $a$ particle in each unit cell, the average charge per unit cell is $\nu = q_a$.

Building on the previous case, we can imagine a scenario where the IR theory is equivalent to stacking a gapless theory $\mathcal{C}_{\mathrm{gapless}}$ without any topologically non-trivial excitations and a gapped topological theory $\mathcal{C}_{\mathrm{gapped}}$. In that case, the microscopic filling $\nu$ is simply the sum of the contributions from each component:
\begin{equation}
\nu = \nu_{\mathrm{gapless}} + \nu_{\mathrm{gapped}},
\end{equation}
where $\nu_{\mathrm{gapless}}$ is computed according to Section \ref{sec:filling_constraints} and $\nu_{\mathrm{gapped}}$ is computed according to the previous paragraph.
An example of such a case is the  FL* Fermi liquid discussed in \cite{Senthil_0209, Senthil_0305}.

If we want to construct a more general theory beyond these examples, we might start to become a bit more uncertain about what exactly we mean in general by a ``topologically non-trivial excitation''.  Instead, we will introduce an alternative formalism that is more readily generalizable \cite{Etingof_0909,Barkeshli_1410,Benini_1803,Hsin_1904}. To motivate this formalism, note that in the $\mathbb{Z}_2$ gauge theory example,  $\GIR = \mathbb{Z}_2$ is not in fact the only emergent symmetry of the IR theory.  Indeed, gapped topological phases in two spatial dimensions, such as $\mathbb{Z}_2$ gauge theory, have emergent \emph{1-form} symmetries \cite{Gaiotto_1412}. This is a general feature of gapped topological phases. In fact one can think about the 1-form symmetry operators, which act on closed strings when $d=2$, as becoming the open string operators that create topologically non-trivial excitations at the endpoints when the string is broken.

Once one has introduced the emergent higher-form symmetries, one can phrase symmetry fractionalization entirely in terms of these symmetries without needing to talk about excitations at all. In general, a $d$-dimensional system can have emergent 0-form, 1-form, ..., up to $(d-1)$-form symmetries. These symmetries combine into a structure called a ``$d$-group'' \cite{Baez_0307,Gukov_1307,Kapustin_1309,Benini_1803}, which we denote $\dblul{G}_{\mathrm{IR}}$. Meanwhile, the microscopic symmetries can also be thought of as a forming a $d$-group $\dblul{G}_{\mathrm{UV}}$, although normally we will consider the case where the microscopic symmetries are all 0-form and there are no non-trivial higher-form symmetries. Then the correct way to describe the action of the microscopic symmetries on the IR theory is through an $n$-group homomorphism
\begin{equation}
\dblul{\rho} : \dblul{G}_{\mathrm{UV}} \to \dblul{G}_{\mathrm{IR}}.
\end{equation}
One might think if the microscopic symmetries are all 0-form, then we do not need to worry about the higher-form part of $\dblul{G}_{\mathrm{IR}}$, but it turns out that even in this case the homomorphism $\dblul{\rho}$ contains not just the data of how the microscopic symmetries map into emergent 0-form symmetries, but also the data of the symmetry fractionalization of the microscopic symmetries, which can be interpreted as relating to the interplay between the microscopic symmetries and the emergent higher-form symmetries.

It is instructive to consider another case, in addition to gapped topological phases, where higher-form symmetries modify the relation between microscopic filling and the properties of the IR theory, namely superfluids, which spontaneous break the charge $\mathrm{U}(1)$ symmetry. A superfluid in $d$ spatial dimensions has an emergent $\mathrm{U}(1)$ $(d-1)$-form symmetry, since the winding number of the order parameter along any closed curve is conserved in the IR theory. Moreover, a superfluid is a compressible phase which can exist at any filling. One can show that the filling is encoded in the symmetry fractionalization of the translation symmetry by the $(d-1)$-form symmetry. Physically this is reflected in the fact that a vortex sees an ``effective magnetic field'' given by the charge density \cite{lannert2001quantum,balents2005putting}.

Finally, it is necessary for us to address the question of whether symmetry fractionalization can allow for the spirit of Theorem \ref{thm:mainthm} to be  bypassed; that is, is it possible for a system whose emergent symmetry is described by a compact Lie $d$-group to exist at \emph{generic} filling? The superfluid example above shows that we have to restrict to systems in which the microscopic $\UU(1)$ and translation symmetries are not spontaneously broken. In that case, we conjecture that a compact Lie $d$-group symmetry is incompatible with generic filling. To motivate this, firstly, we expect that in order to have generic filling, at the least the system needs to have an emergent continuous $(d-1)$-form symmetry (e.g. a 1-form symmetry in $d=2$ or a 2-form symmetry in $d=3$). The usual way to get emergent continuous higher form symmetries would be to have an emergent deconfined gauge field. However, in $d=3$, for example (we return to the more subtle $d=2$ case below), we know that a deconfined $\UU(1)$ gauge theory has two 1-form symmetries (electric and magnetic) but no 2-form symmetry. We can contrast this with gapped topological phases in $d=3$, which typically have both a 1-form symmetry and 2-form symmetry.

We also recall that a deconfined $\UU(1)$ gauge theory in $d=2$ spatial dimensions is dual to a superfluid, and that the only way for such a theory to be stable (other than by coupling to charged gapless fermions, which would explicitly break the 1-form symmetry) is if the monopole events (in the superfluid language, charge creation events) are forbidden by one of the microscopic symmetries, in which case this symmetry will be spontaneously broken.

Therefore it appears, at least in these well-known mechanisms for obtaining emergent higher-form symmetries, that there is no way to obtain generic filling without spontaneously breaking either the microscopic $\UU(1)$ or translation symmetry. Whether there is some less familiar theory in $d$ spatial dimensions with an emergent $(d-1)$-form symmetry that can achieve this, we leave as an open question.

\section{Discussion: relation to ``violations of Luttinger's theorem''}
In this work, we have argued that Luttinger's theorem, or a suitable generalization, in fact should hold in great generality, both for Fermi liquids and at least a certain class of non-Fermi liquids. Therefore, we must contend with past works that have reported a ``violation of Luttinger's theorem'' in various settings.

In certain cases \cite{Dzyaloshinskii__2003,Stanescu_0508,Yang_0602}, attempts have to been made to formulate Luttinger's theorem in a way that could apply beyond perturbation theory in the interaction strength by expressing it in terms of the volume enclosed both by traditional Landau Fermi surfaces on
 which the electron Green's function $G(\mathbf{k},\omega)$ has a pole at $\omega=0$ and by ``Luttinger surfaces'' in which $\mathrm{Re} G(\mathbf{k},\omega)$ changes sign via a zero of $G(\mathbf{k},0)$ instead of a pole. However, it has been shown that no such result holds in general \cite{Rosch_0602,Dave_1207}. In our work, we have defined a Fermi surface in a general system by its emergent symmetries, rather than by any particular properties of the electron Green's function. Our version of Luttinger's theorem is therefore very different from the one disproven by Refs.~\cite{Rosch_0602,Dave_1207}.
 
A potential violation of Luttinger's theorem has been discussed in the context of holographic models \cite{Hartnoll_1011,HartnollBlackHoles,Iqbal_1112,Hashimoto_1203}. For a system that is holographically dual to a gravitational theory that includes a black hole, the charge hidden behind the event horizon does not contribute to the volume enclosed by the Fermi surface. It is not clear, however, whether such a scenario can ever actually occur at zero temperature, since an event horizon would be associated with a nonzero entropy density, seemingly in tension with the Third Law of Thermodynamics. If somehow the event horizon could be stabilized at zero temperature, however, one could envision it being associated with some exotic emergent symmetry group that is larger than any compact Lie group but different from that of a Fermi liquid, in which case a violation of Luttinger's theorem would be compatible with the considerations of this paper.

Next we discuss the numerical study of the $t-J$ model in Ref.~\cite{Putikka_9803}. By an analytic continuation of a high-temperature series expansion, Ref.~\cite{Putikka_9803} computed the properties of the $t-J$ model down to around temperature $T \sim 0.2J$ and found that, for several different definitions of the ``Fermi surface'' at finite temperature, the volume inside the Fermi surface appears to be plateauing at a value less than the value predicted by Luttinger's theorem. However, our arguments in favor of Luttinger's theorem only applied at zero temperature. Since the ground state of the $t-J$ model is unknown and could well exhibit exotic physics such as high-temperature superconductivity, the significance of these finite-temperature observations remains unclear, especially given the ambiguity of the definition of the Fermi surface at finite temperature and the fact that the results of Ref.~\cite{Putikka_9803} also contain hints of Fermi arcs (see Section \ref{subsec:fermi_arcs}). 

Finally, we discuss Ref.~\cite{Cappelluti_9906}, which studied an $\SU(N)$ generalization of the $t-J$ model in a $1/N$ expansion where a Fermi liquid mean-field solution becomes exact in the limit as $N \to \infty$. Ref.~\cite{Cappelluti_9906} found that the fluctuations appear to cause a violation of Luttinger's theorem to leading order in $1/N$. According to our general arguments, a violation of Luttinger's theorem must correspond to a change in the emergent symmetry group (such that the system is no longer an EFL). However, the mean-field solution is an EFL, in particular a Fermi liquid, and it seems unlikely that fluctuations could give rise to a \emph{new} emergent symmetry not already present while remaining perturbative. Therefore, assuming there is no error in the calculations of Ref.~\cite{Cappelluti_9906}, it seems that we must interpret their results as signaling a breakdown of perturbation theory, such that the radius of convergence in $1/N$ is zero. In that case, there is no reason to believe that Luttinger's theorem is violated at all.

\section{Conclusion and outlook}
In this work, we have made significant progress towards an understanding of strongly correlated metals, going beyond both Fermi liquid theory and perturbation theory in the interaction strength. We have highlighted the essential role of the emergent symmetry of the IR theory and its 't Hooft anomaly.

Going forward, we believe that our work leads to a new perspective on studies of non-Fermi liquids: from our point of view,  a crucial question one should ask about any potential non-Fermi liquid state is what  its emergent symmetry group is, and if it is supposed to represent a compressible state, in what way the emergent symmetry group and its 't Hooft anomaly enable the system to satisfy the filling constraints developed in this paper for a continuously tunable filling $\nu$. (In particular, in this case the emergent symmetry group must be larger than any compact Lie group).

What one might expect to find is that in fact, \emph{every} non-Fermi liquid is simply an ersatz Fermi liquid (EFL) as defined in Section \ref{sec:quasi_fermi} or a slight variant of it (for example including fractionalization as discussed in Section \ref{sec:higher_form}). It is a very important open question to determine whether there might be a fundamentally different way to realize compressible metallic states, with a totally different emergent symmetry group (which must still be larger than any compact Lie group). If such a possibility were to be realized, we would still expect that consideration of the emergent symmetry group and its 't Hooft anomaly will still be a powerful way to determine properties of the system, as we found for EFLs.

In particular, it is quite energizing to imagine how our results apply to the strange metal phase in cuprates \cite{keimer2015quantum}, assuming that it is possible for it to extend all the way down to a critical point at zero temperature, and assuming that it can exist in a translationally invariant system without disorder.
The filling at which the critical point occurs does not appear to be near any particular rational filling. Therefore, it is likely that the filling is generic and can be continuously tuned depending on the microscopic parameters, which as we have noted, leads to very strong constraints on the IR theory. Although the strange metal has extremely exotic properties, there does not appear to be any obvious reason why it could not still be an EFL (In particular, we emphasize again that the ``Fermi surface quantum'' excitations of an EFL do \emph{not} need to have the nature of quasiparticles, which are certainly expected to be absent in the strange metal.)
Identification of the strange metal as being in the EFL class, or even a significant generalization of it, would have further profound implications which we explore in a subsequent work \cite{Else_2010}.

\begin{acknowledgments}
We thank Zhen Bi, Meng Cheng, Andrey Gromov, Shamit Kachru, Ethan Lake, John McGreevy, Max Metlitski, Chong Wang, and Liujun Zou for helpful discussions. D.V.E.\ was supported by the EPiQS Initiative of the Gordon and Betty Moore Foundation, Grant No. GBMF8684. T.S.\ is supported by a US Department of Energy grant DE- SC0008739, and in part by a Simons Investigator award from the Simons Foundation. This work was also partly supported by
the Simons Collaboration on Ultra-Quantum Matter, which is a grant from the Simons Foundation (651440, TS).

\end{acknowledgments}
\appendix
\section{Filling does not correspond to the boundary of an SPT}
\label{appendix:fillingnotspt}
Here we will explain why, contrary to what one might initially assume, filling in $d$ spatial dimensions does not correspond to a boundary of a bulk SPT in $d+1$ spatial dimensions, when viewed in terms of the microscopic symmetries.
The relevant symmetry group is $\GIR = \mathbb{Z}^d \times \UU(1)$. One can show \cite{Kitaev_0901,Thorngren_1612,Else_1910} that the SPT classification for such a symmetry group in  $D$ spatial dimensions is given by
\begin{equation}
\label{eq:weak_classification}
\mathcal{C}_D \times \prod_{r=1}^d \mathcal{C}_{D-r}^{\times \begin{pmatrix}d \\r \end{pmatrix}},
\end{equation}
where $\mathcal{C}_{m}$ is the classification of $\UU(1)$ SPTs in $m$ spatial dimensions,
 $\mathcal{C}^{\times k}$ denotes the product of $k$ copies of $\mathcal{C}$, and $\begin{pmatrix} d \\ r\end{pmatrix}$ is the binomial coefficient. Physically, $\mathcal{C}_D$ correspond to ``strong'' SPT phases, i.e. those protected by $\UU(1)$ alone, while the remaining terms correspond to ``weak'' SPT phases that can be understood by stacking layers of lower-dimensional strong SPT phases.
 
If filling in $d$ spatial dimensions could somehow correspond to an SPT in $D= d+1$ spatial dimensions, then it would have to be an SPT that inherently relies on all $d$ translation symmetries. Therefore, the only term in \eqnref{eq:weak_classification} that could be relevant is the $r=d$ term, $\mathcal{C}_1$. The problem is that there \emph{are} no 1-dimensional SPTs protected by $\UU(1)$ alone, in either bosonic or fermionic systems. Therefore, $\mathcal{C}_1 = 0$. So we conclude that there are no SPTs in $d+1$ spatial dimensions that could correspond to filling in $d$ dimensions.

\section{Translation symmetry fractionalization on a $2\pi$ flux in 2-D systems at fractional filling}
\label{appendix:monopole_translations}
In this section, we will give a more detailed argument for why a $2\pi$ flux in a 2-D system with fractional filling should transform projectively under the translation symmetry, as discussed in Section \ref{subsec:2d_filling_constraint}.

First of all, let us recall that in the presence of a magnetic field one should introduce gauge-invariant translation operators in the following way. For simplicity of notation, let us assume that the system is defined in continuous space  (although with only a discrete translation symmetry); similar considerations can be made in a tight-binding model.
We introduce the generator of gauge-invariant continuous translations according to
 \begin{equation}
\widetilde{\mathbf{P}} = \mathbf{P} - \int \mathbf{A}(\mathbf{x}) \hat{\rho}(\mathbf{x}) d^2 \textbf{x},
\end{equation}
where $\mathbf{A}(\textbf{x})$ is the magnetic vector potential, $\hat{\rho}(\mathbf{x})$ is the local density operator, and $\mathbf{P}$ is the generator of translations when $\mathbf{A} = 0$. These translation operators satisfy the commutation algebra
\begin{equation}
\label{eq:continuous_commutator}
[\widetilde{P}_\alpha, \widetilde{P}_\beta] =  i \varepsilon_{\alpha \beta} \int B(\mathbf{x}) \hat{\rho}(\mathbf{x}) d^2 \mathbf{x},
\end{equation}
where, where $B(\textbf{x}) = \varepsilon^{\alpha \beta} \partial_\alpha A_\beta(\textbf{x})$ is the magnetic field.
We then define the discrete translation operators $\mathbb{T}_x$ and $\mathbb{T}_y$ by
\begin{equation}
\label{eq:discrete_in_terms_of_continuous}
\mathbb{T}_\alpha = \exp(-i \mathbf{a}^{(\alpha)} \cdot \widetilde{\mathbf{P}}),
\end{equation}
where $\mathbf{a}^{(\alpha)}$ is the lattice translation vector in the $\alpha$ direction.

Now consider any state $\ket{\psi}$ that obeys the ``cluster property'' 
\begin{equation}
\langle \hat{\rho}(\mathbf{x}) \hat{\rho}(\mathbf{x}') \rangle - \langle \hat{\rho}(\mathbf{x})\rangle \langle \hat{\rho}(\mathbf{x}') \rangle \leq C |\mathbf{x} - \mathbf{x'}|^{-\eta}
\end{equation}
for some constants $C$ and $\eta > 0$, and $\langle \cdot \rangle = \bra{\psi} \cdot \ket{\psi}$
denotes expectation values with respect to $\ket{\psi}$. Then it follows from \eqnref{eq:continuous_commutator} that, if we fix that $\int B(\textbf{x}) d^2 \textbf{x} = 2\pi$ but make the magnetic field more and more spread out, then the variance
\begin{equation}
\langle [\widetilde{P}_\alpha, \widetilde{P}_\beta]^2 \rangle - \langle [\widetilde{P}_\alpha, \widetilde{P}_\beta]\rangle^2
\end{equation}
goes to zero in the limit where the magnetic field is infinitely spread out. Therefore, when acting on such a state $\ket{\psi}$, and in this limit, we can replace the commutator by its expectation value:
\begin{equation}
[\widetilde{P}_\alpha, \widetilde{P}_\beta] \sim  \langle [\widetilde{P}_\alpha, \widetilde{P}_\beta] \rangle = i \epsilon_{\alpha \beta} \int B(\textbf{x}) \rho(\textbf{x}) d^2 \textbf{x} \sim \frac{2\pi \nu}{v} i \epsilon_{\alpha \beta}, \label{eq:doing_stuff}
\end{equation}
where $v$ is the volume of one unit cell and $\rho(\textbf{x}) = \langle \hat{\rho}(\textbf{x}) \rangle$. To get to the last identification, we used the fact that $B(\textbf{x})$ is very slowly varying, and that the integral of $\rho(\textbf{x})$ over one unit cell is $\nu$. Finally, from \eqnref{eq:doing_stuff} and the definition \eqnref{eq:discrete_in_terms_of_continuous}, we find that
\begin{equation}
\mathbb{T}_x \mathbb{T}_y \mathbb{T}_x^{-1} \mathbb{T}_y^{-1} \sim e^{2\pi i \nu},
\end{equation}
which is \eqnref{eq:pretrans_projective}.

\section{Filling constraints in general dimension from the topological action of SPTs}
\label{appendix:topological_action}
Here we will discuss the sense in which the microscopic filling can be computed from the IR theory in general spatial dimension $d$, generalizing the discussion of Sections \ref{subsec:1d_filling_constraints} and \ref{subsec:2d_filling_constraint}. First of all, we know that there is a one-to-one correspondence between the 't Hooft anomaly of the IR theory and an SPT in $d+1$ spatial dimensions. 
Since the SPT phase is topological, at long wave-lengths a $\GIR$ SPT phase in $d+1$ spatial dimensions is described by a (Euclidean) topological action $S[M,A]$ that depends on the $d+2$-dimensional space-time manifold $M$ and a gauge field $A$ for the $\GIR$ symmetry. This action can be evaluated on any oriented space-time manifold $M$. (In some cases, there may be some additional structures required on $M$, for example fermionic systems require a spin structure or $\mathrm{spin}_c$ structure.) This action is purely imaginary (i.e.\ has no real part), does not depend on a metric on $M$ and, provided that $\GIR$ is a compact Lie group, depends only on the SPT phase and is independent of the microscopic details of the SPT ground state. In general, the action is only defined modulo $2\pi i$ since this does not affect the amplitude $e^{-S}$.

Let $\tau_1, \cdots, \tau_d \in \GIR$ be the elements of the IR symmetry group corresponding to microscopic translations. We can define a generalization of the function $\alpha$ from the previous subsections in the following way. We set the space-time manifold $M$ to be the $d+2$-dimensional torus $T^{d+2}$. We require that over the first $d$ of the $d+2$ directions of the torus, the gauge field $A$ is flat, with the fluxes through the associated 1-cycles given by $\tau_1, \cdots, \tau_d$. Then over the final two directions on the torus, we require that there be a $2\pi$ flux of the $\UU(1)$ subgroup of $\GIR$ (generated by $\hQIR$) corresponding to the microscopic $\UU(1)$ symmetry. Finally, we define $\alpha(\hQIR|\tau_1, \cdots, \tau_d) = -i S[M,A]$. One can show that this agrees with the previous definitions of $\alpha$ in the cases $d=1,2$.

Thus, our hypothesis for the form of the filling constraint is
\begin{equation}
\nu = \frac{1}{2\pi} \alpha(\hQIR|\tau_1, \cdots, \tau_d) \quad (\mathrm{mod} \, 1).
\end{equation}
We have not devised a physical derivation analogous to the $d=1$ and $d=2$ cases; however we expect that this should be possible, since $\alpha$ should relate to the anomalous action of translations on a $2\pi$ flux, which in $d$ spatial dimensions is a $(d-2)$-dimensional object.

We can extend this definition to include higher form symmetries as follows. As we have discussed in Section \ref{sec:higher_form}, in this case we can include the possibility of symmetry fractionalization by considering the map $\rho:G_{\rm UV} \to G_{\rm IR}$ to be a map of higher groups, where $G_{\rm IR}$ contains all higher form symmetries. We will show that there is a $G_{\rm IR}$ gauge field $A$ on $T^{d+2}$ associated to $\rho$ which generalizes the $A$ above, so that $\nu = - i S[T^{d+2},A]/2\pi$. First, we equip the $d+2$ torus $T^{d+2}$ with a $\mathbb{Z}^d \times \UU(1)$ gauge field $A'$ defined as follows. For $j = 1, \ldots, d$, the $j$th 1-cycle $C_j$ of $T^{d+2}$ has $\int_{C_j} A'$ equal to the generator of the $j$th $\mathbb{Z}$ factor. On the last two coordinates, which form a $T^2$, we place a $2\pi$ flux in the $\UU(1)$ factor. We then take $A = \rho(A')$.

Note that, as we mentioned, in certain cases the topological action could depend on other data (such as a spin structure for fermionic systems). However, we expect that if the action still depends non-trivially on this additional data for the particular space-time manifold $M$ and gauge field $A$ considered here, this corresponds to cases where the microscopic symmetry $\GUV = \mathbb{Z}^d \times \UU(1)$ still has a non-trivial 't Hooft anomaly in $d$ spatial dimensions. In this case, the system cannot be realized as lattice system in $d$ spatial dimensions with translation symmetry and on-site $\UU(1)$ symmetry, only as a boundary of an SPT in $d+1$ spatial dimensions. In such systems, the filling of the $d$-dimensional system does not need to be well-defined.

\section{Projective representation of loop group on monopole from 't Hooft anomaly}
\label{appendix:monopole}
In this Appendix, we will show that \eqnref{eq:kac_moody} follows from the anomaly equation \eqnref{eq:theonetrueanomalyeq}.

Let us work on a 4-dimensional space parameterized by coordinates $(t,x,y,\theta)$. We start by considering a configuration where the $A_x$ and $A_y$ components of the gauge field depend only on $x$ and $y$, and the $A_t$, $A_\theta$ components depend only on $z$,$t$. Then, we integrate \eqnref{eq:theonetrueanomalyeq} over $x$ and $y$. We obtain
\begin{equation}
\label{eq:dimredaneq}
\partial_\mu \widetilde{j}^{\mu} = \frac{m\phi}{2\pi} (\partial_t A_\theta - \partial_\theta A_t),
\end{equation}
where the index $\mu$ now varies only over $t$ and $\theta$, 
and we have defined $\phi = \frac{1}{2\pi} \int (\partial_x A_y - \partial_y A_x) dx dy$, which is the number of flux quanta passing through the system, and $\widetilde{j}^{\mu} = \int j^{\mu} dx dy$. Effectively, what we have done is compactify our original system in 2 spatial dimensions to one in 0 spatial dimensions. This 0-D system has a 't Hooft anomaly that depends on $\phi$, and is canceled by inflow from a 2-D SPT. Recall that such 't Hooft anomalies correspond to projective representations of the symmetry group. If we set $\phi = 1$, this corresponds to the projective representation of the symmetry group acting on a $2\pi$ flux.

In our case, where the symmetry group is $G = \LU(1)$, we can express the projective representation through this central extension of the commutation algebra of the symmetry generators $\hat{n}(\theta)$, i.e.
\
\begin{equation}
[\hat{n}(\theta), \hat{n}(\theta')] = c(\theta, \theta'),
\end{equation}
where $c(\theta, \theta')$ is a $c$-number (i.e.\ not an operator) to be determined.

The projective representation should be independent of the choice of Hamiltonian, so we are free to choose $H = \int f(\theta) \hat{n}(\theta) d\theta$ for some arbitrary test function $f(\theta)$, which amounts to coupling the zero Hamiltonian to the gauge field $A_t = f(\theta)$. Then we find that Heisenberg evolution $\hat{n}(\theta,t) = e^{itH} \hat{n}(\theta) e^{-itH}$ satisfies
\begin{equation}
\label{eq:density_derivative}
\frac{d}{dt} \hat{n}(\theta,t) = i \int c(\theta', \theta) f(\theta') d\theta'.
\end{equation}
We should compare this with the anomaly equation \eqnref{eq:dimredaneq}, in which we choose $A_\theta$ to be independent of time, we have $\widetilde{j}^{\theta} = \frac{\delta H}{\delta A_\theta} = 0$,  and we identify the time-component $\widetilde{j}^t$ with the expectation value of $\hat{n}(\theta,t)$. We find that, to be consistent with \eqnref{eq:density_derivative}, it must be that
\begin{equation}
i\int c(\theta', \theta) f(\theta') d \theta' = -\frac{m \phi}{2\pi} \frac{d}{d\theta} f(\theta),
\end{equation}
which is true for an arbitrary test function $f$ only if 
\begin{equation}
c(\theta,\theta') = -i \frac{m\phi}{2\pi} \delta'(\theta - \theta').
\end{equation}
Finally, to obtain the extra factor of $q$ in the right-hand side of \eqnref{eq:kac_moody}, we recall that \eqnref{eq:charge_embedding}
implies that $\phi$, which we can interpret as the number of flux quanta of the IR $\UU(1)$ symmetry generated by $\int \hat{n}(\theta) d\theta$, is $q$ times the number of flux quanta of the microscopic $\UU(1)$ symmetry.

\section{Projective representation of loop group  in a Fermi liquid}
\label{appendix:semiclassical}
In this Appendix, we derive \eqnref{eq:kac_moody} in a Fermi liquid by working in terms of the semiclassical theory of band electrons. Strictly speaking, we will only treat non-interacting electrons, but since our arguments are purely geometric and the result ultimately can be reduced to a contribution coming solely from the Fermi surface, we expect that the same results will hold in a Fermi liquid in light of the quasiparticle picture.

For clarity, in this appendix we temporarily work in units where $\hbar \neq 1$. Then we can write \eqnref{eq:kac_moody} as
\begin{equation}
[\hat{n}(\theta), \hat{n}(\theta')] = -i \frac{\Phi mq}{(2\pi)^2 \hbar} \delta'(\theta - \theta')
\end{equation}
where $\Phi = \int B(\textbf{x}) d^2 \textbf{x}$ is the total magnetic flux. Therefore, in the semiclassical theory what we want to show is the Poisson bracket
\begin{equation}
\{ n(\theta), n(\theta') \} = -\frac{\Phi mq}{(2\pi \hbar)^2} \delta'(\theta - \theta').
\end{equation}
Note that for each $\theta$, $n(\theta)$ is a semiclassical observable, i.e.\ a real-valued function on the semiclassical phase space.

What we will actually prove is the following equivalent statement:
\begin{equation}
\label{eq:result_to_prove}
\{ n_f, n_g \} = -\frac{\Phi q m}{(2\pi \hbar)^2} \int f(\theta) \frac{dg(\theta)}{d\theta} d\theta,
\end{equation}
for any smooth functions $f$ and $g$, where we have defined   
\begin{equation}
n_f = \int f(\theta) n(\theta) d\theta.
\end{equation}
 We can relate $n_f$ and $n_g$ to the density $n(\textbf{k}, \textbf{x})$ of semiclassical particles according to:
\begin{equation}
n_f = \int f_{+}(\textbf{k}) n(\textbf{k}, \textbf{x}) d^2 \textbf{x} d^2 \textbf{k},
\end{equation}
where $f_{+}(\textbf{k})$ is any smooth extension of $f$ away from the Fermi surface such that $f_{+}(\textbf{k}_F(\theta)) = f(\theta)$, where $\textbf{k}_F(\theta)$ is the momentum of the Fermi surface as a function of $\theta$.
Note that by $n(\textbf{k}, \textbf{x})$ we mean that for \emph{each} $\textbf{k}, \textbf{x}$,  $n(\textbf{k}, \textbf{x})$ is a function of phase space variables. If there are $N$ particles in total then the phase space variables are labeled by $(\mathbf{x}^{(1)}, \cdots, \mathbf{x}^{(N)}, \mathbf{k}^{(1)}, \cdots, \mathbf{k}^{(N)})$, and we can write
\begin{multline}
n(\textbf{k}, \textbf{x})(\mathbf{x}^{(1)}, \cdots, \mathbf{x}^{(N)}, \mathbf{k}^{(1)}, \cdots, \mathbf{k}^{(N)}) \\
= \sum_{i=1}^N \delta^2(\mathbf{k} - \mathbf{k}^{(i)})  \delta^2(\mathbf{x} - \mathbf{x}^{(i)}).
\end{multline}
Here $\textbf{k}^{(j)}$ denotes the gauge-invariant momentum. That means that the Poisson brackets of the $\textbf{x}^{(j)}$'s and $\textbf{k}^{(j)}$'s get modified due to the magnetic field, as we discuss below.

Now we can compute $\{ n_f, n_g \}$. We will first do it for a single-particle phase space ($\mathbf{k}^{(1)}, \mathbf{x}^{(1)})$. Then we have
\begin{align}
n_f(\textbf{k}^{(1)}, \textbf{x}^{(1)}) = f_{+}(\textbf{k}^{(1)})
\end{align}
and hence
\begin{align}
\{ n_f, n_g \} &= \frac{\partial f_{+}(\textbf{k}^{(1)})}{\partial k^{(1)}_\alpha} \frac{\partial g_{+}(\textbf{k}^{(1)})}{\partial k^{(1)}_\beta} \{ k^{(1)}_\alpha, k^{(1)}_\beta \} \\
&= -B(\mathbf{x}^{(1)}) [\nabla_{\textbf{k}^{(1)}} f_{+}(\textbf{k}^{(1)})] \times [\nabla_{\textbf{k}^{(1)}} g_{+}(\textbf{k}^{(1)})],
\end{align}
where we used the fact that the momenta $\textbf{k}^{(1)}$ obey the magnetic algebra
\footnote{In general there there can be a Berry curvature correction to this equation \cite{Duval_0506}, but it enters only at next-leading order in the magnetic field strength, so we disregard it here.}
\begin{equation}
\{ k^{(1)}_\alpha, k^{(1)}_\beta \} =  -qB(\textbf{x}^{(1)}) \varepsilon_{\alpha \beta}
\end{equation}
(here we have chosen the convention that $\textbf{k}$ carries units of momentum, not inverse length.)
For general $N$, we have
\begin{equation}
\label{eq:apoisson}
\{ n_f, n_g \} = -q\int B(\textbf{x}) n(\textbf{x}, \textbf{k}) [ \nabla_{\textbf{k}} f_{+}(\textbf{k}) ] \times [ \nabla_{\textbf{k}} g_{+}(\textbf{k}) ] d^2 \textbf{x} d^2 \textbf{k},
\end{equation}
where in this equation both the left- and right-hand sides have an implicit dependence on the phase space variables $(\textbf{k}^{(1)}, \cdots, \textbf{k}^{(N)}, \textbf{x}^{(1)}, \cdots, \textbf{x}^{(N)})$, on the right-hand side through $n(\textbf{x}, \textbf{k})$.

Now let us write $n(\textbf{x}, \textbf{k}) = n_0(\textbf{k}) + \delta n(\textbf{x}, \textbf{k})$, where $n_0(\textbf{k})$ has no implicit dependence on the phase space variables and represents the equilibrium distribution of particles in the ground state (keeping in mind the Pauli exclusion principle). For low-energy configurations, the contribution of $\delta n$ to \eqnref{eq:apoisson} will be negligible compared to $n_0$. Hence, we can replace $n$ with $n_0$ in \eqnref{eq:apoisson} and we find
\begin{align}
\{ n_f, n_g \} &= -q\Phi \int n_0(\textbf{k}) [\nabla_{\textbf{k}} f_{+}(\textbf{k}) ] \times [ \nabla_{\textbf{k}} g_{+}(\textbf{k}) ] d^2 \textbf{k}, \\
&= -q\Phi \int n_0(\textbf{k}) \nabla_{\textbf{k}} \times [ f_{+}(\textbf{k}) \nabla_{\textbf{k}} g_{+}(\textbf{k})] d^2 \textbf{k}.
\end{align}
Now we observe that $n_0(\textbf{k}) = 1/(2\pi\hbar)^2$ in the occupied region of the Brillouin zone, and $0$ outside. So by Stokes' theorem we obtain
\begin{align}
\{ n_f, n_g \} &= -\frac{qm\Phi}{(2\pi \hbar)^2} \oint_{\mbox{\footnotesize Fermi surface}} [f_{+}(\mathbf{k}) \nabla_{\textbf{k}} g_{+}(\textbf{k})] \cdot d\textbf{k} \\
&= -\frac{qm\Phi}{(2\pi \hbar)^2} \int f(\theta) \frac{dg(\theta)}{d\theta} d\theta,
\end{align}
where $m = \pm 1$ (relative to an arbitrary choice of orientation of the Fermi surface, $m = +1$ or $-1$ corresponds to which side of the Fermi surface the occupied region of the Brillouin zone is on). This is the result we wanted to show.

\section{Proof of Theorem \ref{thm:mainthm}}
\label{appendix:proof_of_mainthm}
In this appendix, we give a proof of Theorem \ref{thm:mainthm} from Section \ref{sec:generic}. The result in fact follows directly from the following mathematical property of the function $\alpha$ defined in Section \ref{subsec:1d_filling_constraints} for spatial dimension $d=1$, in Section \ref{subsec:2d_filling_constraint} for $d=2$, and in Appendix \ref{appendix:topological_action} for general $d$.
\begin{lemma}
\label{lemma:linearity}
Let $\mathcal{A}$ be an Abelian subgroup of $\GIR$.
The function $\alpha(\mathcal{Q}|\tau_1, \cdots, \tau_d)$ defined in Section \ref{sec:filling_constraints} is linear in each argument when restricted to $\mathcal{A}$. That is, for $a_1, \cdots, a_d, a_1' \in \mathcal{A}$, and where $\mathcal{Q}$ is the infinitesimal generator of a $\UU(1)$ subgroup of $\mathcal{A}$, then we have
\begin{multline}
\alpha(\mathcal{Q}|a_1 a_1', a_2, \cdots, a_d) = \alpha(\mathcal{Q}|a_1, a_2, \cdots, a_d) \\+ \alpha(\mathcal{Q}|a_1', a_2, \cdots, a_d) \quad (\mathrm{mod}\, 2\pi),
\end{multline}
and similarly for each of the arguments after the ``$|$''.
\begin{proof}
It is easy to verify this given the concrete definitions of $\alpha$ in the cases $d=1$ and $d=2$. In general $d$, where $\alpha$ is defined as discussed in Appendix \ref{appendix:topological_action}, the result follows from the hypothesis that the topological action is a cobordism invariant \cite{Kapustin_1403_1467,Kapustin_1406,Freed_1604}.
\end{proof}
\end{lemma}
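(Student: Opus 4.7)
The plan is to extract the additivity from cobordism invariance of the topological action $S[M,A]$ that defines $\alpha$. Recall that $\alpha(\mathcal{Q}\,|\,\tau_1,\ldots,\tau_d) = -i\,S[T^{d+2},A(\tau_1,\ldots,\tau_d;\mathcal{Q})]$, where $A$ is flat on the first $d$ circles with holonomies $\tau_1,\ldots,\tau_d$ and carries $2\pi$ flux of the $\UU(1)$ subgroup generated by $\mathcal{Q}$ on the remaining $T^2$ factor. Linearity in the first argument (the others being symmetric) amounts to showing that the three gauge-field configurations with holonomies $a_1 a_1'$, $a_1$, and $a_1'$ on the first circle, with the remaining data $(a_2,\ldots,a_d,\mathcal{Q})$ held fixed, assemble into the boundary of some $(d+3)$-dimensional configuration on which the action vanishes mod $2\pi i$.

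The cobordism I would use is $Y = P \times T^{d+1}$, where $P$ is the pair-of-pants surface (a $2$-sphere with three open disks removed). Its boundary consists of three circles $C_1, C_2, C_3$ with orientations such that $\partial P = C_1 + C_2 - C_3$, so $\partial Y$ is the disjoint union of three copies of $T^{d+2}$, one of them with reversed orientation. On the $T^{d+1}$ factor I extend the gauge field as a product: the $2\pi$ flux of $\mathcal{Q}$ lives in a fixed $T^2 \subset T^{d+1}$ and the flat holonomies $a_2,\ldots,a_d$ live on the remaining circles, all pulled back from $T^{d+1}$ and hence automatically extending. The nontrivial step is to construct a flat $\mathcal{A}$-connection on $P$ whose boundary holonomies are $a_1$, $a_1'$, and $a_1 a_1'$ on $C_1, C_2, C_3$ respectively. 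Such a connection corresponds to a homomorphism $\pi_1(P) \to \mathcal{A}$; since $\pi_1(P)$ is free on two generators $\gamma_1,\gamma_2$ with $\gamma_3 = (\gamma_1\gamma_2)^{-1}$, sending $\gamma_1 \mapsto a_1$ and $\gamma_2 \mapsto a_1'$ yields the required holonomies, and the relation on the third boundary closes \emph{only} because $\mathcal{A}$ is Abelian. This is exactly the place where the Abelian hypothesis is essential: in a non-Abelian group, one would have the Baker--Campbell--Hausdorff obstruction and no well-defined ``sum'' of group elements.

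Having built $(Y,A_Y)$, I invoke the hypothesis that the topological action $S$ is a cobordism invariant of manifolds equipped with $\GIR$ gauge field. The cobordism statement is that $S[\partial Y, A_Y|_{\partial Y}] \equiv 0 \pmod{2\pi i}$ whenever $(\partial Y, A_Y|_{\partial Y})$ bounds. Writing this out in terms of the three boundary components and using the orientation-reversal property $S[\overline{M},A] = -S[M,A]$ gives
\begin{equation*}
S\bigl[T^{d+2},A(a_1, a_2,\ldots;\mathcal{Q})\bigr] + S\bigl[T^{d+2},A(a_1', a_2,\ldots;\mathcal{Q})\bigr] - S\bigl[T^{d+2},A(a_1 a_1', a_2,\ldots;\mathcal{Q})\bigr] \equiv 0 \pmod{2\pi i},
\end{equation*}
which is precisely the claimed additivity of $\alpha$ in its first slot, modulo $2\pi$. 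Symmetry of the construction across the first $d$ arguments gives additivity in each $\tau_i$; an analogous pair-of-pants cobordism applied in the $T^2$ directions handles linearity in $\mathcal{Q}$.

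The main obstacle is auxiliary structure: if the system is fermionic, $S$ depends on a spin or $\mathrm{spin}_c$ structure, and one must verify that the chosen spin structures on the three boundary tori extend over $P \times T^{d+1}$. Since $P$ is a genus-zero surface with boundary, the relevant extension question reduces to a parity condition on the spin structures chosen on the $C_i$ circles, and one can always arrange the boundary choices compatibly (possibly after absorbing a fixed sign into the definition of $\alpha$). A secondary subtlety is that for disconnected $\mathcal{A}$, the flat-connection argument on $P$ should be phrased in terms of principal bundles / classifying maps $P \to B\mathcal{A}$, but since $P$ is homotopy equivalent to a wedge of two circles and $\mathcal{A}$ is Abelian, such a classifying map with the prescribed boundary data always exists. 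Beyond these bookkeeping matters, the argument is purely topological and uses nothing about $\GIR$ beyond the Abelian structure of the chosen subgroup and cobordism invariance of the SPT action.
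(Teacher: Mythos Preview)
Your pair-of-pants cobordism is exactly the construction underlying the paper's one-line appeal to cobordism invariance; you have simply made explicit what the paper leaves to the reader. One small correction: since $\pi_1(P)$ is free on two generators, the holonomy on the third boundary circle is automatically $a_1 a_1'$ in \emph{any} group---the Abelian hypothesis is not needed there, but rather to ensure that the flat connection on $P$ (with holonomies $a_1, a_1'$) and the fixed data on the $T^{d+1}$ factor (holonomies $a_2,\ldots,a_d$ and the $\mathcal{Q}$-flux) can be combined into a single well-defined $\GIR$ gauge field on the product $Y = P \times T^{d+1}$.
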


Lemma \ref{lemma:linearity} is all we need to give:
\begin{proof}[Proof of Theorem \ref{thm:mainthm}]
Let $\tau_1, \dots, \tau_d \in \GIR$ be the IR symmetries corresponding to the microscopic translation symmetries and let $\hQIR$ be the infinitesimal generator of the IR symmetry corresponding to the microscopic $\UU(1)$ rotation symmetry. Since the microscopic translations commute with each other and with charge, $\tau_1, \cdots, \tau_d$ and $\hQIR$ must all commute. Let $\mathcal{A}$ be a maximal Abelian subgroup of $\GIR$ (that is, a subgroup that cannot be enlarged as a subgroup of $\GIR$ while remaining Abelian) that contains $\tau_1, \cdots, \tau_d$ and the subgroup generated by $\hQIR$. One can show\footnote{By using the Closed Subgroup Theorem for Lie groups \cite{ManifoldBook}.} that $\mathcal{A}$ is a compact Abelian Lie subgroup of $\GIR$. Let $C_{\mathcal{A}}$ be the number of connected components of $\mathcal{A}$, and let $\mathcal{A}_0 \leq \mathcal{A}$ be the connected component of the identity element. Then since $[a]^{C_\mathcal{A}} = 1$ for any element $[a] \in \mathcal{A} /\mathcal{A}_0$, we conclude that $a^{C_{\mathcal{A}}} \in \mathcal{A}_0$ for any $a \in \mathcal{A}$.

Now since $\mathcal{A}$ is a compact Abelian Lie group, the connected component $\mathcal{A}_0$ must be isomorphic to $\UU(1)^r$ for some $r$. Hence, we can write elements of $\mathcal{A}_0$ as $(\theta_1, \cdots, \theta_r)$ for angular variables $\theta_1, \cdots, \theta_r$. We now return to the function $\alpha(\hQIR|\tau_1, \cdots, \tau_d)$ that computes the filling. Henceforth, we will pick a given $\hQIR$ and hold it fixed, therefore we write simply $\alpha(\tau_1, \dots, \tau_d)$. The linearity of $\alpha$ established by Lemma \ref{lemma:linearity} implies that $\alpha$ is continuous in each argument, and, moreover, that if $a_1 = (\theta_1, \cdots, \theta_r) \in \mathcal{A}_0$, then
\begin{equation}
\alpha(a_1, \cdots, a_d) = \sum_{j=1}^r \theta_j m_j(a_2, \cdots, a_d) \quad (\mathrm{mod} \, 2\pi),
\end{equation}
for some continuous integer function $m_j : \mathcal{A}^{\times (d-1)} \to \mathbb{Z}$, where we have suppressed the dependences on $\hQIR$, which we hold fixed. But now continuity implies that $m_j$ is zero if any of its arguments are in $\mathcal{A}_0$. We therefore conclude that $\alpha(a_1, \cdots, a_d) = 0 \quad (\mathrm{mod} \, 2\pi)$ if any \emph{two} of its arguments are in $\mathcal{A}_0$. Therefore:
\begin{multline}
(C_\mathcal{A})^2 \alpha(\tau_1, \cdots, \tau_d)= \alpha(\tau_1^{C_{\mathcal{A}}}, \tau_2^{C_{\mathcal{A}}}, \tau_3, \cdots, \tau_d)  = 0\\
(\mathrm{mod} \, 2\pi),
\end{multline}
and hence the filling $\nu$ is an integer multiple of $1/(C_{\mathcal{A}})^2$.

Finally, to see that we can take $N_{\GIR}$ finite for any given compact Lie group $\GIR$, we note\footnote{This follows \cite{MathOverflow} from the ``neighboring subgroup theorem'' of Ref.~\cite{Montgomery_1942}} that for any such $\GIR$,  there are only finitely many conjugacy classes of maximal Abelian Lie subgroups $\mathcal{A} \leq \GIR$.
\end{proof}

For an alternative perspective that gives more precise bounds on $N_{\GIR}$, see the next appendix, Appendix \ref{appendix:chern_simons}.

\section{Bounds on Filling Denominators for Compact $\GIR$}
\label{appendix:chern_simons}

Theorem \ref{thm:mainthm} says that a compact $\GIR$ is inconsistent with an irrational filling fraction. In this appendix, we will discuss how, given an assumed $\GIR$, in a fixed space dimension $d$, one can compute a denominator $N_{\GIR}$ such that $\GIR$ is consistent only with rational filling fractions of the form $m/N_{\GIR}$, for $m \in \mathbb{Z}$. We will show the following

\begin{thm}
In $d>1$ space dimensions, for any compact emergent symmetry $\GIR$ (which could be a subgroup of the full group of emergent symmetries) containing the translation generators $\tau_1,\ldots,\tau_d$ and the subgroup $U(1)$ corresponding to microscopic particle-number conservation, the maximum denominator $N_{\GIR}$ divides the largest order of elements in the torsion part of the (co)bordism group $\Omega_{d+2}(\GIR)^{\rm tors} = \Omega^{d+2}(\GIR)^{\rm tors}$, ie. the group of $d+1$-space-dimensional SPT phases such that for some $n$, a stack of $n$ copies of the SPT is trivial.
\end{thm}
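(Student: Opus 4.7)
The plan is to unpack the formula for $\nu$ given in Appendix~\ref{appendix:topological_action} and exploit the fact that topological actions of SPT phases are cobordism invariants.

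First I would recall that Appendix~\ref{appendix:topological_action} expresses the filling as $\nu = -i S[T^{d+2},A]/(2\pi) \pmod{1}$, where $S$ is the topological action describing the $d+1$-spatial-dimensional SPT phase whose inflow cancels the 't Hooft anomaly of $\GIR$, and $A$ is the specific $\GIR$ background on $T^{d+2}$ (holonomies $\tau_1,\ldots,\tau_d$ on the first $d$ circles, plus a $2\pi$ flux of the $\UU(1)$ subgroup through the remaining $T^2$). The key physical input, stated in Appendix~\ref{appendix:topological_action}, is that $S[M,A]$ depends on $(M,A)$ only through its class in the appropriate bordism group $\Omega_{d+2}(\GIR)$ (oriented, spin, or $\mathrm{spin}^c$ as demanded by the microscopic system). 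Dually, the SPT itself is characterized by a class $[\omega] \in \Omega^{d+2}(\GIR)$, and the assignment $[\omega]\mapsto S_{\omega}[T^{d+2},A]$ is a homomorphism to $\mathbb{R}/2\pi\mathbb{Z}$, linear in $[\omega]$.

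Second, I would invoke stacking. If the SPT class $[\omega]$ is torsion of order $n$, i.e.\ $n[\omega]=0$ in $\Omega^{d+2}(\GIR)$, then the $n$-fold stack is a trivial SPT whose action vanishes mod $2\pi i$ on \emph{every} $(M,A)$. By linearity $n S_\omega[T^{d+2},A] = S_{n\omega}[T^{d+2},A] \equiv 0 \pmod{2\pi i}$, so $n\nu \in \mathbb{Z}$. Taking $N_{\GIR}$ to be the exponent of the torsion subgroup of $\Omega^{d+2}(\GIR)$ (equivalently the LCM, and hence divisor of the largest order, of its torsion elements), every filling forced by a torsion anomaly class lies in $\frac{1}{N_{\GIR}}\mathbb{Z}$, which is the claimed bound.

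Third, I would dispose of a possible free part of $\Omega^{d+2}(\GIR)$. For compact Lie $\GIR$, the free part is detected by Chern--Weil characteristic numbers built from curvatures of $A$ and Pontryagin forms of $TM$. Our evaluation manifold $T^{d+2}$ is parallelizable, killing tangential characteristic classes, and the gauge field $A$ is flat along all but the distinguished $T^2$ where a single quantized $2\pi$ flux lives, so any such characteristic number reduces to an integer power of $c_1$ integrated over $T^2$ times a flat-bundle pairing on the remaining $T^d$, and comes out to an integer that drops out mod $1$. Thus the free part contributes nothing to $\nu \bmod 1$ and the denominator bound from the torsion subgroup is the full story. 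The main obstacle I expect is this last step: rigorously arguing that, for \emph{all} possible refinements of the bordism theory (spin, $\mathrm{spin}^c$, with or without extra tangential structures on $\GIR$), the free generators pair to integers against our distinguished bordism class $[T^{d+2},A]$; this requires a case analysis of the generators of the free part of $\Omega^{d+2}(\GIR)$ for compact Lie $\GIR$, which is the technical crux and what ultimately forces the restriction $d \geq 2$ (since in $d=1$ the analogous $d+2=3$-dimensional bordism groups admit non-torsion contributions, such as Chern--Simons levels, that do evaluate non-integrally on flat holonomy backgrounds).
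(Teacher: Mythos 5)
Your overall strategy is the same as the paper's: split the group of anomaly classes $\Omega^{d+2}(\GIR)$ into its torsion part ${\rm Hom}(\Omega_{d+2}(\GIR)^{\rm tors},\UU(1))$ and its free part ${\rm Hom}(\Omega_{d+3}(\GIR),\mathbb{Z})$, use the stacking/exponent argument for the torsion part, and show the free part evaluates trivially on the distinguished background $[T^d\times T^2, A]$ for $d>1$. The torsion half of your argument is correct and essentially identical to the paper's. The problem is the free half, which you yourself flag as the crux and then only sketch: the mechanism you propose --- evaluating degree-$(d+2)$ Chern--Weil characteristic numbers on $T^{d+2}$ and factorizing them as $\int_{T^2} c_1^k$ times a ``flat-bundle pairing'' on $T^d$ --- misidentifies the relevant invariants. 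Degree-$(d+2)$ characteristic numbers detect the free part of the \emph{bordism} group $\Omega_{d+2}(\GIR)$; the free part of the \emph{cobordism} group $\Omega^{d+2}(\GIR)$ consists of generalized Chern--Simons invariants, i.e.\ secondary invariants obtained by extending the background to a bounding $(d+3)$-manifold and integrating a degree-$(d+3)$ Chern--Weil form there. These are $\mathbb{R}/2\pi\mathbb{Z}$-valued and do not in general reduce to integrals of closed forms over $T^{d+2}$ itself, so ``comes out to an integer'' does not follow from what you wrote. (Your parenthetical about $d=1$ shows you know these secondary invariants can be non-integral on flat-holonomy backgrounds; that is precisely why the step cannot be waved through.)

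The paper closes this gap with two concrete moves you are missing. First, since $\tau_1,\ldots,\tau_d$ and the $\UU(1)$ generator commute in $\GIR$, their images under any unitary representation $R:\GIR\to \UU(n)$ can be simultaneously diagonalized, reducing every generalized Chern--Simons invariant to one for a $\UU(1)$ bundle. Second, one exhibits an explicit bounding manifold $D^2\times T^{d-1}\times T^2$ (filling in the first translation circle with a disc) carrying an extension of the gauge field whose curvature $dA$ has components only along $D^2$ and along $T^2$ --- four directions in total. The Chern--Simons integrand is then $(dA)^{(d+3)/2}$, which vanishes identically once $(d+3)/2\geq 3$, i.e.\ $d\geq 3$ (and for even $d$ there are no odd-degree Chern--Weil forms to begin with). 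This is the rigorous version of your ``only one power of $c_1$ can survive'' intuition, and it is exactly what isolates $d=1$ as the case where the free part genuinely contributes $\mathrm{Tr}_R(\tau_1)$ and the bound fails. Without these two steps your proof is incomplete at its decisive point.
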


\begin{thm}
In $d = 2$ space dimensions, assuming a compact emergent symmetry $\GIR$ satisfying the conditions of Theorem 2 and furthermore of the form $\GIR = H \times U(1)$, the maximum denominator $N_{\GIR}$ divides $|\pi_0(\GIR)| \cdot |\pi_1(\GIR)^{\rm tors}|$, where $\pi_0(\GIR)$ is the set of connected components of $\GIR$ and $\pi_1(\GIR)^{\rm tors}$ is the torsion part of its fundamental group. If there is also a finite 1-form symmetry $K$, then $N_{\GIR}$ divides $|\pi_0(\GIR)| \cdot |\pi_1(\GIR)^{\rm tors}| \cdot |K|$.
\end{thm}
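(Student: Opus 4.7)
I would start from the general expression $2\pi\nu \equiv -i\, S[T^{4}, A] \pmod{2\pi}$ derived in Appendix \ref{appendix:topological_action}, where $T^{4} = T^{2}_{12} \times T^{2}_{34}$ and $A$ is a $\GIR$ gauge field that is flat with holonomies $\tau_{1}, \tau_{2}$ on the first two cycles and carries unit $2\pi$ flux of the distinguished $\UU(1) \subset \GIR$ through $T^{2}_{34}$. The product structure $\GIR = H \times \UU(1)$ lets me split $A = A_{H} \oplus A_{\UU(1)}$: writing $\tau_{i} = (h_{i}, u_{i})$, the $H$ part is flat with holonomies $h_{1}, h_{2}$ supported along $T^{2}_{12}$, while the $\UU(1)$ part has holonomies $u_{1}, u_{2}$ on $T^{2}_{12}$ and unit first Chern class on $T^{2}_{34}$.

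Next I would use the Künneth decomposition of $B\GIR = BH \times B\UU(1)$ to split the bosonic SPT classification $H^{4}(B\GIR, \UU(1))$, and evaluate each summand on our configuration. Pure $H$ classes in $H^{4}(BH, \UU(1))$ integrate to zero because any 4-form built from the flat $H$-connection, which lives only along the 2-dimensional $T^{2}_{12}$, vanishes identically. The pure $\UU(1)$ $\theta$-term $\frac{\theta}{8\pi^{2}}\int c_{1}^{2}$ vanishes because $\int_{T^{4}} c_{1}^{2} = 2(\int_{T^{2}_{12}} c_{1})(\int_{T^{2}_{34}} c_{1}) = 0$, since the $\UU(1)$-bundle is flat on $T^{2}_{12}$. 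Odd-bidegree mixed terms vanish because $H^{\mathrm{odd}}(B\UU(1), \mathbb{Z}) = 0$. Hence only the $(2, 2)$ bidegree piece contributes, pairing a class $\omega \in H^{2}(BH, \UU(1))$ (evaluated on the holonomies $h_{1}, h_{2}$) with the generator of $H^{2}(B\UU(1), \mathbb{Z})$ evaluated on the unit flux.

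The denominator of $\nu$ is thus controlled by the order of the torsion subgroup of $H^{2}(BH, \UU(1))$. Since $\UU(1)$ is injective as a $\mathbb{Z}$-module, the universal coefficient theorem gives $H^{2}(BH, \UU(1)) \cong \mathrm{Hom}(H_{2}(BH, \mathbb{Z}), \UU(1))$, whose torsion subgroup is isomorphic to $H_{2}(BH, \mathbb{Z})^{\mathrm{tors}}$. Using the Serre spectral sequence for the fibration $BH_{0} \to BH \to B\pi_{0}(\GIR)$, together with Hurewicz applied to the simply connected space $BH_{0}$ (giving $H_{2}(BH_{0}, \mathbb{Z}) = \pi_{1}(H_{0}) = \pi_{1}(\GIR)$), I would bound $|H_{2}(BH, \mathbb{Z})^{\mathrm{tors}}|$ by $|\pi_{0}(\GIR)| \cdot |\pi_{1}(\GIR)^{\mathrm{tors}}|$, yielding the claimed divisibility.

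For the extension to a finite 1-form symmetry $K$, the emergent symmetry becomes a 2-group and the action gains terms involving the 2-form background $B$ for $K$. In the $(2,2)$ bidegree piece the new mixed classes live in $H^{2}(BH, \widehat{K}) \otimes H^{2}(B\UU(1), \mathbb{Z})$ (with $\widehat{K} = \mathrm{Hom}(K, \UU(1))$) and analogous terms pairing $B$ directly against the $\UU(1)$ flux, each contributing at most a factor of $|K|$. The main obstacle will be a careful Serre-spectral-sequence analysis for the fibration $B^{2}K \to B\mathbb{G} \to B\GIR$ defining the 2-group classifying space $B\mathbb{G}$, ensuring that no extra torsion appears beyond the single factor $|K|$; and a parallel fermionic analysis using spin cobordism in place of ordinary cohomology, which should not introduce torsion beyond what is already accounted for by $\pi_{0}$, $\pi_{1}^{\mathrm{tors}}$, and $K$.
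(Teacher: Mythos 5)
Your strategy is, modulo language, essentially the paper's: both exploit the product structure $\GIR = H\times\UU(1)$ and the unit flux on the second $T^2$ factor to reduce the $(3+1)$-D anomaly invariant to a degree-lowered invariant of $H$ alone (your ``$(2,2)$ Künneth piece'' is precisely the paper's dimensional reduction of the degree-$4$ cobordism invariant to a degree-$2$ one along $c_1$), and both then bound the resulting torsion by running a Serre spectral sequence over $B\pi_0(\GIR)$ and using Hurewicz on the simply connected $BH_0$ to land on $|\pi_0(\GIR)|\cdot|\pi_1(\GIR)^{\rm tors}|$. Two small imprecisions in your middle step: the torsion subgroup of $\mathrm{Hom}(H_2(BH,\mathbb{Z}),\UU(1))$ is \emph{not} $H_2(BH,\mathbb{Z})^{\rm tors}$ when $H_2$ has nonzero free rank $r$ (it contains $(\mathbb{Q}/\mathbb{Z})^r$); what saves you is that the classes coming from the free part are Chern--Weil and vanish on the flat $H$-bundle over $T^2_{12}$, so only $\mathrm{Hom}(H_2(BH,\mathbb{Z})^{\rm tors},\UU(1))$ can contribute. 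Also, the bound is really on exponents rather than group orders (e.g.\ $|H_2(B\pi_0,\mathbb{Z})|$ can exceed $|\pi_0|$ for elementary abelian $\pi_0$, though its exponent divides $|\pi_0|$); the paper is loose in the same way, so this is cosmetic.

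The genuine gap is the step you defer at the end. The theorem concerns fermionic systems, so the relevant classification is the torsion of spin cobordism, $\Omega_{4}(\GIR)^{\rm tors}$ (equivalently, after the reduction, $\Omega_{2}(\GIR)^{\rm tors}$), not $H^4(B\GIR,\UU(1))$. In the Atiyah--Hirzebruch filtration your computation captures only the $\Omega_0^{\rm spin}=\mathbb{Z}$ layer; the layers $H^0(B\GIR^b,\Omega_2^{\rm spin})=\mathbb{Z}_2$ (the Arf invariant) and $H^1(B\GIR^b,\Omega_1^{\rm spin})$ a priori contribute extra factors of $2$ to $N_{\GIR}$, which would spoil the stated bound --- for connected, simply connected $H$ your argument would give $N_{\GIR}=1$, while the Arf layer alone could allow $N_{\GIR}=2$. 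The paper kills these layers by using that fermion parity $\mathbb{Z}_2^F$ sits inside the $\UU(1)$ factor (so the structure is effectively ${\rm spin}^c$), which forces the relevant spectral-sequence differentials to be nonzero. Without some version of that argument, your proof establishes the theorem only for bosonic systems, or a weaker fermionic bound with extra powers of $2$. Your sketch for the $1$-form symmetry $K$ is consistent with the paper's (the new contribution is the pairing of $\int B\cup c_1$ against a character of $K$, i.e.\ $H^3(B^2K,\mathbb{Z})\cong K$ after reduction, giving the factor $|K|$), but it inherits the same fermionic caveat.
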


\begin{thm}
In $d = 3$ space dimensions, assuming a compact emergent symmetry $\GIR$ satisfying the conditions of Theorem 2 and furthermore of the form $\GIR = H \times U(1)$, the maximum denominator $N_{\GIR}$ divides $|\pi_0(\GIR)| \cdot gcd(|\pi_1(\GIR)^{\rm tors}|,|\pi_0(\GIR)|)$, where $\pi_0(\GIR)$ is the set of connected components of $\GIR$ and $\pi_1(\GIR)^{\rm tors}$ is the torsion part of its fundamental group and gcd denotes the greatest common divisor. In particular, for connected $\GIR$, $N_{\GIR} = 1$, so a compact connected emergent symmetry is incompatible with any fractional filling. With a finite 1-form symmetry $K$, then $N_{\GIR}$ divides $|\pi_0(\GIR)| \cdot gcd(|\pi_1(\GIR)^{\rm tors}|,|\pi_0(\GIR)|) \cdot gcd(|K|,\pi_0(\GIR))$. If there is furthermore a finite 2-form symmetry $J$, then $N_{\GIR}$ divides $|\pi_0(\GIR)| \cdot gcd(|\pi_1(\GIR)^{\rm tors}|,|\pi_0(\GIR)|) \cdot gcd(|K|,|\pi_0(\GIR)|) \cdot |J|$.
\end{thm}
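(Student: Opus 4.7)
The strategy mirrors the proof of Theorem~1: combine the linearity Lemma~1 with a careful analysis of which classes in the relevant cobordism group $\Omega^{6}(\GIR)$ classifying $5$-dimensional SPTs can contribute non-trivially to $\alpha(\hQIR|\tau_1,\tau_2,\tau_3)$, evaluated on the canonical configuration $T^{5}=T^{3}\times T^{2}$ with three flat holonomies $\tau_i$ on cycles of $T^{3}$ and a $2\pi$ charge flux on $T^{2}$. First I would pick a maximal abelian Lie subgroup $\mathcal{A}\leq\GIR$ containing $\tau_1,\tau_2,\tau_3$ and the $\UU(1)$ generated by $\hQIR$; since $\GIR=H\times\UU(1)$ is a compact Lie group, $\mathcal{A}$ has the form $\mathcal{A}_0\times F$ with $\mathcal{A}_0\cong T^{r}$ the identity component and $|F|$ dividing $|\pi_0(\GIR)|$.

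Applying Lemma~1, raising $\tau_1$ to the $|\pi_0(\GIR)|$-th power places it inside $\mathcal{A}_0$, and linearity gives $|\pi_0(\GIR)|\cdot\alpha\equiv\alpha(\hQIR|\tau_1^{|\pi_0|},\tau_2,\tau_3)\pmod{2\pi}$; this accounts for the leading $|\pi_0(\GIR)|$ factor in the bound. The remaining task is to kill the right-hand side with a further factor of $\gcd(|\pi_1(\GIR)^{\rm tors}|,|\pi_0(\GIR)|)$. For this I would use a K\"unneth decomposition of $H^{6}(B\GIR,\mathbb{Z})$: writing $c=c_1(A_Q)\in H^{2}(B\UU(1),\mathbb{Z})$, every class takes the form $\omega_H\cup c^{k}$ with $\omega_H\in H^{6-2k}(BH,\mathbb{Z})$ and $k\in\{0,1,2,3\}$. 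On our configuration, any term with $k\geq 2$ vanishes because the charge flux $F_Q$ is supported on a single $T^{2}$, forcing $F_Q\wedge F_Q\equiv 0$; the $k=0$ term (pure $H$-SPT) evaluates against the classifying map of the $H$-bundle, whose image lies in $T^{3}_{123}$, and vanishes for degree reasons since $H^{5}(T^{3})=0$.

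The only surviving contributions come from the $k=1$ term $c\cup\omega_H^{(4)}$ with $\omega_H^{(4)}\in H^{4}(BH,\mathbb{Z})$. Here the sharp $\gcd$ improvement over $|\pi_1(\GIR)^{\rm tors}|$ arises because $\tau_1^{|\pi_0|}$ already lies in $\mathcal{A}_0$, so only two of the three cycles of $T^{3}_{123}$ carry genuinely discrete $\pi_0$ data; the relevant degree-$4$ characteristic class must therefore be built from at most two factors of the $H^{2}(B\pi_0(\GIR),\mathbb{Z})$-class together with one torsion class detected by $\pi_1(\GIR)^{\rm tors}$. That torsion class only enters through the image of the connecting homomorphism $\pi_1(\GIR)^{\rm tors}\to H^{2}(B\pi_0(\GIR),\mathbb{Z})$ coming from the component projection $\GIR\to\pi_0(\GIR)$, and that image has order exactly $\gcd(|\pi_1(\GIR)^{\rm tors}|,|\pi_0(\GIR)|)$. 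The main obstacle is to carry out this reduction rigorously through the Serre spectral sequence for the fibration $BH_0\to BH\to B\pi_0(H)$, tracking the pullback $\iota^{*}:H^{*}(B\GIR,\mathbb{Z})\to H^{*}(B\mathcal{A},\mathbb{Z})$ in the presence of the extra $\UU(1)$ factor; this is the step where all the refined cohomological bookkeeping happens.

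For the higher-form extensions the classifying space becomes $B\GIR\times B^{2}K\times B^{3}J$ and the same K\"unneth analysis is repeated. A $2$-cocycle for $K$ can pair with two flat translation holonomies and the charge flux on $T^{5}$, but by the same $\pi_0$-reduction applied to the $K$-cocycle this contribution is bounded by $\gcd(|K|,|\pi_0(\GIR)|)$. A $3$-cocycle for $J$ pairs directly with the $2$-form $F_Q$ on $T^{5}$ and requires no translation holonomy, so it contributes the full factor $|J|$ independently of the component structure. The three bounds multiply to give the stated denominator.
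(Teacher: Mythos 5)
Your overall strategy is the same as the paper's: treat the anomaly as a cobordism invariant evaluated on $T^3\times T^2$, strip off the $2\pi$ charge flux on the $T^2$ factor (your $k=1$ Künneth term is exactly the paper's dimensional reduction to a degree-$3$ invariant for the residual bundle on $T^3$), and then bound the surviving piece using the component structure of $\GIR$. However, there are two genuine gaps. First, you work only with ordinary integral cohomology $H^6(B\GIR,\mathbb{Z})$, i.e.\ the $\Omega_0^{\rm spin}=\mathbb{Z}$ layer of the classification. The relevant group here is spin cobordism, and the paper's Atiyah--Hirzebruch analysis of $\Omega^3(\GIR)$ has two additional layers, $H^1(B\GIR^b,\Omega_2^{\rm spin})$ and $H^2(B\GIR^b,\Omega_1^{\rm spin})$ with $\Omega_1^{\rm spin}=\Omega_2^{\rm spin}=\mathbb{Z}_2$, corresponding to invariants that pair a $\GIR$-cocycle with Arf-type invariants of submanifolds. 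These could a priori contribute extra factors of $2$ to $N_{\GIR}$ and must be explicitly disposed of; the paper does this by noting that the first is killed by a spectral-sequence differential (because fermion parity sits inside the $\UU(1)$ factor) and that the second only produces a spin Chern--Simons term that does not contribute to the torsion. Your purely bosonic Künneth argument is silent on both, so as written the claimed bound is not established.

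Second, the step that produces the $\gcd(|\pi_1(\GIR)^{\rm tors}|,|\pi_0(\GIR)|)$ factor — the actual quantitative content of the theorem — is not carried out. You invoke a ``connecting homomorphism $\pi_1(\GIR)^{\rm tors}\to H^2(B\pi_0(\GIR),\mathbb{Z})$,'' which is not the mechanism at work: in the paper the factor arises from the Serre spectral sequence for $B\GIR^{b,0}\to B\GIR^b\to B\pi_0(\GIR^b)$ applied to $H^4(B\GIR^b,\mathbb{Z})^{\rm tors}$, whose relevant off-diagonal term is (roughly) $H^1(B\pi_0,H^3(B\GIR^{b,0},\mathbb{Z})^{\rm tors})$ with $H^3(B\GIR^{b,0},\mathbb{Z})^{\rm tors}\cong\pi_1(\GIR)^{\rm tors}$; such a group is annihilated by both $|\pi_0(\GIR)|$ and $|\pi_1(\GIR)^{\rm tors}|$, hence by their gcd. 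Your attempt to pre-extract a factor of $|\pi_0(\GIR)|$ via Lemma~1 by raising $\tau_1$ to the $|\pi_0|$-th power and then arguing the remainder separately does not obviously recombine into the multiplicative bound stated, and you yourself flag the spectral-sequence bookkeeping as the unresolved ``main obstacle.'' The higher-form extensions ($\gcd(|K|,|\pi_0|)$ and $|J|$) are identified correctly in spirit, but they inherit the same deferred step.
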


We note that since $\Omega_{d+2}(\GIR)^{\rm tors}$ is finite, Theorem \ref{thm:mainthm} follows from Theorem 2.

\emph{Proof of Theorem 2:} As argued in Appendix \ref{appendix:topological_action}, the filling fraction associated to $\GIR$ may be computed from the emergent $\GIR$ 't Hooft anomaly as well as the data of how $U(1)$ and translation symmetries are realized in $\GIR$. In particular, the translation generators $\tau_1, \ldots, \tau_d$ and the $U(1)$ generator $\hat Q$ define a $\GIR$ gauge field on $T^{d+2} = T^d \times T^2$, where the $\tau_j$ define the fluxes through the $d$ coordinate 1-cycles of $T^d$ and $\hat Q$ defines a $2\pi$ flux of $T^2$. The emergent 't Hooft anomaly is associated with a cobordism invariant of $\GIR$ gauge fields on $d+2$ manifolds (the partition function of the $((d+1)+1)$D-dimensional SPT associated with the 't Hooft anomaly), and that cobordism invariant evaluated on this particular background equals $e^{2\pi i \nu}$.

Let us denote the group of cobordism invariants by $\Omega^{d+2}(\GIR)$. In the notation of \cite{Kapustin_1406}, it is $\Omega^{d+2}_{\rm spin}(B(\GIR/\mathbb{Z}_2^F),\xi),$ where $\xi$ is an orientable bundle whose 2nd Stiefel-Whitney class classifies the extension of $\GIR/\mathbb{Z}_2^F$ by fermion parity $\mathbb{Z}_2^F$. The group $\Omega^{d+2}(\GIR)$ sits in a (split) short exact sequence \cite{Freed_1604}
\[
\begin{tikzcd}
{\rm Hom}(\Omega_{d+3}(\GIR),\mathbb{Z}) \arrow{r}  &  \Omega^{d+2}(\GIR) \arrow{d} \\
 & {\rm Hom}(\Omega_{d+2}(\GIR)^{\rm tors},U(1))
\end{tikzcd}
\]
where $\Omega_{d+2}(\GIR)$ and $\Omega_{d+3}(\GIR)$ are the associated bordism groups of manifolds and $\Omega_{d+2}(\GIR)^{\rm tors}$ indicates the torsion subgroup of $\Omega_{d+2}(\GIR)$, elements $x$ for which there exists an $n$ such that $n\cdot x = 0$.

The kernel of this extension, ${\rm Hom}(\Omega_{d+3}(\GIR),\mathbb{Z})$, is characterized by generalized Chern-Simons invariants, described by Chern-Weil forms evaluated on a bounding $d+3$-manifold $M$ with $\partial M = T^{d+2}$. We will show that for $d > 1$ these invariants do not contribute to the filling fraction. More precisely, the map $\Omega^{d+2}(\GIR) \to U(1)$ given by evaluating the cobordism invariant on such a $T^{d+2}$ background as described above sends the kernel to zero. Therefore, the maximum denominator $N_{\GIR}$ divides $|{\rm Hom}(\Omega_{d+2}(\GIR)^{\rm tors},U(1))| = |\Omega_{d+2}(\GIR)^{\rm tors}|$, which is finite. This also gives an alternative proof of Theorem \ref{thm:mainthm}. In fact, $N_{\GIR}$ divides the largest order of elements in $\Omega_{d+2}(\GIR)^{\rm tors}$, which we will use below. This group is isomorphic to its cobordism partner $\Omega^{d+2}(\GIR)^{\rm tors}$ because the short exact sequence above splits.

Now we will show that for $d > 1$ the generalized Chern-Simons invariants do not contribute to the filling fraction. For even $d$, there are no generalized Chern-Simons invariants and we are done, so let us consider odd $d$ instead. We will use a flat $M$ so that there are no metric contributions to the Chern-Simons invariant. We will also use the fact that the remaining Chern-Simons invariants are defined for $\GIR$ bundles by taking trace in a unitary representation $R:\GIR \to U(n)$ and is therefore equal to a Chern-Simons invariant $U(n)$ bundle associated to it by the map $R$. Then, we use the fact that $\tau_1,\ldots, \tau_d$ and the $U(1)$ generator $\hat Q$ commute in $\GIR$, hence their images in $U(n)$ also commute and can be simultaneously diagonalized. This reduces the calculation to an Abelian Chern-Simons invariant for the subgroup $U(1)^n$ of diagonal matrices in $U(n)$. Taking the trace reduces this to a calculation for a $U(1)$ bundle.

We can extend our $U(1)$ bundle from $T^{d+2}= T^d \times T^2$ to $D^2 \times T^{d-1} \times T^2$, where $D^2$ is a disc whose boundary is the first coordinate cycle of $T^d$. The extended gauge field $A$ is defined so that it has curvature only along $D^2$ (integrating to ${\rm Tr}_R (\tau_1)$ as required by the extension) and along $T^2$ (where it has some Chern number). The Chern-Simons invariant is proportional to
\[\int_{D^2 \times T^{d-1} \times T^2} (dA)^{\frac{d+3}{2}}.\]
However, since the 2-form $dA$ has only nonzero components in 4 directions, $D^2$ and $T^2$, its power in the integrand above vanishes if $d+3>4$, ie. if $d > 1$ \footnote{We note that for $d = 1$, the Chern-Simons invariant is an integer multiple of ${\rm Tr}_R(\tau_1)$.}. This proves the claim. \qed

\emph{Proof of Theorem 3:} We can get a handle on the maximum denominator $N_{\GIR}$, which we have proven divides $|\Omega_{d+2}(\GIR)^{\rm tors}|$, using the Atiyah-Hirzebruch spectral sequence for computing the latter. This gives us an upper bound
\[N_{\GIR} \quad {\rm divides} \quad \prod_{j = 0}^{d+2} | H_j(B\GIR^b,\Omega_{d+2-j}^{\rm spin})^{\rm tors}|,\]
where $\GIR^b = \GIR/\mathbb{Z}_2^F$ and $\Omega_k^{\rm spin}$ is the usual spin bordism group, which in the low degrees of physical interest is
\[
\begin{gathered}
\Omega_0^{\rm spin} = \mathbb{Z} \\
\Omega_1^{\rm spin} = \mathbb{Z}_2 \\
\Omega_2^{\rm spin} = \mathbb{Z}_2 \\
\Omega_3^{\rm spin} = 0 \\
\Omega_4^{\rm spin} = \mathbb{Z} \\
\Omega_5^{\rm spin} = 0.
\end{gathered}
\]
This upper bound can be further improved by taking into account the known differentials of the dual spectral sequence for cobordism $\Omega^{d+2}(\GIR)$, which has isomorphic torsion, $\Omega^{d+2}(\GIR)^{\rm tors} = \Omega_{d+2}(\GIR)^{\rm tors}$. See \cite{thorngren2018anomalies} for a review.

Another thing we can do is consider the special case $\GIR = H \times U(1)$, where $U(1)$ and $H$ is compact. Because $\tau_1,\ldots,\tau_d$ and $\hat Q$ must commute, they are always contained in such a group. In this case, by taking considering $d+2$-manifolds of the form $M \times T^2$ with a arbitrary $\GIR$ bundle on $M$ and a $U(1)$ bundle with Chern number 1 on $T^2$, our $d+2$-dimensional cobordism invariant defines a $d$-dimensional cobordism invariant for $M$. By similar arguments as above, this gives us the further constraint that $N_{\GIR}$ divides the element of highest order in $\Omega_d(\GIR)^{\rm tors} = \Omega^d(\GIR)^{\rm tors}$.

For $d = 2$, using the Atiyah-Hirzebruch spectral sequence for $\Omega^2(\GIR)$, we find in the case $\GIR = H \times U(1)$, $N_{\GIR}$ divides
\[|\Omega_2^{\rm spin}| \cdot |H^1(B\GIR^b,\Omega_1^{\rm spin}) | \cdot | H^3(B\GIR^b,\Omega_0^{\rm spin})^{\rm tors} |.\]
One can show that since the fermion parity is a subgroup of the $U(1)$ factor, that the first two factors do not contribute to the spin bordism because the spectral sequence has a nonzero differential there. The third part is $|\pi_0(\GIR)| \cdot |\pi_1(\GIR)^{\rm tors}|$ torsion, so
\[N_{\GIR} \quad {\rm divides} \quad |\pi_0(\GIR)| \cdot |\pi_1(\GIR)^{\rm tors}|.\]
We note we can often get even better bounds if we better understand the cobordism invariants for $\GIR$. For example, if $\GIR = \mathbb{Z}_2^r \times U(1)$, one can show the maximum denominator is $N_{\GIR} = 2$, even though this group has $2^r$ components.

To include a finite 1-form symmetry $K$, there is one more term which contributes to the spectral sequence, $H^3(B^2K,\mathbb{Z}) = K$, which is $|K|$ torsion.\qed

\emph{Proof of Theorem 4:} For $d = 3$, $\GIR = H \times U(1)$, the contributions to the Atiyah-Hirzebruch spectral sequence for $\Omega^3(\GIR)$ are
\[|H^1(B\GIR^b,\Omega_2^{\rm spin}) | \cdot |H^2(B\GIR^b,\Omega_1^{\rm spin}) | \cdot | H^4(B\GIR^b,\Omega_0^{\rm spin})^{\rm tors} |.\]
As in $d = 2$, the first factor does not contribute. The second factor has one piece that can contribute, coming from the map $H^2(BU(1),\mathbb{Z}) \to H^2(B\GIR^b,\mathbb{Z}_2)$ but it turns out to give a spin Chern-Simons term for the $U(1)$ factor and we can disregard it, ie. it does not contribute to the torsion in $\Omega^3(\GIR)$. The third piece can be bounded using the Serre spectral sequence for the extension
\[\GIR^{b,0} \to \GIR^b \to \pi_0(\GIR^b) = \pi_0(\GIR),\]
from which we find it is $|\pi_0(\GIR)| \cdot gcd(|\pi_1(\GIR)^{\rm tors}|,|\pi_0(\GIR)|)$ torsion, so
\[N_{\GIR} \quad {\rm divides} \quad |\pi_0(\GIR)| \cdot gcd(|\pi_1(\GIR)^{\rm tors}|,|\pi_0(\GIR)|).\]
Note that when $\GIR$ is connected, we re-establish the result $N_{\GIR} = 1$: connected, compact $\GIR$ are incompatible with fractional filling in $d = 3$. 

When there is also a finite 1-form symmetry $K$, it contributes a term to $H^4(B\GIR^b,\Omega_0^{\rm spin})$ via $H^1(B\pi_0(\GIR^b),H^3(B^2K,\mathbb{Z}))$, which is $gcd(|\pi_0(\GIR^b)|,|K|)$ torsion.

When there is also a finite 2-form symmetry $J$, it contributes a term to $H^4(B\GIR^b,\Omega_0^{\rm spin})$ via $H^4(B^2J,\mathbb{Z}) = J$.
\qed

Finally we note that the bounds so derived hold for \emph{every subgroup} $\GIR$ of the full emergent symmetry, so long as it contains the translation generators $\tau_1,\ldots,\tau_d$ and the $U(1)$ particle-number symmetry. Thus, one does not need to know the entire emergent symmetry to obtain a bound. Further, by varying $\GIR$, one can sometimes find a better bound for the problem at hand.

\section{Asymptotic analysis for quantum oscillations}
\label{appendix:oscillations}
In this appendix, we will derive the asymptotic form for quantum oscillations stated in Section \ref{subsubsec:quantum_oscillations_3d}. Abstracting out from the specific details, the situation is that we have some functional $\mathcal{F}[f]$, where $f$ is a function of a single variable into $\mathbb{R}/\mathbb{Z}$ that can be expressed as
\begin{equation}
f(x) = t g(x),
\end{equation}
and we we wish to extract the asymptotic dependence of $\mathcal{F}[f]$ as $t \to \infty$ while keeping the function $g$ fixed.

First we imagine approximating $f$ by its values on the discrete points $x_1, \cdots, x_k$. Then we can expand $\mathcal{F}$ as a Fourier series
\begin{equation}
\mathcal{F}[f] = \sum_{\mathbf{n} \in \mathbb{Z}^k} a_{\mathbf{n}} e^{2\pi it ( n_1 g(x_1) + \cdots + n_k g(x_k))},
\end{equation}
where the sum is over integer vectors $\mathbf{n} \in \mathbb{Z}^k$.
We can rewrite this as
\begin{equation}
\mathcal{F}[f] = \mathcal{S}^{(0)} + \sum_{n \neq 0} \mathcal{S}_n^{(1)} + \sum_{n_1 \neq 0, n_2 \neq 0}^{\infty} \mathcal{S}_{n_1, n_2}^{(2)} + \cdots,
\end{equation}
where $\mathcal{S}^{(0)} = a_{\textbf{0}}$ and
\begin{align}
\mathcal{S}^{(1)}_n&= \sum_{j=1}^k a_{(j:n)} e^{2\pi itm g(x_j)} \\
\mathcal{S}^{(2)}_{n_1,n_2} &= \sum_{j_1 \neq j_2}  a_{(j_1: n_1, j_2: n_2)} e^{2\pi it (n_1 g(x_1) + it n_2 g(x_2))} \\
\mathcal{S}^{(3)}_{n_1, n_2, n_3} &= \cdots,
\end{align}
where $(j_1 : m_1, j_2 : m_2)$, for example, is the vector $\mathbf{n}$ obtained by setting $n_{j_1} = m_1$ and $n_{j_2} = m_2$ and $n_{j} = 0$ for $j \notin \{ m_1, m_2 \}$. By taking the limit as $k \to \infty$ and the $x_j$'s become dense, we find that $\mathcal{S}_\ell \to \mathcal{I}_{\ell}$, where

\begin{align}
\mathcal{I}^{(1)}_{n} &= \int dx \, a_n(x) e^{2\pi it n g(x)} \\
\mathcal{I}^{(2)}_{n_1, n_2} &=  \int dx_1 dx_2 a_{n_1, n_2}(x_1, x_2) e^{2\pi it(n_1 g(x_1) + n_2 g(x_2))}, \\
\mathcal{I}^{(3)}_{n_1, n_2, n_3} &= \cdots
\end{align}

Now we invoke the theory of stationary phase integrals which tells us that, for a function $h(\mathbf{x})$ of a $q$-dimensional variable $\mathbf{x}$, then as $t \to \infty$
\begin{equation}
\int d^q \mathbf{x} \varphi(\textbf{x}) e^{2\pi it h(\mathbf{x})} \sim \sum_{\textbf{x}_* \in \boldsymbol{\Sigma}} c_{\mathbf{x}_*} t^{-q/2} e^{2\pi it f(\mathbf{x}_*)},
\end{equation}
for some constants $c_{\mathbf{x}_*}$ and where the sum is over the set $\boldsymbol{\Sigma}$ of solutions to $\nabla f(\mathbf{x}) = 0$. Hence, we find that
\begin{align}
\mathcal{I}_n^{(1)} &\sim c_{n}^{(1)} t^{-1/2} \sum_{x_* \in \Sigma} e^{2\pi itn g(x_*)} \\
\mathcal{I}_{n_1,n_2}^{(2)} &\sim c_{n_1, n_2}^{(2)} t^{-1} \sum_{x_*, x_*' \in \Sigma} e^{2\pi it (n_1 g(x_*) + n_2 g(x_*'))}, \\
\mathcal{I}_{n_1,n_2,n_3}^{(3)} &\sim \cdots
\end{align}
where $\Sigma$ is the set of solutions to $g'(x) = 0$. The oscillatory part of each of these expressions is consistent with a quasiperiodic function of $t$ with base frequencies $\omega(x_*) = 2\pi g(x_*)$, $x_* \in \Sigma$.

\bibliography{ref-autobib,ref-manual}
\end{document}